\theoremstyle{plain}
\newtheorem{thm}{\protect\theoremname}
  \theoremstyle{definition}
  \newtheorem{defn}[thm]{\protect\definitionname}
  \theoremstyle{plain}
  \newtheorem{prop}[thm]{\protect\propositionname}
  \theoremstyle{plain}
  \newtheorem{lem}[thm]{\protect\lemmaname}
  \theoremstyle{plain}
  \newtheorem{cor}[thm]{\protect\corollaryname}
  \providecommand{\definitionname}{\indent Definition}
  \providecommand{\lemmaname}{\indent Lemma}
  \providecommand{\propositionname}{\indent Proposition}
\providecommand{\theoremname}{\indent Theorem}
\providecommand{\corollaryname}{\indent Corollary}
\newtheorem{EXAMPLE}[thm]{\indent Example}
\newenvironment{example}{\begin{EXAMPLE}\rm}{\rm\qed\end{EXAMPLE}}
\newcommand{\code}{{C}}
\newcommand{\cS}{{\mathcal{S}}}
\newcommand{\dist}{{\mathsf{d}}}
\newcommand{\Sn}{{\mathbb{S}_n}}
\newcommand{\Ss}{{\mathbb{S}}}
\newcommand{\blda}{{\mbox{\boldmath $a$}}}
\newcommand{\bldc}{{\mbox{\boldmath $c$}}}
\newcommand{\bldx}{{\mbox{\boldmath $x$}}}
\newcommand{\bldy}{{\mbox{\boldmath $y$}}}
\newcommand{\half}{{\textstyle\frac{1}{2}}}
\newcommand{\ignore}[1]{}
\newlength{\Algwidth}
\begin{document}
\title{Error-Correction in Flash Memories via Codes in the Ulam Metric}
\author{Farzad~Farnoud~(Hassanzadeh), Vitaly~Skachek, and~Olgica~Milenkovic,\\
\thanks{F. Farnoud and O. Milenkovic are with the Department of Electrical and Computer Engineering, University of Illinois at Urbana-Champaign, Urbana, IL 61801, USA. V. Skachek was with the Coordinated Science Laboratory, University of Illinois at Urbana-Champaign, Urbana, IL 61801, USA. He is now with the Institute of Computer Science, Faculty of Mathematics and Computer Science, University of Tartu, J. Liivi 2-216, Tartu 50409, Estonia.}
\thanks{This work was supported by the NSF STC-CSoI 2011 CCF 0939370, the NSF CCF 0809895, and the AFRLDL-EBS AFOSR Complex Networks grants. This work was presented in part at the 2012 Information Theory and Applications Workshop, San Diego, CA, and at the 2012 International Symposium on Information Theory, Boston, MA.}}
\maketitle

\newcommand{\tpdist}{{\mathsf{d}_T}} % transposition distance
\newcommand{\tldist}{{\mathsf{d}_\circ}}% Ulam distance
\newcommand{\hdist}{{\mathsf{d}_H}}% Hamming distance
\newcommand{\kdist}{{\mathsf{d}_{\tau}}}% Kendall distance
\global\long\def\drcs#1{\vec{\mathsf{d}}_{\circ}\left(#1\right)} % right-translocation distance
\newcommand{\rcs}{R_{t}} % number of right-translocation needed to go from one permutation to another
\global\long\def\mod#1{\left(\mbox{mod }#1\right)} % modularity equivalence
\newcommand{\mm}[1]{P_{#1}} % denotes ``a partition consisting of elements that are modulo #1''
\global\long\def\what#1#2#3{#1_{\left(  #2| #3\right)}} % used to denote the inferred value of a projection ...
\newcommand{\seg}[3]{R_{#1} \left(#2,P_{#3}\right)}
\newcommand{\set}[1]{{\left\{#1\right\}}}
\newcommand{\cs}{{\phi}} % denotes translocation
\newcommand{\csv}{{\varphi}} % denotes translocation but the parameter have different meaning
\newcommand{\prodd}{\bigsqcapdot}
\newcommand{\Lp}[1]{{\left(#1\right)}}
\newcommand{\sd}[2]{{D\Lp{#1,#2}}}

\global\long\def\Dld#1{\rho\left(#1\right)}

\global\long\def\slcd#1{\underline{l}\left(#1\right)}

\global\long\def\sgcd#1{\overline{l}\left(#1\right)}

\global\long\def\OSS#1{L\left(#1\right)}

%%%%%%%%%%%%%%%%%
%======================================
%%%%%%%%%%%%%%%%%

\begin{abstract}
We consider rank modulation codes for flash memories that allow for handling arbitrary charge-drop errors. Unlike classical rank modulation codes used for correcting errors that manifest themselves as swaps of two adjacently ranked elements, the proposed \emph{translocation rank codes} account for more general forms of errors that arise in storage systems. Translocations represent a natural extension of the notion of adjacent transpositions and as such may be analyzed using related concepts in combinatorics and rank modulation coding. Our results include derivation of the asymptotic capacity of translocation rank codes, construction techniques for asymptotically good codes, as well as simple decoding methods for one class of constructed codes. As part of our exposition, we also highlight the close connections between the new code family and permutations with short common subsequences, deletion and insertion error-correcting codes for permutations, and permutation codes in the Hamming distance.
\end{abstract}

%\begin{IEEEkeywords}
%Data storage systems, flash memory, Ulam distance, Hamming distance, error-correction codes, translocation errors.
%\end{IEEEkeywords}

\section{Introduction}

Permutation codes and permutation arrays are collections of suitably chosen codewords from the symmetric group, used in applications as varied as single user communication over Gaussian channels~\cite{1445610,32151}, reduction of impulsive noise over power-lines~\cite{Blake19791,1302307}, and coding for storage~\cite{5205972}. Many instances of permutation-based codes were studied in the coding theory literature, with special emphasis on permutation arrays under the Hamming distance and rank modulation codes under the Kendall $\tau$ distance~\cite{kendall1970rankcorrelation,diaconis1977spearman,chadwick69}, \cite[Chapter 6B]{diaconis1988group}. The distances used for code construction in storage devices have mostly focused around two types of combinatorial measures, counting functions of adjacent transpositions and measures obtained via embeddings into the Hamming space~\cite{5205972, Blake19791}. This is due to the fact that such distance measures capture the displacement of symbols in retrieved messages that arise in modern nonvolatile storage systems.

% !!! permutation codes don't have applications in rank aggregation

One of the most prominent emerging applications of permutation codes in storage is \emph{rank modulation}. Rank modulation is an encoding scheme for flash memories that may improve the lifespan, storage efficiency and reliability of future generations of these storage devices~\cite{5452201,5205972,schartz2010correctinglimitedmagnitude,5700265}. The idea behind the modulation scheme is that information should be stored in the form of rankings of the cells' charges, rather than in terms of the absolute values of the charges. This simple conceptual coding framework may eliminate the problem of cell block erasures as well as potential cell over-injection issues~\cite{5452201,bruck2010partial-rank-modulation}. In their original formulation, rank-modulation codes represent a family of codes capable of handling errors of the form of \emph{adjacent transpositions}. Such transposition errors represent the most likely errors in a system where the cells are expected to have nearly-uniform leakage rates. But leakage rates depend on the charge of the cells, the position of the cells and on a number of external factors, the influence of which may not be adequately captured by adjacent transposition errors. For example, if a cell for a variety of reasons has a higher leakage rate than other cells, given sufficient time, the charge of this cell may drop below the charge of a large number of other cells. Furthermore, if the number of possible charge levels is large\footnote{There are two important motivations for increasing the number of charge levels. First, larger number of charge levels may enable storing more data, and second, when there are a large number of charge levels available, encoding methods such as push-to-the-top~\cite{engad2011constantWeightGrayCodes} can be used to decrease the number of times that the memory needs to be erased.}, and thus the difference between charge levels is small, a moderate charge drop may 
result in a significant drop in the cell's rank. One may argue that these processes may be modeled as a sequence of adjacent transposition errors. However, as this type of error is the result of a single error event, for the purpose of error correction it should be modeled as a \emph{single error}. This is reminiscent of the scenario where one models a sequence of individual symbol errors as a single burst error~\cite{lin-costello}.
%sudden and steep drops in the charge of one group of cells cannot be modeled in this framework.

In what follows, we present a novel approach to rank modulation coding which allows for correcting a more varied class of errors when compared to classical schemes. The focal point of the study is the notion of a translocation, a concept that generalizes an adjacent transposition in a permutation. Roughly speaking, a translocation\footnote{Note that our definition of the term translocation differs from the definition commonly used in biology. See, e.g., \cite{zhu2006translocation}.} moves the ranking of one particular element in the permutation below the rankings of a certain number of closest-ranked elements. As such, translocations are suitable for modeling errors that arise in flash memory systems, where high leakage levels for subsets of cells are expected or possible. Examples of such error events include errors due to radiation and breakdown of tunneling oxide, the latter being a prominent event in conventional poly-Si floating gate memories~\cite{CCL1,CCL2}.

A translocation may be viewed as an extension of  an adjacent transposition. In addition, translocations correspond to pairs of deletions and insertions of elements in the permutation. As a consequence, the study of translocations is closely related to the longest common subsequence problem and permutation coding under the Levenshtein and Hamming metrics~\cite{chu,Klove,levenshtein_perfect}.

Rank modulation is by now well understood from the perspective of code construction. The capacity of rank modulation codes was derived in~\cite{5205972,mazumdar2011examples,5485013}, while some practical code constructions were proposed in~\cite{5205972,5452201}, and further generalized in~\cite{6034261},~\cite{engad2011constantWeightGrayCodes} and~\cite{mazumdar2011examples}. Here, we complement the described work in terms of deriving upper and lower bounds on the capacity of translocation rank codes, and in terms of presenting constructive, asymptotically good coding schemes. Our constructions are based on a novel application of permutation interleaving, and are of independent interest in combinatorics and algebra. For the use of specialized forms of permutation interleaving in other areas of coding theory, the interested reader is referred to~\cite{klove2010PermutationArrays,schartz2010correctinglimitedmagnitude}. Furthermore, we propose decoding algorithms for translocation codes based on decoders for codes in the Hamming metric~\cite{4401563,6033769}. Finally, we also highlight the close relationships between permutation codes under a number of metrics.

The paper is organized as follows. In Section~\ref{sec:basic-defs} we provide the motivation for studying translocations as well as basic definitions used in our analysis. The properties of permutations under translocations are studied in the same section while bounds on the size of the codes are presented in Section~\ref{sec:bounds}. Code constructions are presented in Sections~\ref{sec:one-error} and Section~\ref{sec:t-errors}, while concluding remarks are given in Section~\ref{sec:conclusion}.

%%%%%%%%%%%%%%%%%
%======================================
%%%%%%%%%%%%%%%%%

\section{Basic Definitions}\label{sec:basic-defs}

Throughout the paper, we use the following notation and terminology. The symbol $[n]$ denotes the set $\{ 1, 2, \cdots, n \}$. A \emph{permutation} denotes a bijection $\sigma : [n] \rightarrow [n]$, that is, for any distinct $i, j \in [n]$, we have $\sigma(i) \neq \sigma(j)$. We let $\Sn$ stand for the set of all permutations of $[n]$, i.e., the symmetric group of order $n!$. For any $\sigma \in \Sn$, we write $\sigma = (\sigma(1), \sigma(2), \cdots, \sigma(n))$, where $\sigma(i)$ is the image of $i \in [n]$ under $\sigma$. 
%Alternatively, we may write $ \sigma $ as $[\sigma(1), \sigma(2), \cdots, \sigma(n)]$ or as $[\sigma(1) \sigma(2) \cdots \sigma(n)]$ if doing so does not cause ambiguity. 
The identity permutation $(1, 2, \cdots, n)$ is denoted by $e$, while $\sigma^{-1}$ stands for the inverse of the permutation $\sigma$.  The product $\sigma\pi$ of two permutations $\sigma, \pi\in\Sn$ is defined so that, for each $i\in[n]$, we have $(\sigma\pi)(i) = \sigma(\pi(i))$, i.e., permutations act on the left.

For some $\sigma \in\Sn$ and $P\subseteq[n]$, the \emph{projection $\sigma_{P}$ of $\sigma$ onto $P$} is obtained from $\sigma$ by only keeping elements of $P$ and removing all other elements. For example, for $\sigma=(5,4,3,2,1)$ and $P=\{ 2,4,5\} $, we have $\sigma_{P}=(5,4,2)$. Note that $\sigma_{P}$ has length $|P|$. Next, let $\mathbb{S}(P)$ stand for the set of all permutations of elements of $P$. The identity element of $\mathbb{S}(P)$ is $e_P$, obtained from $(1,2,\cdots,n)$ by removing elements that are not in $P$.

Permutations are denoted by Greek lowercase letters, while integers and integer vectors are denoted by Latin lower case symbols. 

A \emph{transposition} $\tau(i,j)$, for distinct $i,j\in[n]$, is a permutation obtained from the identity by swapping the positions of $i$ and $j$. Namely,
\[\tau(i,j) = (1,\cdots, i-1,j,i+1,\cdots, j-1,i,j+1,\cdots,n).\] If $|i-j|=1 $, then $\tau(i,j) $ is called an \emph{adjacent transposition}.

For distinct $i,j\in[n]$, a \emph{translocation} $\cs{(i,j)}$ is a permutation obtained from the identity by moving $i $ to the position of $j$ and shifting elements between $i $ and $j $, including $ j $, by one. If $i < j$, we have \[\cs{(i,j)}=(1, \cdots, i-1, i+1, i+2, \cdots, j, i, j+1, \cdots, n)\] and if $i > j$, we have \[ \cs{(i,j)}= (1, \cdots, j-1, i, j, j+1, \cdots, i-1, i+1, \cdots, n) \; . \] For $i< j$, the permutation $\cs{(i,j)}$ is called a \emph{right-translocation} and the permutation $\cs{(j,i)}$ is called a \emph{left-translocation}. The length of a translocation $\cs{(i,j)}$ equals $|j-i|$, that is, the number of elements between $i $ and $j $, including $j $. Note that a translocation of length $k$ can be modeled by $k$ adjacent transpositions.

If the set of elements under consideration is a subset $P$ of $[n]$, for distinct $i,j\in P$, a translocation $\phi(i,j)$ over $P$ is obtained from $e_P$ by moving $i$ to the position of $j$, and shifting elements between $i $ and $j $, including $j $, by one. Right- and left-translocations over $P$ are defined similarly.

\begin{example} %Assuming $i\le j$, for $\sigma=(\sigma(1),\cdots, \sigma(n)),$ we have \[\sigma\tau = (\sigma(1), \cdots, \sigma(i-1), \sigma(j), \sigma(i+1), \cdots, \sigma(j-1), \sigma(i), \sigma(j+1), \cdots, \sigma(n)).\] and \[\sigma \cs{(i,j)}= (\sigma(1), \sigma(2), \cdots, \sigma(i-1), \sigma(i+1), \sigma(i+2), \cdots, \sigma(j), \sigma(i), \sigma(j+1), \cdots, \sigma(n)) \; .  \] \[\sigma\cs{(j,i)}  = (\sigma(1), \sigma(2), \cdots, \sigma(i-1), \sigma(j), \sigma(i), \sigma(i+1), \cdots, \sigma(j-1), \sigma(j+1), \cdots, \sigma(n)) \; . \] 
Let $\sigma = (1,3,5,7,2,4,6,8)$. We have
\begin{align*}
\sigma\cs{(3,6)}&=( 1,3,7,2,4,5,6,8),\\
\sigma\cs{(5,2)}&=( 1,2,3,5,7,4,6,8),\\
\sigma\tau(3,6)&=(1,3,4,7,2,5,6,8).
\end{align*}
Furthermore, let $P=\{2,3,5,8\}$ and $\pi=(5,8,3,2)\in\mathbb S(P)$. The translocation $ \phi(8,2)$ over $P$ equals $(8,2,3,5)$ and we have $\pi\phi(8,2) = (2,5,8,3)$. Notice that in this case, as for the case of standard permutations, the parameters in $\phi(\cdot,\cdot)$ refer to the \emph{elements} in the corresponding identity permutation, rather than positions.
\end{example}

%Simply put, the product of the permutation $\sigma$ and a translocation $\cs{(i,j)}$ is the permutation obtained by moving $\sigma(i)$ to the $j$-th position and shifting all the elements between the $i$-th and $j$-th positions. 

Observe that the inverse of the left-translocation $\cs{(i,j)}$ is the right-translocation $\cs{(j,i)}$, and vice versa.

\begin{figure*}
\begin{center}
\includegraphics[width=.55\textwidth]{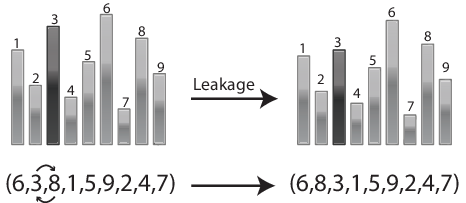}
\caption{Rank modulation codes and adjacent transposition errors.} \label{fig:adjacent-errors}
\end{center}
\end{figure*}

\begin{figure*}
\begin{center}
\includegraphics[width=.55\textwidth]{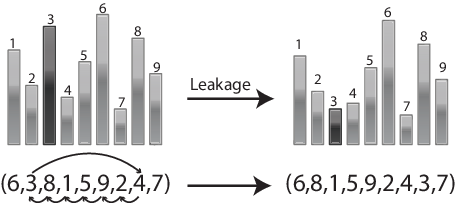}
\caption{Rank modulation codes and translocation errors caused by ``large'' drops of charge levels.} \label{fig:translocation-errors}
\end{center}
\end{figure*}

Our interest in translocations in permutations is motivated by rank modulation coding, as illustrated by the examples depicted in Figures~\ref{fig:adjacent-errors} and~\ref{fig:translocation-errors}. In classical multi-level flash memories, each cell used for storing information is subjected to errors. As a result, classical error control schemes of non-zero rate cannot be efficiently used in such systems. One solution to the problem is to encode information in terms of rankings~\cite{kendall1970rankcorrelation}, rather than absolute values of the information sequences. Consequently, data is represented by permutations and errors manifest themselves via reordering of the ranked elements. The simplest model assumes that only adjacently ranked elements may be exchanged.

This model has the drawback that it does not account for more general changes in ranks. With respect to this observation, consider the charge-drop model in Figure~\ref{fig:translocation-errors}. Here, cell number 3, ranked second, experienced a leakage rate sufficiently high to move the cell's ranking to the eighths position. This error is represented by the translocation $\phi(2,8)$. The translocation $\phi(2,8)$ corresponds to six adjacent transpositions. Nevertheless, as already argued, a translocation should be counted as a single error, and not a sequence of adjacent transposition errors.

The translocation error model may appear to be too broad to describe the phenomena arising in flash memories, as errors corresponding to translocations of small 
length arise more frequently than errors corresponding to translocations of long length. The idea of bounded length translocations (bounded burst errors) 
will be addressed in a companion paper\footnote{Note that a bounded length translocation in a permutation $\pi$ is closely related to a bounded $L_1$-metric error in $\pi^{-1}$, studied in~\cite{klove2010PermutationArrays}.}. 
We also remark that translocation errors of arbitrary length accurately model \emph{any} error that affects a single cell, and are hence suitable for modeling arbitrary charge drops of cells independently of drops of other cells, as well as read disturb and write disturb errors~\cite{Grupp2009Characterizing}. This makes them a good candidate for studying new error-control schemes in flash memories.

Next, we formalize the notion of a distance capturing translocation errors.

\begin{defn} Let $\pi, \sigma \in \Sn$. The distance $ \tldist (\pi,\sigma) $ between $\pi$ and $\sigma$ is 
%a function \[\tldist (\pi,\sigma) \; : \; \Sn \times \Sn \to \mathbb{Z}_{0}^{+} \; , \] 
defined as the minimum number of translocations needed to transform $\pi$ to $\sigma$, i.e., $\tldist(\pi,\sigma)$ equals the smallest number $m$ such that there exist a sequence of translocations $\phi_1, \phi_2, ..., \phi_m$ for which $\sigma =\pi \phi_1\phi_2 \cdots \phi_m$. 
\end{defn} 

Observe that  $\tldist\left(\cdot,\cdot\right)$ is non-negative and symmetric. It also satisfies the triangle inequality, namely, for any $\pi,\sigma$ and $\omega$ in $\Sn$, one has \[\tldist\left( \pi, \sigma \right) \leq \tldist\left( \pi, \omega \right) + \tldist\left( \omega, \sigma \right) \; . \] Therefore, it is indeed a distance metric over the space $\Sn$. 

For $\pi,\sigma\in\Sn$, the distance $\tldist(\pi,\sigma) $ is closely related to the length of the longest common subsequence of $ \pi $ and $ \sigma $, denoted by $l(\pi,\sigma)$. In fact, as shown in Prop.~\ref{prop_ulamd} below, $\tldist(\pi,\sigma) $ equals the \emph{Ulam distance}~\cite{deza1998metrics} between $\pi$ and $\sigma $, where the Ulam distance is defined as $n-l(\pi,\sigma)$. Although the Ulam distance has received some attention in the computer science community, to the best of the authors' knowledge, codes in the Ulam distance were not reported in the literature, with the notable exception of the single-error correction method by Levenshtein~\cite{levenshtein_perfect} and the asymptotically zero-rate codes presented in \cite{beame2009longest} by Beame et al.

% The Ulam distance $\tldist $, as will be shown below, depends on the length of the longest common subsequence of two permutations. It is interesting to point out that it actually equals the so called \emph{Ulam's distance}~\cite{deza1998metrics}. Although Ulam's distance has received some limited attention in the computer science community, to the bestof the authors' knowledge, codes in the Ulam distance were not reported in the literature.

%\begin{defn} A subsequence of length $m$ of $\sigma=(\sigma(1),\ldots,\sigma(n))$ is a sequence of the form $\sigma(i_1),\ldots,\sigma(i_m),$ with $i_1 < i_2 < \ldots i_m$. Let $\sigma_{1},\sigma_{2} \in \Sn$. The \emph{longest common subsequence} of $\sigma_{1}$ and $\sigma_{2}$ is a subsequence of both $\sigma_1$ and $\sigma_2$ of longest length. We denote this sequence by $L\left(\sigma_{1},\sigma_{2}\right)$, and its length by $l\left(\sigma_{1},\sigma_{2}\right)$. \end{defn}

%\begin{defn} Let $\sigma \in \Sn$ and let $l(\sigma) := l(\sigma,e)$. The function $l(\sigma)$ is termed the length of the longest increasing subsequence of $\sigma$. \end{defn}

We start our subsequent discussion with the definition of the notion of invariance.
%\begin{defn} 
A metric $\dist$ over $\Sn$ is right-invariant if, for all $\pi,\sigma,\omega\in\Sn$, we have $\dist(\pi,\sigma) = \dist(\pi\omega,\sigma\omega)$. Similarly, $\dist$ is left-invariant if  $\dist(\pi,\sigma) = \dist(\omega\pi,\omega\sigma)$. 
%\end{defn}
Intuitively, a right-invariant metric is invariant with respect to \emph{reordering} of elements and a left-invariant metric is invariant with respect to \emph{relabeling} of elements. %If $\dist(\pi,\sigma)$ is a right-invariant metric, then it is applied to $\pi(i)$ and $\sigma(i)$ and if it is a left-invariant metric, it is applied to $\pi^{-1}(i)$ and $\sigma^{-1}(i)$. 

The distance $\tldist$ is a left-invariant metric. To prove this simple observation, consider three arbitrary permutations $\pi,\sigma,\omega\in \Sn$ with $\tldist(\pi,\sigma)=m$. Then, there exists a sequence $\cs_1,\cs_2,\cdots,\cs_m$ of translocations such that $\sigma=\pi\cs_1\cs_2\cdots\cs_m$. Multiplying both sides of the previous equality by $\omega$ on the left yields $\omega\sigma=\omega\pi\cs_1\cs_2\cdots\cs_m$. This implies that $\tldist(\omega\pi,\omega\sigma)\le m=\tldist(\pi,\sigma)$. Conversely, we may repeat the same argument using $\omega\pi,\omega\sigma$ and $\omega^{-1}$ instead of $\pi,\sigma$, and $\omega$, to obtain $\tldist(\pi,\sigma)\le \tldist(\omega\pi,\omega\sigma)$. This proves the desired invariance property.

The length of the longest common subsequence of two permutations is also left-invariant. To prove this claim, let us consider again three arbitrary permutations $\pi,\sigma,\omega\in \Sn$ with $l(\pi,\sigma)=m$. Then there exists a longest common subsequence $i_1,i_2,\cdots, i_m$ of  $\pi$ and $\sigma$. Here, as anywhere else in the paper, we assume that one may choose, according to some arbitrary but fixed rule, \emph{one} longest common subsequence if the longest common sequence is not unique. It follows that $\omega(i_1),\omega(i_2),\cdots, \omega(i_m)$ is a subsequence of both $\omega\pi$ and $\omega\sigma$, and thus $l(\omega\pi,\omega\sigma)\ge l(\pi,\sigma)$. On the other hand, by considering the permutations $\omega\pi,\omega\sigma$, and $\omega^{-1}$ instead of $\pi,\sigma$, and $\omega$, it can also be shown that $l(\pi,\sigma)\ge l(\omega\pi,\omega\sigma)$. This proves that $l$ is left-invariant.

We next show that the translocation distance $\tldist\left(\pi,\sigma\right)$ equals the Ulam distance. More details about the Ulam distance and the longest common subsequence of permutations may be found in~\cite{diaconis1988group} and \cite{deza2009encyclopedia}.

\begin{prop}\label{prop_ulamd} For $\pi,\sigma\in\Sn$, the distance $\tldist\left(\pi,\sigma\right)$ equals $n-l\left(\pi,\sigma\right)$, i.e., the distance used for assessing the effect of translocations on permutation codes equals the Ulam distance between $\pi$ and $\sigma$. 
\end{prop}

\begin{proof}
By the left-invariance of $\tldist$ and $l$, we may assume that one of the permutations is the identity permutation $e$ since otherwise, instead of $\tldist\left(\pi,\sigma\right)=n-l\left(\pi,\sigma\right)$, we can show  that $\tldist\left(\sigma^{-1}\pi,e\right)=n-l\left(\sigma^{-1}\pi\right)$. It thus suffices to prove that $\tldist(\sigma,e)= n - l(\sigma)$, where $l(\sigma)=l(\sigma,e)$ is the length of the longest increasing subsequence of $\sigma $. 

Let $S_\ell$ denote the set of elements in the longest increasing subsequence of the permutation $\sigma$. Clearly, it is possible to transform $\sigma$ to $e$ with at most $n-l(\sigma)$ translocations. This can be achieved by applying translocations that each moves one element from the set $[n]\backslash S_\ell$ to its position in the identity permutation $e$. Hence, $\tldist(\sigma,e) \le n - l(\sigma)$. 

Next, we show that $\tldist(\sigma,e)\ge n - l(\sigma)$. We start with $\sigma$ and transform it to $e$ by applying a sequence of translocations. Every translocation increases the length of the longest increasing subsequence by at most one. Hence, we need at least $n-l(\sigma)$ translocations to transform $\sigma$ into $e$ and thus, $\tldist(\sigma,e) \ge n-l(\sigma)$. 
\end{proof}

Henceforth, we shall refer to $\tldist$ as the Ulam distance. The Ulam distance and other notions introduced in this section easily extend to permutations over a set $P\subseteq[n]$.

Note that a translocation may correspond to either a left- or a right-translocation. As seen from the example in Figure~\ref{fig:translocation-errors}, right-translocations correspond to general cell leakage models. On the other hand, left-translocations assume that the charge of a cell is increased above the level of other cells. We therefore also introduce the notion of the \emph{right-translocation distance}. As will be seen from our subsequent discussion, the Ulam distance is much easier to analyze than the right-translocation distance, and represents a natural lower bound for this distance. 

%Since the Ulam distance between two permutations is a function of the length of their longest common subsequence, there exists an interesting connection between the Ulam distance and Levenshtein's edit distance for permutations~\cite{levenshtein_perfect}. The latter distance is used for measuring the influence of deletions and insertions on the ``similarity'' of two permutations. Formally, it may be defined as follows.

The Ulam distance is closely related to Levenshtein's insertion/deletion distance, defined as the number of deletions and insertions required to transform one sequence to another, and denoted by $\Dld{\cdot,\cdot}$. Levenshtein~\cite{levenshtein_perfect} showed that, for sequences of length $n$, \( \Dld{u,v}=2\left(n-l(u,v)\right).\) This equality also holds for permutations and thus \[\Dld{\pi,\sigma}=2\tldist (\pi,\sigma)\] for $\pi,\sigma\in \Sn.$ This result may be also deduced directly, by observing that a translocation consists of a deletion and an insertion. %$\phi(i,j)$ in a permutation $\sigma$ deletes the element $\sigma_i$ and inserts the same element in location $j$.

It is also of interest to see how the Ulam distance compares to the Kendall $\tau$ distance used in classical rank modulation coding. The \emph{Kendall $\tau$ distance} $\kdist (\pi, \sigma)$ between $\pi \in \Sn$ and $\sigma \in \Sn$ is defined as the minimum number of adjacent transpositions required to change $\pi$ into $\sigma$. A distance measure related to the Kendall $\tau$ is the transposition distance, also known as the Cayley distance. The transposition distance between two permutations $\pi$ and $\sigma$ of $\Sn$ is denoted by $\tpdist  (\pi,\sigma)$, and equals the smallest number of (not necessarily adjacent) transpositions needed to transform $\pi$ into $\sigma$. The transposition distance $\tpdist  (\pi,\sigma)$, as shown by Cayley~\cite{Cayley1849}, equals $n$ minus the number of cycles in the permutation $\sigma^{-1}\pi$.

Since a translocation of length $\ell$ can be represented as $\ell$ adjacent transpositions, and since an adjacent transposition is a translocation, it is easy to see that \[\frac{1}{n-1}\, \kdist (\pi, \sigma)  \leq \tldist (\pi,\sigma)  \leq \kdist (\pi, \sigma).\] Both the upper bound and the lower bound are tight: the upper bound is achieved for $\sigma$ obtained from $\pi$ via a single adjacent transposition, while the lower bound is achieved for, say, $\pi=e$ and $\sigma=(2,3,\ldots,n,1)$. It is also straightforward to show that the diameter of $\Sn$ with respect to the Ulam distance equals $n-1$. Observe that the above inequalities imply that the Ulam distance is not within a constant factor from the Kendall $\tau$ distance, so that code constructions and bounds specifically derived for the latter distance measure are not tight and sufficiently efficient with respect to the Ulam distance. 
 
A similar pair of bounds may be shown to hold for the Ulam distance and the Hamming distance between two permutations. The Hamming distance between  permutations $\pi$ and $\sigma$, denoted by $\hdist(\pi,\sigma)$, is defined as the number of positions $i$ for which $\pi(i)$ and $\sigma(i)$ differ.

Let $F(\pi,\sigma)=\{i\in[n]:\pi(i)=\sigma(i)\}$. The subsequence of $\pi$ consisting of elements $\pi(i),i\in F(\pi,\sigma)$, is also a subsequence of $\sigma$ and thus $\tldist\Lp{\pi,\sigma}=n-l(\pi,\sigma)\le n-|F(\pi,\sigma)|=\hdist\Lp{\pi,\sigma}$. Furthermore, since for any two permutations $\pi,\sigma\in \Sn$ one has $\hdist(\pi,\sigma)\le n$, it follows that $\hdist (\pi,\sigma)\le n \tldist(\pi,\sigma)$. Thus, \begin{equation} \frac1n \hdist(\pi,\sigma)\le \tldist\left(\pi,\sigma\right)\le \hdist\left(\pi,\sigma\right). \label{eq:transloc-hamming} \end{equation} These inequalities are sharp. For the upper-bound, consider $\pi=(1,2,\cdots,n)$ and $\sigma=(n,\cdots,2,1)$, with $n$ odd. For the lower-bound, let $\pi=(1,2,\cdots,n)$ and $\sigma=(2,3,\cdots, n, 1)$ so that $\hdist (\pi,\sigma) = n$ and $\tldist (\pi,\sigma)=1$. 

Next, we consider the transposition distance. Note that each transposition may be viewed as two translocations, implying that $\tldist(\pi,\sigma)\le 2\tpdist  (\pi,\sigma)$. It is also immediate that $\frac1{n-1}\tpdist  (\pi,\sigma)\le\tldist(\pi,\sigma)$. Hence, we have \[\frac1{n-1}\tpdist  (\pi,\sigma)\le\tldist(\pi,\sigma)\le2\tpdist  (\pi,\sigma).\]

The relationship between the Hamming distance and the transposition distance can be explained as follows. When transforming $\pi$ to $\sigma$ using transpositions, each transposition decreases the Hamming distance between the two permutations by at most two. Hence, $\tpdist  (\pi,\sigma) \ge \hdist(\pi,\sigma)/2$. Sorting a permutation of length $d$ requires at most $d$ transpositions. Thus, $\tpdist  (\pi,\sigma) \le \hdist(\pi,\sigma)$. These inequalities result in \begin{equation}\frac12 \hdist\left(\pi,\sigma\right)\le \tpdist\left(\pi,\sigma\right)\le \hdist\left(\pi,\sigma\right).\label{eq:transpos-hamming}\end{equation}
If $\pi\neq \sigma$, then $\tpdist(\pi,\sigma)\le \hdist(\pi,\sigma)-1$.

There exist many embedding methods for permutations, allowing one set of permutations with desirable properties according to a given distance to be mapped into another set of permutations with good properties in another metric space. In subsequent sections, we exhibit a method for interleaving permutations with good Hamming distance so as to obtain permutations with large minimum Ulam distance.

%%%%%%%%%%%%%%%% SECTION
\subsection{Right-translocation Distance}

We describe next how to specialize the Ulam distance for the case that only right-translocations are allowed as error events.
\begin{defn}
Let $\pi, \sigma \in \Sn$ and denote by $\rcs\Lp{\pi,\sigma}$ the minimum number of right-translocations required to transform $\pi$ into $\sigma$. 
For two permutations $\pi_1, \pi_2 \in \Sn$, the right-translocation distance $\drcs{\pi_{1},\pi_{2}}$ is defined as 
\[\drcs{\pi_{1},\pi_{2}} = 2\min_{\sigma}  \max \left\{ \rcs\Lp{\pi_{1},\sigma},\rcs\Lp{\pi_{2},\sigma}\right\}  \; .\]
\end{defn}

We demonstrate next that $ \vec{\mathsf{d}}_{\circ} $ is in fact a metric by proving that it satisfies the triangle inequality; the other metric properties may be readily verified using the definition of the distance.

Consider three permutations, $\pi_{1}$, $\pi_{2}$, and $\pi_{3}$,
and let 
\begin{align*}
\sigma_{12} & =\arg\min_{\sigma}\max\left\{ \rcs\left(\pi_{1},\sigma\right),\rcs\left(\pi_{2},\sigma\right)\right\} \\
\sigma_{23} & =\arg\min_{\sigma}\max\left\{ \rcs\left(\pi_{2},\sigma\right),\rcs\left(\pi_{3},\sigma\right)\right\} .
\end{align*}
Suppose that $\alpha_{i},1\le i\le m_{\alpha}=R_{t}\left(\pi_{1},\sigma_{12}\right)$,
are right-translocations and assume that $\beta_{i},1\le i\le m_{\beta}=R_{t}\left(\pi_{2},\sigma_{12}\right)$,
are left-translocations such that 
\begin{align}
\pi_{1}\alpha_{1}\alpha_{2}\cdots\alpha_{m_{\alpha}} & =\sigma_{12},\nonumber \\
\sigma_{12}\beta_{1}\beta_{2}\cdots\beta_{m_{\beta}} & =\pi_{2}.\label{eq:p1p2}
\end{align}
Similarly, suppose that $\gamma_{i},1\le i\le m_{\gamma}=R_{t}\left(\pi_{2},\sigma_{23}\right)$,
are right-translocations and that $\delta_{i},1\le i\le m_{\delta}=R_{t}\left(\pi_{3},\sigma_{23}\right)$,
are left-translocations such that
\begin{align}
\pi_{2}\gamma_{1}\gamma_{2}\cdots\gamma_{m_{\gamma}} & =\sigma_{23},\nonumber \\
\sigma_{23}\delta_{1}\delta_{2}\cdots\delta_{m_{\delta}} & =\pi_{3}.\label{eq:p2p3}
\end{align}
Note that the existence of the sets of translocations $\left\{ \alpha_{i}\right\} ,\left\{ \beta_{i}\right\} ,\left\{ \gamma_{i}\right\} ,\left\{ \delta_{i}\right\}$ follows from the definition of $\rcs$. 

From (\ref{eq:p1p2}) and (\ref{eq:p2p3}), we have 
\begin{equation}
\pi_{1}\alpha_{1}\cdots\alpha_{m_{\alpha}}\beta_{1}\cdots\beta_{m_{\beta}}\gamma_{1}\cdots\gamma_{m_{\gamma}}\delta_{1}\cdots\delta_{m_{\delta}}=\pi_{3}.\label{eq:p1p3}
\end{equation}
Right-translocations and left-translocations have the following simple property. Suppose $\beta$ is a left-translocation and $\gamma$ is a
right-translocation. We can then find a right-translocation $\gamma'$
and a left-translocation $\beta'$ such that $\beta\gamma=\gamma'\beta'$,
where either $\gamma'$ or $\beta'$ are allowed to be the identity permutation. Hence, with slight abuse of notation, 
(\ref{eq:p1p3}) may be rewritten as 
\begin{equation}
\pi_{1}\alpha_{1}\cdots\alpha_{m_{\alpha}}\gamma'_{1}\cdots\gamma'_{m_{\gamma}}\beta'_{1}\cdots\beta'_{m_{\beta}}\delta_{1}\cdots\delta_{m_{\delta}}=\pi_{3}.\label{eq:p1p3-1}
\end{equation}
Next, let $\sigma_{13}=\pi_{1}\alpha_{1}\cdots\alpha_{m_{\alpha}}\gamma'_{1}\cdots\gamma'_{m_{\gamma}}$. Note that $\sigma_{13}$ is not required to be the minimizer of
$\max\left\{ \rcs\left(\pi_{1},\sigma\right),\rcs\left(\pi_{3},\sigma\right)\right\}$.

From (\ref{eq:p1p3-1}) and the fact that $\left\{ \alpha_{i}\right\} $
and $\left\{ \gamma'_{i}\right\} $ are right-translocations and $\left\{ \beta_{i}'\right\} $
and $\left\{ \delta_{i}\right\} $ are left-translocations, it follows
that 
\begin{align*}
\rcs\left(\pi_{1},\sigma_{13}\right) & \le m_{\alpha}+m_{\gamma}=\rcs\left(\pi_{1},\sigma_{12}\right)+\rcs\left(\pi_{2},\sigma_{23}\right),\\
\rcs\left(\pi_{3},\sigma_{13}\right) & \le m_{\beta}+m_{\delta}=\rcs\left(\pi_{2},\sigma_{12}\right)+\rcs\left(\pi_{3},\sigma_{23}\right),
\end{align*}
 and thus 
\begin{align*}
\drcs{\pi_{1},\pi_{3}} 
&\le 2\max\{\rcs(\pi_1,\sigma_{13}),\rcs(\pi_3,\sigma_{13})\}\\
&\le 2\max\big\{\rcs\left(\pi_{1}, \sigma_{12}\right) + \rcs\left(\pi_{2}, \sigma_{23}\right) ,\\
&\qquad\rcs\left(\pi_{2},\sigma_{12}\right)+\rcs\left(\pi_{3},\sigma_{23}\right)\big\}\\
&\le 2\max\{\rcs(\pi_1,\sigma_{12}),\rcs(\pi_2,\sigma_{12})\}+\\
&\qquad
2\max\{\rcs(\pi_2,\sigma_{23}),\rcs(\pi_3,\sigma_{23})\}\\
&=\drcs{\pi_1,\pi_2}+\drcs{\pi_2,\pi_3}.
\end{align*}
Hence, $  \vec{\mathsf{d}}_{\circ}$ satisfies the triangle inequality.

The definition of $\vec{\mathsf{d}}_{\circ}$ implies that for two permutations $\pi_1,\pi_2 \in \Sn$, one has $\drcs{\pi_{1},\pi_{2}} \le 2t$ if and only if there exists a permutation $\sigma \in \Sn$ such that $\rcs\Lp{\pi_1,\sigma}\le t$ and $\rcs\Lp{\pi_2,\sigma}\le t$. Hence, a code $\code$ is $t$-right-translocation correcting if and only if $\drcs{\pi_{1}, \pi_{2}} > 2t$ for all $\pi_{1},\pi_{2} \in \code$, $\pi_1 \neq \pi_2$. This means that under the given distance constraint,  it is not possible to confuse the actual codeword $\pi_1$ with another (wrong) codeword $\pi_2$. 

Observe that the following bound holds:
\[\tldist\left(\pi_{1},\pi_{2}\right)\le\drcs{\pi_{1},\pi_{2}} \; . \]
It is straightforward to characterize the minimum number of right-translocations needed to transform one permutation to another, as we show next. 

\begin{defn}
For $\pi, \sigma \in \Sn$, let
\begin{align*}
J \left(\pi,\sigma\right) &:= \big\{ i\in [n]: \exists j, \pi^{-1}(i)<\pi^{-1}(j) \\
&\qquad\mbox{ and } \sigma^{-1}(i)>\sigma^{-1}(j)\big\} \; . 
\end{align*}
\end{defn}
Note that $|J|$ is left-invariant since, for $\pi,\sigma,\omega\in \Sn$,
\begin{align*}
|J(\omega\pi,\omega\sigma)| & =\big|\big\{ i\in[n]:\exists j, \pi^{-1}\omega^{-1}(i)<\pi^{-1}\omega^{-1}(j)\\
&\qquad\mbox{ and }\sigma^{-1}\omega^{-1}(i)>\sigma^{-1}\omega^{-1}(j)\big\}\big| \\
& =\big|\big\{ i'\in[n]:\exists j',\pi^{-1}(i')<\pi^{-1}(j')\\
&\qquad\mbox{ and }\sigma^{-1}(i')>\sigma^{-1}(j')\big\}\big| \\
 & =\big|J(\pi,\sigma)\big|,
\end{align*}
where for the first equality we have used the fact that $(\omega\pi)^{-1}=\pi^{-1}\omega^{-1}$ and $(\omega\sigma)^{-1}=\sigma^{-1}\omega^{-1}$, and the second equality can be obtained by letting $i'=\omega^{-1}(i)$ and $j'=\omega^{-1}(j)$. Furthermore, using similar arguments as for the proof of left-invariance of $\tldist$, it can be shown that $\rcs$ is left-invariant.
\begin{lem}
Let $\pi,\sigma \in\Sn$. Then
\[\rcs\left(\pi,\sigma\right)=\left|J\left(\pi,\sigma\right)\right| \; . \]
\end{lem}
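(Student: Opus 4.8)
The plan is to prove the identity $\rcs(\pi,\sigma) = |J(\pi,\sigma)|$ by establishing matching lower and upper bounds, and — exactly as in the proof of the earlier Proposition — to reduce to the case $\sigma = e$ using left-invariance. Note that $\rcs$ is left-invariant: applying the same relabeling $\omega$ to both $\pi$ and $\sigma$ does not change which right translocations sort one into the other. Likewise $J(\omega\pi,\omega\sigma) = J(\pi,\sigma)$, since the defining condition involves only the relative order of positions $\pi^{-1}(i),\pi^{-1}(j)$ and $\sigma^{-1}(i),\sigma^{-1}(j)$, which is preserved under left multiplication. Hence it suffices to show $\rcs(\pi,e) = |J(\pi,e)|$. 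Here $J(\pi,e) = \{ i : \exists\, j \text{ with } \pi^{-1}(i) < \pi^{-1}(j) \text{ and } i > j \}$; working in the one-line notation $\pi = (\pi(1),\dots,\pi(n))$, one checks that $i \in J(\pi,e)$ iff the value $i$ is \emph{not} part of the (unique maximal) set of values that already appear in increasing order with everything smaller to their left — more precisely, $i \notin J(\pi,e)$ iff every value $j < i$ occurs before $i$ in $\pi$, which says exactly that $i$ sits in its ``correct prefix position'' relative to all smaller elements.

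For the \textbf{upper bound} $\rcs(\pi,e) \le |J(\pi,e)|$: let $K = [n] \setminus J(\pi,e)$ be the complement. I claim the elements of $K$, read in the order they appear in $\pi$, form an increasing subsequence — indeed if $a$ precedes $b$ in $\pi$ with $a, b \in K$, then $a \notin J$ forces $a < b$ is impossible to violate, because $b \in K$ means all values less than $b$ precede $b$; combined with $a$ preceding $b$ this is consistent only with $a < b$. (This needs to be argued carefully: the right statement is that $K$ is precisely the set of values $i$ such that $1,2,\dots,i$ all appear, in that relative left-to-right order, as a prefix-closed initial segment — so $K = \{1,2,\dots,m\}$ for some $m$, or more generally $K$ is downward closed and appears in increasing order.) Granting this, each element of $J(\pi,e)$ can be moved by a single right translocation to its correct sorted position, one at a time, inserting it among the already-correctly-placed smaller elements; after $|J(\pi,e)|$ such moves the permutation is $e$. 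The delicate point is checking that moving one element of $J$ into place does not disturb the relative order of the remaining elements in a way that increases the number of still-misplaced elements — this works because a right translocation $\cs{(i,j)}$ only cyclically shifts the block between positions $i$ and $j$ and moves a \emph{single} value forward, so one can always pick the element of $J$ currently closest (in value) to the correctly-placed prefix and slot it in.

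For the \textbf{lower bound} $\rcs(\pi,e) \ge |J(\pi,e)|$: I want a monovariant showing each right translocation decreases $|J(\cdot,e)|$ by at most one, together with the fact that $|J(\cdot,e)| = 0$ exactly when the permutation is $e$ (if $J(\pi,e) = \emptyset$ then for every $i$, all smaller values precede it, which forces $\pi = e$). A right translocation $\cs{(i,j)}$ applied to $\tau$ removes the value $v = \tau(i)$ from position $i$ and reinserts it after position $j$, shifting $\tau(i+1),\dots,\tau(j)$ left by one; crucially it preserves the relative order of \emph{all} values other than $v$. Therefore for any value $u \neq v$, membership of $u$ in $J(\cdot,e)$ is governed by the relative order of $u$ against all other values $\neq v$ (unchanged) and against $v$ (possibly changed). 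A short case analysis shows that the only value whose $J$-status can change ``for free'' is $v$ itself, and that the $J$-status of any other value $u$ can flip from in-$J$ to not-in-$J$ only if the single pair $(u,v)$ was the unique witness — and moving $v$ past some elements cannot simultaneously fix two different values, because $J$-membership of $u \neq v$ is a property of $u$'s order relative to the whole set $[n]\setminus\{v\}$, which is frozen, plus its order relative to $v$ alone. Hence at most one value leaves $J$ per translocation, giving the bound.

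The \textbf{main obstacle} I anticipate is the lower-bound monovariant: naively one might fear that sliding $v$ forward past $k$ smaller elements ``repairs'' all of them at once. The resolution — which must be spelled out — is that a value $u$ lies in $J(\tau,e)$ as soon as there is \emph{any} $j$ with $\tau^{-1}(u) < \tau^{-1}(j)$ and $u > j$; so $u$ is removed from $J$ only when \emph{all} smaller values come to precede it, and since the relative order of $u$ against $[n]\setminus\{u,v\}$ is untouched by the translocation, the move can only help by fixing the relation between $u$ and $v$ — which affects just the one pair. So the only value that can be newly sorted is one for which $v$ was its last remaining out-of-order partner, and there is at most one such value per move (namely, along the argument, $v$ itself or at most one $u$, never two). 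Once this is nailed down, combining the two bounds yields $\rcs(\pi,e) = |J(\pi,e)|$, and left-invariance finishes the general case.
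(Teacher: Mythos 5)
Your overall plan --- reduce to $\sigma=e$ by left-invariance, sort the elements of $J(\pi,e)$ one per right translocation for the upper bound, and show that $|J(\cdot,e)|$ drops by at most one per right translocation for the lower bound --- is exactly the paper's proof, so the architecture is fine. Two of your supporting claims, however, do not hold as stated, and one of them sits precisely at the step you yourself flag as the main obstacle. A small one first: the set $K=[n]\setminus J(\pi,e)$ is \emph{not} downward closed and need not be an initial segment $\{1,\dots,m\}$; for $\pi=[2\,1\,3]$ one gets $J=\{2\}$ and $K=\{1,3\}$. What is true (and is all you need) is that the elements of $K$ appear in increasing order in $\pi$, which follows from ``$a\in K$ means every value smaller than $a$ precedes $a$'' --- your argument should invoke $a\in K$ here, not $b\in K$. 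For the upper bound you also need to process the displaced elements in a suitable order (decreasing value works cleanly): in $[1\,4\,2\,5\,3]$ one cannot move $4$ first to a position having only smaller elements to its left, since that would require $4$ to land after $3$ but before $5$.

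The more serious issue is the lower-bound monovariant. You argue that the $J$-status of each $u\neq v$ depends only on frozen data plus the single pair $(u,v)$, and conclude that ``at most one $u$, never two'' can be repaired. That inference is invalid: many values $u$ could each have $v$ as their unique witness, and nothing in your argument prevents one move from repairing all of them simultaneously. Indeed, for a \emph{left} translocation this actually happens --- $[2\,3\,1]\to[1\,2\,3]$ by moving $1$ to the front fixes both $2$ and $3$ and drops $|J|$ from $2$ to $0$ --- so any argument that never uses the rightward direction of the move cannot be correct. The right observation is directional: a right translocation moves $v$ to a strictly later position, so for every $u\neq v$ the relation ``$v$ appears after $u$'' can only be created, never destroyed; since a witness for $u\in J(\cdot,e)$ is a value smaller than $u$ appearing \emph{after} $u$, no $u\neq v$ can lose a witness, hence no $u\neq v$ can leave $J$, and only $v$ itself may leave. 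This gives $|J|$ decreasing by at most one per right translocation (which is all the paper asserts, in the form $|J(\pi_1,e)|\ge|J(\pi,e)|-1$), and with it the lower bound; combined with the corrected upper bound the identity follows.
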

\begin{proof}
It suffices to show that 
\[\rcs{(\pi,e)}=\left|J(\pi,e)\right| \; , \]
where 
\[\left|J(\pi,e)\right| = \left|\left\{ i\in[n]:\exists j<i,\pi^{-1}\left(i\right)<\pi^{-1}(j)\right\} \right| \; .\]
Let $\pi_1$ be obtained from $\pi$ by applying a right-translocation that moves some element $k $ to the right. Every element of $J(\pi,e)\backslash\{k\}$ is also in $J(\pi_1,e)$ as each element of $J(\pi,e)\backslash \{k\}$ is involved in at least one inversion, which is not affected by moving $k $. Hence, $|J(\pi_1,e)| \ge |J(\pi,e)| - 1$ with equality if $ J(\pi_1,e)=J(\pi,e)\backslash \{k\} $ . Repeating the same argument yields $\rcs{(\pi,e)}\ge\left|J(\pi,e)\right|-\left|J\left(e,e\right)\right|=\left|J(\pi,e)\right|$. 

Conversely, to transform $\pi$ into $e$, it suffices to apply to each $i\in J$ the shortest right-translocation that moves this element to the smallest position $i'$, such that to the left of position $i'$ are all the elements smaller than $i$. Hence, $\rcs{(\pi,e)}\le\left|J(\pi,e)\right|.$
\end{proof}

For permutations $\pi,\sigma\in\Sn$, the difference between $\rcs\Lp{\pi,\sigma}$ and $\tldist\Lp{\pi,\sigma}$ may be as large as $n-2$. This may be seen by letting $\pi = (2,3,\dots, n,1)$ and $\sigma = e$, and observing that $\rcs\Lp{\pi,\sigma}=n-1$ and $\tldist\Lp{\pi,\sigma}=1$. Furthermore, it can be shown that this is the largest possible gap. To prove this fact, first note that $\rcs\Lp{\pi,\sigma}=0$ if and only if $\tldist\Lp{\pi,\sigma}=0$ and thus to obtain a positive gap one must have $\tldist\Lp{\pi,\sigma}\ge1$. We also have $\tldist\Lp{\pi,\sigma}\le\rcs\Lp{\pi,\sigma}\le n-1$. Hence, $1\le\tldist\Lp{\pi,\sigma}\le\rcs\Lp{\pi,\sigma}\le n-1$, which implies that the gap is at most $n-1-1=n-2$.

In the sections to follow, we mainly focus our attention on the Ulam distance.
%-------------------------------------------------------------
%\bibitem{JMSB} 
%  {T. Etzion, N. Silberstein,}
%  {``Error-correcting codes in projective spaces via rank-metric codes and Ferrers diagrams,''}
%  \textit{IEEE Trans.\ On Inform.\ Theory,} 
%  vol.~55, pp.~2909-2919, July~2009.

%%%%%%%%%%%%%%%%%
%======================================
%%%%%%%%%%%%%%%%%

\section{Bounds on the Size of Codes} \label{sec:bounds}
\subsection{Codes in the Ulam metric}
Henceforth, a \emph{permutation code}, or simply a code, of length $n$ and minimum distance $d$ in a metric $\dist$ refers to a subset $C$ of $\Sn$ such that for all distinct $ \pi, \sigma \in C$, we have $\dist ( \pi, \sigma)\ge d$. The term \emph{a capacity achieving code} is reserved for a code with maximum rate and a given minimum distance in a given metric space. We also let $A_\circ(n,d)$ be the maximum size of a permutation code of length $n$ and minimum Ulam distance $d$.
\begin{prop}
For all integers $n$ and $d$ with $n\ge d\ge 1$, we have \[A_\circ(n,d)\ge\frac{\left(n-d+1\right)!}{\binom{n}{d-1}}\cdot\]
\end{prop}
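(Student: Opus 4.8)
The plan is to establish this lower bound via a Gilbert--Varshamov-type argument adapted to the translocation metric. Recall from the proposition proved above that $\tldist(\pi,\sigma) = n - l(\pi,\sigma)$, where $l(\pi,\sigma)$ is the length of the longest common subsequence. Hence $\tldist(\pi,\sigma) < d$ is equivalent to $l(\pi,\sigma) > n - d$, i.e., $\pi$ and $\sigma$ share a common subsequence of length at least $n-d+1$. So the first step is to bound the size of a ball of radius $d-1$: for any fixed $\sigma \in \Sn$, I want to estimate $|B(\sigma, d-1)| = |\{\pi : l(\pi,\sigma) \ge n-d+1\}|$. By left-invariance it suffices to count $\pi$ with $l(\pi) = l(\pi, e) \ge n-d+1$, i.e., permutations having an increasing subsequence of length at least $n-d+1$.

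The second step is the counting bound itself. A permutation $\pi$ with an increasing subsequence of length $\ge n-d+1$ is specified (not uniquely, but surjectively) by choosing which $n-d+1$ positions (or values) carry the increasing subsequence and then arbitrarily placing the remaining $d-1$ elements; this gives the crude bound $|B(\sigma,d-1)| \le \binom{n}{d-1}(d-1)!\,$... but I should be more careful to get exactly the stated denominator. Actually, the clean way: a permutation with an increasing subsequence of length $n - (d-1)$ is determined by the choice of the $d-1$ elements \emph{not} on some fixed longest increasing subsequence together with their positions, so $|B(\sigma, d-1)| \le \binom{n}{d-1} \cdot \frac{n!}{(n-d+1)!}$ — choose the $d-1$ "bad" values in $\binom{n}{d-1}$ ways, then insert them into the sorted arrangement of the remaining $n-d+1$ values, which can be done in at most $\frac{n!}{(n-d+1)!}$ ways (ordered placement into $n-d+2, n-d+3, \ldots, n$ slots successively). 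The standard greedy/packing bound then gives $A_\circ(n,d) \ge \frac{n!}{|B(\sigma,d-1)|} \ge \frac{n!}{\binom{n}{d-1} \cdot n!/(n-d+1)!} = \frac{(n-d+1)!}{\binom{n}{d-1}}$, which is exactly the claimed inequality.

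The argument structure is: pick codewords greedily, each new codeword must lie outside the union of balls of radius $d-1$ around previously chosen codewords; since each such ball has size at most $|B(\sigma,d-1)|$, we can continue for at least $n! / |B(\sigma,d-1)|$ steps, and at termination every permutation is within distance $d-1$ of some codeword, so no larger code-free region remains.

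The main obstacle I anticipate is getting the ball-size bound to come out to \emph{exactly} $\binom{n}{d-1} \cdot \frac{n!}{(n-d+1)!}$ rather than something slightly larger, since the naive "choose a subsequence and fill in the rest" overcounts and one must argue the right surjection; in particular one must be careful that fixing a \emph{canonical} longest increasing subsequence (e.g. leftmost, or the one obtained by a specific algorithm) makes the map from (choice of $d-1$ deleted elements, their insertion data) onto $B(\sigma,d-1)$ genuinely surjective, so that the count is a valid upper bound. The insertion-count factor $\frac{n!}{(n-d+1)!} = n(n-1)\cdots(n-d+2)$ should be justified by noting that inserting the $j$-th bad element into a partially-built word of current length $n-d+j$ has at most $n-d+j+1 \le n$ choices of position — but one has to check the product telescopes to the stated quantity and not an over-generous bound. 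Everything else (left-invariance of $\tldist$ and $l$, the Gilbert--Varshamov packing step) is routine given the proposition already established.
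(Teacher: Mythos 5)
Your proposal is correct and matches the paper's own argument: both bound the ball $B_\circ(d-1)$ by $\binom{n}{d-1}\cdot\frac{n!}{(n-d+1)!}$ (choosing the $d-1$ elements off a common subsequence of length $n-d+1$ and then placing them, with the insertion count telescoping to exactly $n(n-1)\cdots(n-d+2)$) and then apply the Gilbert--Varshamov packing bound. The surjectivity concern you raise is handled exactly as you suggest, so there is no gap.
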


\begin{proof}
Let $B_\circ\Lp{r}$ be the number of permutations at Ulam distance at most $r$ from a given permutation. From left-invariance, we have $B_\circ\Lp{r}=\left|\{ \sigma:\tldist(\sigma,e)\le r\}\right| $. The permutations that are within Ulam distance $r$ from $e$ are precisely the permutations $\sigma$ with $l\left(e,\sigma\right)\ge n-r$. There are $\binom{n}{r}$ ways to choose the first $n-r$ elements of the longest common subsequence of $e$ and $\sigma$ and at most $\frac{n!}{\left(n-r\right)!}$ ways to arrange the remaining elements of $\sigma$. Hence,
\[B_\circ\Lp{r}\le\binom{n}{r}\frac{n!}{\left(n-r\right)!}\cdot\]
From the Gilbert-Varshamov bound, we have $A_\circ(n,d)\ge\frac{n!}{B_\circ\Lp{d-1}}$ and thus
\[A_\circ(n,d)\ge\frac{n!}{\binom{n}{d-1}\frac{n!}{\left(n-d+1\right)!}},\]
which completes the proof. 
\end{proof}

\begin{prop}\label{prop:tl-UB}
For all $n,d\in\mathbb Z$ with $n\ge d\ge1$, \[A_\circ(n,d)\le\left(n-d+1\right)!\;.\]
\end{prop}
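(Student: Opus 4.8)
The plan is to use a sphere-packing / Singleton-type argument. Fix a code $\code \subseteq \Sn$ with minimum translocation distance $d$, and recall from the Proposition preceding this one that $\tldist(\pi,\sigma) = n - l(\pi,\sigma)$. Hence $\tldist(\pi,\sigma) \ge d$ for distinct $\pi,\sigma \in \code$ is equivalent to $l(\pi,\sigma) \le n - d$: no two distinct codewords share a common subsequence of length $n-d+1$. The idea is to read off the first $n-d+1$ symbols in a suitably chosen set of positions and argue that this map is injective on $\code$.

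First I would consider the projection $\Phi$ that sends a permutation $\sigma = (\sigma(1),\ldots,\sigma(n))$ to the subword $(\sigma(1),\sigma(2),\ldots,\sigma(n-d+1))$ formed by its first $n-d+1$ entries. If $\Phi(\pi) = \Phi(\sigma)$ for $\pi,\sigma \in \code$, then $\pi$ and $\sigma$ agree on an initial segment of length $n-d+1$, and that segment is in particular a common subsequence of $\pi$ and $\sigma$, so $l(\pi,\sigma) \ge n-d+1$, i.e. $\tldist(\pi,\sigma) \le d-1 < d$; by the minimum-distance hypothesis this forces $\pi = \sigma$. Therefore $\Phi$ is injective on $\code$, and $|\code| \le |\Phi(\Sn)|$. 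The image $\Phi(\Sn)$ consists of all sequences of $n-d+1$ distinct symbols from $[n]$, of which there are $n!/(d-1)! = n(n-1)\cdots(d)$ — which is larger than $(n-d+1)!$, so this crude projection is not yet enough and the bound needs to be tightened.

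To get the sharper count $(n-d+1)!$, I would instead argue that the restriction of a codeword to its first $n-d+1$ positions, viewed only as a \emph{pattern} (i.e. up to order-isomorphism, recording only the relative order of those $n-d+1$ values), already determines the codeword: if $\pi$ and $\sigma$ have order-isomorphic length-$(n-d+1)$ prefixes, one shows they must share a common subsequence of length $n-d+1$. Concretely, if the prefixes $(\pi(1),\ldots,\pi(n-d+1))$ and $(\sigma(1),\ldots,\sigma(n-d+1))$ have the same relative order, then the set of positions carrying the $n-d+1$ smallest prefix-values, or an appropriate greedy matching, yields an increasing matching between the two, giving $l(\pi,\sigma) \ge n-d+1$; then minimum distance $d$ forces $\pi=\sigma$. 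Since there are exactly $(n-d+1)!$ order-patterns of length $n-d+1$, this gives $|\code| \le (n-d+1)!$ and hence $A_\circ(n,d) \le (n-d+1)!$.

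The main obstacle is the middle step: verifying that two codewords with order-isomorphic prefixes genuinely have a common subsequence of length $n-d+1$, rather than merely the cruder statement about identical prefixes. I expect this to come down to a short combinatorial lemma — essentially that if $a_1,\ldots,a_m$ and $b_1,\ldots,b_m$ are two sequences of distinct integers that are order-isomorphic, then among any two permutations of $[n]$ having these as prefixes one can extract a common subsequence of length $m$ (the $m$ symbols that occupy the ``same rank positions'' form an increasing common subsequence in both). Once that lemma is in hand, the counting and the appeal to $\tldist(\pi,\sigma)=n-l(\pi,\sigma)$ finish the argument immediately. (Alternatively, one can phrase the whole thing via a direct Singleton-bound packing argument, partitioning $\Sn$ into classes indexed by prefix-patterns and showing each class contains at most one codeword.)
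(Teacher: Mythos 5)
Your reduction to ``no two distinct codewords share a common subsequence of length $n-d+1$'' is correct, and you rightly observe that projecting onto the first $n-d+1$ \emph{positions} only gives $n!/(d-1)!$ classes. But the fix you propose --- indexing codewords by the order-pattern of their length-$(n-d+1)$ prefix --- does not work, because a common subsequence of two permutations must consist of the \emph{same values} occurring in the same relative order in both; order-isomorphism of prefixes gives you matching patterns on possibly disjoint value sets and says nothing about where the values of one prefix sit inside the other permutation. Concretely, take $n=3$, $d=2$, so $m=n-d+1=2$: the codewords $\pi=(1,3,2)$ and $\sigma=(2,3,1)$ have order-isomorphic (both increasing) prefixes $(1,3)$ and $(2,3)$, yet their length-$2$ subsequences are $\{13,12,32\}$ and $\{23,21,31\}$ respectively, which are disjoint, so $l(\pi,\sigma)=1$ and $\tldist(\pi,\sigma)=2\ge d$. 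Both permutations can therefore lie in a distance-$2$ code while sharing a prefix pattern, so your map is not injective on the code and the claimed lemma is false.

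The repair is to project onto a fixed set of \emph{values} rather than positions: for $\sigma\in\Sn$ let $\sigma_P$ be the subsequence of $\sigma$ consisting of the elements of $P=\{1,\ldots,n-d+1\}$ in the order they appear. If two codewords have equal projections $\sigma_P=\pi_P$, that common string \emph{is} a common subsequence of length $n-d+1$ (same values, same order), forcing $\tldist(\sigma,\pi)\le d-1$, a contradiction; since there are only $(n-d+1)!$ possible projections, $M\le(n-d+1)!$. This is essentially the paper's first proof (phrased there via the smallest $k$ for which the projection onto $\{1,\ldots,k+1\}$ is injective). The paper's second proof is a double count close in spirit to your original position-projection attempt: all $M\binom{n}{n-d+1}$ length-$(n-d+1)$ subsequences of codewords are distinct, and there are only $n!/(d-1)!$ such strings, which again yields $M\le(n-d+1)!$.
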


\begin{proof}
We provide two proofs for this bound. The first proof is based on a projection argument first described in~\cite{5485013}, while the second proof is based on a standard counting argument.

\emph{1)} Let $C$ be a code of length $n$, size $M$, and minimum distance $d$. Let $k$ be the smallest integer such that $\pi_{\{{1,\ldots,k+1\}} } \neq \sigma_{\{{1,\ldots,k+1\}}}$ for all distinct $\pi,\sigma\in C$. Hence, $M\le\left(k+1\right)!$. By definition, there exist $\pi,\sigma\in C$ such that $\pi_{\{{1,\ldots,k\}}}=\sigma_{\{{1,\ldots,k\}}}$. So, $l\left(\pi,\sigma\right)\ge k$ and thus $d\le \tldist\left(\pi,\sigma\right)\le n-k$. Hence, $M\le\left(n-d+1\right)!$.

\emph{2)} Again, let $C$ be a code of length $n$, size $M$, and minimum distance $d$. Since the minimum distance is $d$, all $M\binom{n}{n-d+1}$ subsequences of length $n-d+1$ of the codewords of $C$ are unique. There are $\frac{n!}{\left(d-1\right)!}$ possible subsequences of length $n-d+1$. Hence, 
\[M\binom{n}{n-d+1}\le\frac{n!}{\left(d-1\right)!}\]
 which implies that $M\le\left(n-d+1\right)!$.
\end{proof}

From the two previous propositions, we obtain
\begin{equation}\label{eqn:code-card}
\frac{\left(n-d+1\right)!}{\binom{n}{d-1}}\le A_\circ(n,d)\le\left(n-d+1\right)!
\end{equation}

In the remainder of the paper, all limits are evaluated for $n\to\infty$, unless stated otherwise. Furthermore, we assume that the limits exist.
%Furthermore, we assume that the limits exist. The results below are stated for code families attaining the maximal size with length $n$, $M(n)$ codewords, and minimum Ulam distance $d(n)$. The ratio $\frac {d(n)}n $ is denoted by $\delta(n) $. Furthermore, $\lim \delta (n)$ and $ \lim \mathcal C_\circ(n) $ are denoted by $ \delta$ and $\mathcal C_\circ$, respectively.

\begin{lem}\label{lem:limits}
The following results hold:
\begin{enumerate}
\item \[\lim\frac{\ln (n-d(n))!}{\ln n!}=1-\lim \frac {d(n)}n, \]
\item \[\lim\frac{\ln\frac{n!}{d(n)!}}{\ln n!}= 1-\lim \frac {d(n)}n, \]
\item \[\lim\frac{\ln \binom{n}{d(n)}}{\ln n!}=0.\]
\end{enumerate}
\end{lem}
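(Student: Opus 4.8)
The plan is to prove each of the three limits using Stirling's approximation in the form $\ln m! = m \ln m - m + O(\ln m)$, together with the hypothesis $d(n)/n \to \delta$ and hence $(n-d(n))/n \to 1-\delta$. The denominator in all three cases is $\ln n! = n\ln n - n + O(\ln n) = n \ln n \,(1 + o(1))$, so the strategy throughout is to compute the leading term of the numerator, divide by $n \ln n$, and check that lower-order terms vanish. A mild subtlety worth noting is that $\delta$ may equal $0$ or $1$; I would handle these boundary cases by observing that the asymptotic estimates below remain valid (e.g.\ when $\delta = 1$, $n - d(n) = o(n)$ and one must be slightly careful that $\ln(n-d(n))! = o(n\ln n)$, which still holds since $(n-d(n))\ln(n-d(n)) \le n \ln n$).

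For part (1), write $m = n - d(n)$, so $m/n \to 1-\delta$. By Stirling, $\ln m! = m \ln m - m + O(\ln m)$. Then
\[
\frac{\ln m!}{\ln n!} = \frac{m \ln m - m + O(\ln m)}{n \ln n - n + O(\ln n)}.
\]
Factor $n \ln n$ from the denominator and note $m \ln m = m(\ln n + \ln(m/n)) = m \ln n + m \ln(m/n)$. Since $m = (1-\delta + o(1)) n$ and $\ln(m/n) = O(1)$ (when $\delta < 1$) or is handled separately (when $\delta = 1$), the numerator is $(1-\delta + o(1)) n \ln n$, while the denominator is $(1 + o(1)) n \ln n$; the ratio tends to $1 - \delta$. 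For part (2), $\ln \frac{n!}{d(n)!} = \ln n! - \ln d(n)! = (n \ln n - n) - (d(n)\ln d(n) - d(n)) + O(\ln n)$. Writing $d(n) = (\delta + o(1)) n$ and $d(n) \ln d(n) = d(n)\ln n + d(n)\ln(d(n)/n)$, the numerator becomes $n\ln n - d(n)\ln n + O(n) = (1 - \delta + o(1)) n \ln n$, so dividing by $\ln n! = (1+o(1))n\ln n$ again gives $1 - \delta$. (If $\delta = 0$ one checks $d(n)\ln d(n) = o(n \ln n)$, which holds since $d(n)\ln d(n) \le n \ln n$ and $d(n)/n \to 0$ forces the product to be $o(n\ln n)$ after a short argument splitting on whether $d(n)$ is bounded.)

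For part (3), use the standard bound $\binom{n}{d(n)} \le 2^n$, hence $\ln \binom{n}{d(n)} \le n \ln 2 = O(n)$, while $\ln n! = (1+o(1)) n \ln n \to \infty$ faster than $n$; therefore the ratio tends to $0$. (Alternatively, the cruder bound $\binom{n}{d(n)} \le n^{d(n)} \le n^n$ gives $\ln\binom{n}{d(n)} \le n \ln n$, which is not quite enough, so I would use the $2^n$ bound; equivalently one may invoke the entropy estimate $\ln \binom{n}{d(n)} = n H(\delta)(1+o(1)) = O(n)$.)

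I do not anticipate a genuine obstacle here — the result is a routine Stirling computation — but the one place to be careful is the treatment of the degenerate regimes $\delta \in \{0, 1\}$, where some of the $O(1)$ terms above (like $\ln(m/n)$ or $\ln(d(n)/n)$) are no longer bounded. In those cases the cleanest fix is to bound the relevant product monotonically: $0 \le (n-d(n))\ln(n-d(n)) \le n \ln n$ and $0 \le d(n) \ln d(n) \le n \ln n$, and then use the convergence $d(n)/n \to \delta$ directly on $x \ln x$ via continuity of $x \mapsto x \ln x$ on $[0,1]$ (with the convention $0 \ln 0 = 0$), so that $\frac{1}{n\ln n} \cdot d(n)\ln d(n) \to \delta$ and similarly $\frac{1}{n \ln n}(n - d(n))\ln(n-d(n)) \to 1 - \delta$. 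This packages all three parts uniformly and avoids case analysis.
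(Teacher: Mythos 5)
Your proof is correct and follows essentially the same route as the paper's: Stirling's formula with the decomposition $\ln m = \ln n + \ln(m/n)$ and the boundedness of $x\ln x$ on $(0,1]$ to absorb the cross terms into $O(n)$. The only (minor) divergence is in part (3), where you invoke the direct bound $\binom{n}{d(n)}\le 2^{n}$ rather than subtracting the estimates of parts (1) and (2) as the paper does; both are one-line arguments, and your explicit treatment of the boundary cases $\delta\in\{0,1\}$ is a welcome bit of extra care that the paper leaves implicit.
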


\begin{proof} All claims follow easily from the asymptotic formula $\ln(n!) = n \ln n+O(n)$.
%\begin{enumerate}
%\item 
%\begin{align*}
%\frac{\ln\left(n-d(n)\right)!}{\ln n!} & =\frac{\left(n-d(n)\right)\ln\left(n-d(n)\right)+O\left(n\right)}{n\ln n+O\left(n\right)}\\
 %& =\frac{n\ln n-d(n)\ln n+\left(n-d(n)\right)\ln\left(1-d(n)/n\right)+O\left(n\right)}{n\ln n+O\left(n\right)}\\
 %& =1-\frac{d(n)}{n}+O\left(\frac{1}{\ln n}\right).
%\end{align*}
%\item
%\begin{align*}
%\frac{\ln\frac{n!}{d(n)!}}{\ln n!} & =\frac{\ln n!-\ln d(n)!}{\ln n!}\\
% & =1-\frac{d(n)\ln d(n)+O\left(n\right)}{n\ln n+O\left(n\right)}\\
% & =1-\frac{d(n)\ln d(n)}{n\ln n}+O\left(\frac{1}{\ln n}\right)\\
% & =1-\frac{d(n)\ln n+d(n)\ln\left(d(n)/n\right)}{n\ln n}+O\left(\frac{1}{\ln n}\right)\\
% & =1-\frac{d(n)}{n}+O\left(\frac{1}{\ln n}\right).
%\end{align*}
%\item From expressions 1 and 2, it follows that
%\[
%\frac{\ln\binom{n}{d(n)}}{\ln n!}=\frac{\ln\frac{n!}{d(n)!}-\ln\left(n-d(n)\right)!}{\ln n!}=O\left(\frac{1}{\ln n}\right).
%\]
%\end{enumerate}
\end{proof}

Let $\mathcal C_\circ(d)$ denote the asymptotic capacity of translocation codes with minimum Ulam distance $d=d(n)$, that is, $\mathcal C_\circ(d)=\lim\frac{\ln A_\circ(n,d)}{\ln n!}$. 

\begin{thm}
$\mathcal C_\circ(d) = 1-\lim\frac {d(n)}n$.
%Let $\mathcal{C}$ be a family of translocation correcting codes attaining the maximal size of length $n$, distance $d(n)$. Then $\mathcal{C}_\circ=1- \delta$.
\end{thm}
\begin{proof}
From \eqref{eqn:code-card}, we have
\begin{align}\label{eqn:Cn}
\frac{\ln\left(n-d+1\right)!-\ln\binom{n}{d-1}}{\ln n!}&\le \frac{\ln A_\circ(n,d)}{\ln n!}\nonumber\\
&\le\frac{\ln\left(n-d+1\right)!}{\ln n!}
\end{align}
Taking the limit of \eqref{eqn:Cn} and using Lemma~\ref{lem:limits} proves the theorem.
\end{proof}
%As an illustrative example, codes with sublinear distance growth have capacity one, while codes with $\lim \delta(n)=c=\text{const.}$ have capacity $1-c$.

At this point, it is worth observing that the problem of bounding the longest common subsequence in permutations has been recently studied in a combinatorial framework~\cite{beame2009longest}. There, the question of interest was to determine the minimum length of the longest common subsequence between any two distinct permutations from a set of $k$ permutations of length $n$. When translated into the terminology of translocation codes, the problem reduces to finding $d_k \left(n\right)$, the largest possible minimum Ulam distance of a set of $k$ permutations of $\Sn$. 

The bounds derived in~\cite{beame2009longest} are constructive, but they hold only in the \emph{zero-capacity domain of the code parameters}.
A more detailed description of one of the constructions of \cite{beame2009longest} is presented in the next section. The bounds of \cite{beame2009longest} imply that $d_k \left(n\right)\ge n-32\left(nk\right)^{1/3}$ for $3\le k\le\sqrt{n}$. 
Hence, for $n-32\sqrt{n}\le d\le n-32\left(3n\right)^{1/3}$, \[A_\circ(n,d)\ge\frac{1}{n}\left(\frac{n-d}{32}\right)^{3}.\] Furthermore, for $k\ge4$, $d_k \left(n\right)\ge n-\left\lceil n^{1/\left(k-1\right)}\right\rceil ^{k/2-1}$. For $k\ge2\left(1+\log_2 n\right)$, this bound is of no practical use. 

For $1+\log_2 n\le k<2\left(1+\log_2 n\right)$, one has $d_k \left(n\right)\ge n-2^{k/2-1}$ which implies that, \[A_\circ(n,d)\ge2\left(1+\log_2\left(n-d\right)\right)\] for $d\le n-\sqrt{n/2}$. 
Similar bounds can be obtained for $A_\circ(n,d)$ by assuming that $m-1<n^{\frac{1}{k-1}}\le m$ for some integer $m\le\left\lceil n^{1/3}\right\rceil $. 
Note that although these results hold for the zero-capacity regime, they still may be useful for finite codelength analysis.

\textit{Remark:} Similar bounds may be derived for the asymmetric regime of translocation error-correcting codes. 
For this purpose, let $B'(r)=\left|\left\{ \sigma:\rcs({e,\sigma})\le r\right\} \right|$ and $\vec{B}(r)=\left|\left\{ \sigma:\drcs{e,\sigma}\le r\right\} \right|$. 
Then \[\frac{n!}{B_{\circ}(2t)}\le\frac{n!}{\vec{B}(2t)}\le\vec{A}\left(n,2t+1\right)\le\frac{n!}{B'(t)},\] where $\vec{A}(n,d)$ denotes the maximum size of a permutation code with minimum right-translocation distance $d$.

\subsection{Permutation Codes in Other Metrics}

Translocation errors, and consequently, translocation error correcting codes, are difficult to analyze directly. 
On the other hand, as already pointed out, the Ulam distance is related to various other metrics well-studied in the coding theory and mathematics literature. 
Since the constructions in subsequent sections rely on codes for other distance metrics on permutations, 
we provide a brief overview of the state of the art results pertaining to the Hamming, transposition, and Kendall $\tau$ metrics. We also supplement the known findings 
with a number of new comparative results for the metrics under consideration.

\subsubsection{Hamming Metric}
Codes in the Hamming metric have a long history, dating back to the work~\cite{1445610}. The Hamming metric is a suitable distance measure for use in power line communication systems, database management and other applications.

Let $A_H(n,d)$ denote the largest number of permutations of length $n$ and minimum Hamming distance $d$. Frankl and Deza \cite[Theorem 4]{frankl-deza} and Deza~\cite{Deza1978197} showed that \[\frac{n!}{B_{H}\left(d-1\right)}\le A_H(n,d)\le\frac{n!}{\left(d-1\right)!},\] where $B_{H}\left(r\right)$ is the volume of the sphere of radius $r$ in the space of permutations with Hamming metric. Improvements of these results for some special cases were also 
obtained via linear programing methods, see for example~\cite{Tarnanen}.
 
Let $D_{i}$ denote the number of derangements of $i$ objects, i.e., the number of permutations of $[i]$ at Hamming distance $i$ from the identity permutation. It can be shown that $B_{H}\left(r\right)=1+\sum_{i=2}^{r}\binom{n}{i}D_{i}$.  Hence,
\begin{align*}
B_{H}\left(d-1\right) & = 1+\sum_{i=2}^{d-1}\binom{n}{i}D_{i}
\le\sum_{i=1}^{d-1}\frac{n!}{\left(n-i\right)!}\\
 & \le\left(d-1\right)\frac{n!}{\left(n-d+1\right)!}
\end{align*}
where the first inequality follows from the fact that $D_{i}\le i!$. Note that although a more precise asymptotic characterization for the number of derangements is known, namely \[ \lim_{\ell \rightarrow \infty} \frac{D_\ell}{\ell!} = \frac{1}{e}\;, \] the simple bound $D_{i}\le i!$ is sufficiently tight for the capacity computation.

The aforementioned results lead to \[\frac{\left(n-d+1\right)!}{d-1}\le A_H(n,d)\le\frac{n!}{(d-1)!}\cdot \] 

Let $\mathcal C_H(d)$ denote the capacity of permutation codes under the Hamming distance $d$, i.e.,
$\mathcal C_H(d)=\lim\frac{\ln A_H(n,d)}{\ln n!}$. Lemma \ref{lem:limits} implies the following theorem.

\begin{thm}\label{thm:hamming-capacity} $\mathcal{C}_H=1-\lim\frac{d(n)}{n}$.
\end{thm}
 
\vspace{.3cm} 
\subsubsection{Transposition Metric}
%A transposition $\tau\in\mathbb{S}$ is the swapping two elements
%of a permutation. Hence $\tau(i,j)\pi$ is obtained by
%swapping elements $i$ and $j$ in $\pi$. Let the transposition distance
%between $\sigma_{1},\sigma_{2}\in\Sn$ be denoted by $\tpdist\left(\sigma_{1},\sigma_{2}\right)$.
Let $A_{T}(n,d)$ denote the maximum size of a code with minimum transposition distance $\tpdist$ at least $d$. From \eqref{eq:transpos-hamming}, we have \[A_{H}\left(n,2d\right)\le A_{T}(n,d)\le A_{H}\left(n,d\right).\]

Using the aforementioned bounds, we have the following theorem regarding the capacity $\mathcal C_T(d)$ of permutation codes of minimum  distance $d$ in the transposition metric.
\begin{thm}\label{thm:cayley-capacity}
The capacity of permutation codes of minimum distance $d$ in the transposition metric is bounded as 
\[1-2\lim \frac{d(n)}n \le\mathcal{C}_T(d)\le1-\lim \frac{d(n)}n.\]
\end{thm}

\vspace{.3cm} 
\subsubsection{Kendall $\tau$ Metric}
Let $A_K(n,d)$ denote the largest cardinality of a permutation code of length $n$ with minimum Kendall $\tau$ distance $d$, and let $\mathcal C_K(d)=\lim \frac {\ln A_K(n,d)}{\ln n!}$. 
Barg and Mazumdar \cite[Theorem 3.1]{5485013} showed that 
\[\mathcal{C}_{K}(d)=1-\epsilon,\qquad\mbox{for } \, d=\Theta\left(n^{1+\epsilon}\right).\]
Note that for the Kendall $\tau$, the maximum distance between two permutations may be as large as $\Theta(n^2)$. On the other hand, the diameter of $\Sn$ with respect to the Ulam distance is $\Theta(n)$.

\vspace{.3cm}
\subsubsection{Levenshtein Metric}

The bounds on the size of deletion/insertion correcting codes in the more general case of codes with distinct symbols were first derived by Levenshtein in his landmark paper~\cite{levenshtein_perfect}. The lower bound relies on the use of Steiner triple systems and designs~\cite{levenshtein_perfect}. More precisely, let $D(n,q)$ be the largest cardinality of a set of $n$-subsets of the set $\{{0,1,\ldots,q-1\}}$ with the property that every $(n-1)$-subset of $\{{0,1,\ldots,q-1\}}$ is a subset of at most one of the $n$-subsets. Then the following results holds for the cardinality $\mathcal{A}_L(n,q)$ of the largest single-deletion correcting codes consisting of codewords in  $\{0,1,\ldots,q-1\}^n$ with distinct symbols~\cite{levenshtein_perfect}: \[ (n-1)! D(n,q)\, \leq  \mathcal{A}_L(n,q)  \leq \frac{q!}{n\,(q-n+1)!}\cdot \] %For a more thorough discussion regarding perfect permutation codes in the Levenshtein metric over $\mathbb{S}_q$, capable of correcting $n-t$ errors and ordered Steiner systems, the interested reader is referred to~\cite{levenshtein_perfect}.

%%%%%%%%%%%%%%%%%
%======================================
%%%%%%%%%%%%%%%%%

\section{Single Error Correcting Codes for Translocations and Right-translocations} \label{sec:one-error}

This section contains constructions for single-translocation error detecting and single-translocation error correcting codes. For the latter case, we exhibit two constructions, one for translocations and another for right-translocations. 

\subsection{Detecting a Single Translocation Error}
We start by describing a code that can detect a single translocation error. From the discussion in Section~\ref{sec:basic-defs}, recall that the Ulam distance is half of the Levenshtein distance and thus any single-deletion correcting code may be used for detecting a single translocation error. An elegant construction for single-deletion correcting codes for permutation was described by Levenshtein in~\cite{levenshtein_perfect}. The resulting code has cardinality $\left(n-1\right)!$ and is optimal since, from Proposition \ref{prop:tl-UB}, we have
\[
A_\circ\left(n,2\right)\le\left(n-2+1\right)!=\left(n-1\right)!.
\]
Hence, $A_\circ\left(n,2\right)=\left(n-1\right)!$. 
 
Levenshtein's construction is of the following form.

Let 
\[
W_{2}^{n}=\left\{ u\in\left\{ 0,1\right\} ^{n}:(n+1)\, | \, \sum_{i=1}^{n}iu\left(i\right)\right\} 
\]
 where $a\, | \, b$ denotes that $a$ is a divisor of $b$. For $ \sigma \in\Sn$,
let $Z\left(\sigma \right)=\left(z(1),\cdots,z\left(n-1\right)\right)$
be a vector with 
\[
z\left(i\right)=\begin{cases}
0, & \qquad\mbox{if } \sigma \left(i\right)\le \sigma \left(i+1\right),\\
1, & \qquad\mbox{if } \sigma \left(i\right)> \sigma \left(i+1\right),
\end{cases}\qquad i\in\left[n-1\right].
\]
The code 
\begin{equation}
C=\left\{ \sigma \in\Sn:Z\left(\sigma \right)\in W_{2}^{n-1}\right\}\label{eq:leven-code}
\end{equation}
 of size $\left(n-1\right)!$ is capable of correcting a single deletion. Hence, this code can detect a single translocation error as well.

Let $\mathbb S_n^{(t)}$ be the set of sequences $\sigma$ of length $n-t$ that can be obtained from some permutation in $\Sn$ by $t$ deletions. In other words, $\mathbb S_n^{(t)}$ is the set of words of length $n-t$ from the alphabet $[n]$ without repetitions. A code $C_p\subseteq\Sn$ is a \emph{perfect} code capable of correcting $t$ deletions if, for every $\sigma \in\mathbb S_n^{(t)}$, there exists a unique $\pi \in C_p$ such that $\sigma$ can be obtained from $\pi$ by $t$ deletions. It was shown in~\cite{levenshtein_perfect} that $C$ in (\ref{eq:leven-code}) is a perfect code capable of correcting a single deletion.

The minimum Levenshtein distance of $C_p$ is $2(t+1)$ and thus the minimum Ulam distance of $C_p$ is $t+1$. Since the size of $\mathbb S_n^{(t)}$ equals $\binom{n}{n-t}\left(n-t\right)!$ and $\binom{n}{t}$ elements of $\mathbb S_n^{(t)}$ can be obtained by $t$ deletions from each $\sigma \in C_p$, we have that $\left|C_p\right|=\left(n-t\right)!$. Recall from Prop.~\ref{prop:tl-UB} that the size of a code with minimum Ulam distance $t+1$ is $\le (n-t)!$. Thus a perfect code capable of correcting $t$ deletions, if it exists, is a rate-optimal code in the Ulam metric. Although conditions for the existence of such codes were investigated in~\cite{levenshtein_perfect}, both necessary and sufficient conditions are known only for a small number of special cases.

%In an analogue manner, we define a code $C\subseteq\Sn$ as perfect capable of correcting $t$ translocation errors if for every $\sigma \in\Sn$, there is a unique $\pi \in C$ such that $\sigma$ can be %obtained from $\pi$ by at most $t$ translocations. Analysis of perfect translocation codes lies outside of the scope of this paper.

In the next two subsections, we describe codes capable of correcting a single right-translocation error and codes capable of correcting a single translocation error. In the constructions, we make use of a single-transposition error detecting code, described next.

\paragraph*{A single-transposition error detecting code} For $\sigma_{1},\sigma_{2}\in\Sn $, let $\tpdist\left(\sigma_{1},\sigma_{2}\right)$ as before denote the transposition distance between $\sigma_{1}$ and $\sigma_{2}$. The parity of a permutation $\sigma$ is defined as the parity of $\tpdist(\sigma,e)$. It is well-known that applying a transposition to a permutation changes the parity of the permutation, and also that, for $n\ge2$, half of the permutations in $\Sn $ are even and half of them are odd\footnote{More precisely, the symmetric group can be partitioned into the alternating group and its coset.}. Hence, the code $C$ containing all even permutations of $\Sn $ is a single-transposition error detecting code of length $n$ and cardinality $n!/2$.

\subsection{Correcting a Single Right-translocation Error}

Next, we present a construction for codes that correct a single right-translocation error. For this purpose, we first define the operation of permutation interleaving and the operation of code interleaving.
\begin{defn}
For vectors $\sigma_i,i\in[k]$, of lengths $m_i$ with $m_1\ge m_2\ge \cdots \ge m_k \ge m_1 -1$, the \emph{interleaved vector} $\sigma=\sigma_1\circ\sigma_2\circ\cdots\circ\sigma_k$ is obtained by alternatively placing the elements of  $\sigma_1,\sigma_2,\cdots,\sigma_k$ in order. That is, 
\begin{equation}
\sigma(j) = \sigma_i(\lceil j/k\rceil), \qquad 1\le j\le \sum_{i=1}^k m_i
\end{equation}
 where $i\equiv j\mod{k}$.
For a class of $k$ codes $C_i,i\in[k]$, let
\begin{equation}
C_1\circ\cdots\circ C_k = \{\sigma_1\circ\cdots\circ\sigma_k:\sigma_i\in C_i, i\in [k]\}.
\end{equation}
For example, for vectors $\sigma$ and $\pi$ of length $m$, we have 
\begin{equation}
\sigma\circ\pi = \Lp{\sigma(1),\pi(1),\sigma(2),\pi(2),\cdots,\sigma(m),\pi(m)}
\end{equation}
and for vectors $\sigma$ and $\pi$ of lengths $m$ and $m-1$ respectively, we have
\begin{equation}
\sigma\circ\pi = \Lp{\sigma(1),\pi(1),\sigma(2),\pi(2),\cdots,\sigma(m)}.
\end{equation}

\end{defn}
The following proposition introduces codes that can correct a single right-translocation error. The decoding algorithm is contained in the proof of the proposition.
\begin{prop}
\label{prop:right-translocation} Let $P_i, i\!=\!1, 2,$ be the set of odd and even numbers in $[n]$, respectively, and let $C_i$ be the set of even permutations of $P_i$ for $i = 1,2$. The interleaved code $C=C_1\circ C_2$ corrects a single right-translocation error.\end{prop}
\begin{figure}
\begin{center}
\includegraphics[width=\columnwidth]{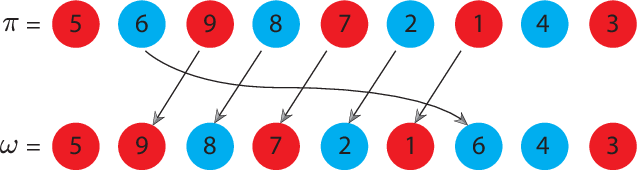}
\caption{The effect of the right-translocation error $\phi(2,7)$ on the codeword $\pi=(5,6,9,8,7,2,1,4,3)$. The result is the word $\omega = (5,9,8,7,2,1,6,4,3)$. } 
\label{fig_rtcode}
\end{center}
\end{figure}

\begin{proof}
Given the permutation $\omega\notin C$, we want to find the unique $\pi\in C$ such that $\omega=\pi \phi(i,j),\,i< j$. An example is shown in \figurename~\ref{fig_rtcode}, with $\omega = (5,9,8,7,2,1,6,4,3)$ and 
$\pi$ unknown to the decoder. 

The $k$th element of $\omega$ is out of place if $k\not\equiv\omega(k)\mod2$. It is easy to see that 
\[
i=k_{\min}:=\min\left\{ k:k\not\equiv\omega(k)\mod 2\right\} ,
\]
i.e., $i$ equals the smallest integer $k$ such that the $k$th element of $\omega$ is out of place. In the example shown in \figurename~\ref{fig_rtcode}, $i=2$. Finding $j$ is slightly more complicated since we must consider two different cases depending on the parity of the length $j-i$ of the right-translocation.

Let $k_{\max}:=\max\left\{ k:k\not\equiv\omega(k)\mod 2\right\}.$
If $j-i$ is odd, then $j=k_{\max}$. Otherwise, $j=k_{\max}+1$. That is, the right-translocation error is either $\phi(i,k_{\max})$ or $\phi(i,k_{\max}+1)$. Thus, the codeword $\pi$ either equals $\pi'=\omega\phi(k_{\max},i)$ or equals $\pi''=\omega\phi(k_{\max}+1,i)$. %This ambiguity comes from the fact that $\omega\left(k_{\max}\right)$ and $\omega\left(1+k_{\max}\right)$ have the same parity. 
In the example of \figurename~\ref{fig_rtcode}, we have
\begin{equation*}
\begin{split}
\pi'&= (5,6,9,8,7,2,1,4,3),\\
\pi''&=(5,4,9,8,7,2,1,6,3).
\end{split}
\end{equation*}
To find which of the two cases is correct, we proceed as follows.

Since $\omega = \pi''\phi(i,k_{\max}+1)$ and $\pi'=\omega\phi(k_{\max},i)$, we have 
\begin{equation*}
\begin{split}
\pi'&= \pi''\phi(i,k_{\max}+1)\phi(k_{\max},i)\\
&=\pi''\tau(i,k_{\max}+1).
\end{split}
\end{equation*}

Recall that if $j-i$ is odd, then $j=k_{\max}$, and if $j-i$ is even, then $j=k_{\max}+1$. Hence, $i\equiv k_{\max} +1 \mod2$. In both $\pi'$ and $\pi''$,
 the parity of the elements is the same as the parity of their positions. Thus, the transposition $ \tau(i,k_{\max}+1) $ affects only elements of the same parity as $i$. Hence, if $i$ is odd, then $\pi'_{P_2}=\pi''_{P_2}= \omega_{P_2}=\pi_{P_2}$ and if $i$ is even, then $\pi'_{P_1}=\pi''_{P_1}= \omega_{P_1}=\pi_{P_1}$.

Without loss of generality, assume that $i$ is even. Then $\pi'_{P_1}=\pi''_{P_1}= \omega_{P_1}=\pi_{P_1}$ and the subwords $\pi'_{P_2}$ and $\pi''_{P_2}$ differ in one transposition. Since $C_2$ has minimum transposition distance two, only one of $\pi'_{P_2}$ and $\pi''_{P_2}$ belongs to $C_2$, and so $\pi$ can be uniquely determined as being either equal to $\pi'$ or $\pi''$.
\end{proof}

The cardinality of the interleaved code $C_1 \circ C_2$ equals $\frac{1}{4} \left\lceil\frac{n}{2}\right\rceil! \left\lfloor\frac{n}{2}\right\rfloor!$, and its rate asymptotically equals
\[
\frac{\ln\left(1/4\right)+2\ln\left\lfloor\frac{n}{2}\right\rfloor!}{\ln n!}\sim\frac{n\ln n+O\left(n\right)}{n\ln n+O\left(n\right)}\sim1.
\]

\subsection{Correcting a Single Translocation Error}

The construction of the previous subsection can be extended to generate codes capable of correcting a single translocation error as stated in the following proposition. Although the proposition is stated for $n$ being a multiple of three, it can be easily extended to other cases. 

\begin{prop}
Suppose $n$ is a multiple of three. Let $P_i, i\!=\!1, 2,3,$ be the set of numbers in $[n]$ that are equal to $i$ modulo three, and let $C_i$ be the set of even permutations of $P_i$ for $i=1,2,3$. The interleaved code $C=C_1\circ C_2\circ C_3$ corrects a single translocation error.
\end{prop}
\begin{proof}
Suppose that $\pi$ is the stored permutation, $\omega$ is the retrieved permutation, and that the error is the translocation $\phi(i,j)$. If $|i-j|=1$, then $\phi(i,j)$ can be easily identified. Suppose that $|i-j|>1$. The translocation $\phi(i,j)$ moves $|i-j|$ elements of $\pi$ one position to the left, provided that $i<j$, or one position to the right, provided that $j<i$. In either case, one element moves in the ``opposite direction'' from the other elements. Hence, for $|i-j|>1$ the direction of the translocation (left or right) can be identified. 

Once the direction of the translocation is known, $i $ can be found as follows: if the error is a right-translocation, then 
\[i = \min\{k:\omega(k)\not\equiv k \mod{3}\},\]
and if the error is a left-translocation, then
\[i = \max\{k:\omega(k)\not\equiv k \mod{3}\}.\]

For simplicity, suppose the error is a right-translocation. The proof for left-translocations is similar. Let $k_{\max}:=\max\left\{ k:k\not\equiv\omega(k)\mod 3\right\}.$ We have the following three cases. If $j-i\equiv 0\mod3 $, then $j=k_{\max}+1$ and \[\omega(k_{\max})\equiv \omega(k_{\max}+1)\mod3.\] If $j-i\equiv 1\mod3 $, then  $j=k_{\max}$ and \[\omega(k_{\max})\not\equiv \omega(k_{\max}+1)\mod3.\] Finally, if $j-i\equiv 2\mod3 $, then $j=k_{\max}$ and \[\omega(k_{\max})\equiv \omega(k_{\max}+1)\mod3.\]

So, if $\omega(k_{\max})\not\equiv \omega(k_{\max}+1)\mod3$, then $j=k_{\max}$ and $\pi$ is uniquely determined as $ \omega \phi(k_{\max},i) $. Otherwise, the error is either $\phi(i,k_{\max})$ or $\phi(i,k_{\max}+1)$. Let $\pi'=\omega \phi(k_{\max},i)$ and $\pi'=\omega \phi(k_{\max}+1,i)$. Then, $\pi'=\pi''\tau(i,k_{\max}+1)$ and similar to the proof of Prop.~\ref{prop:right-translocation}, it can be shown that $\pi'$ and $\pi''$ are not both in $C $. Hence, $\pi$ can be determined as either being equal to $\pi'$ or $\pi''$.
\end{proof}

%The ideas described above can be extended to the case of more than one translocation error. For the case of right-translocation errors, the generalization is more complicated and it does not readily follow from the arguments above. In the generalized scheme, one can construct single translocation-error correcting codes for $n= 3s$, but with a slightly smaller rate than the one presented in this section. 

\begin{example}
Consider the single translocation-correcting code for $n=12$. For this case, we have $P_{1}=\{1,4,7,10\},P_{2}=\{2,5,8,11\},$ and $P_{3}=\{3,6,9,12\}$.

Suppose that the stored codeword is $\pi$, the error is $\phi(i,j)$, and the retrieved word is \[\omega=(1,6,10,8,3,7,5,11,12,4,2,9).\] 

Given $\omega$, the decoder first identifies the elements that are out of order, i.e., elements that are not equivalent to their positions modulo three -- in this case, $\{6,10,8,3,7,5\}$. Since more than two elements are out of order, we have $|i-j|>1$. Furthermore, since more than two elements have moved one position to the left, $\phi$ is a right-translocation. Observe that $k_{\max} = 7$ and that $\omega(k_{\max})\equiv\omega(k_{\max}+1)\mod3$. Hence, we let 
\begin{equation*}
\begin{split}
\pi' &= \omega\phi(7,2)=(1,5,10,8,7,11,4,2)\\
\pi'' &= \omega\phi(8,2)=(1,11,10,8,7,5,4,2).
\end{split}
\end{equation*}
We then have $\pi'_{\mm2}=(5,8,11,2)$ and $\pi''_{\mm2}=(11,8,5,2)$. Since only $\pi''_{\mm2}$ is an even permutation, the error is $\phi(2,8)$ and thus $\pi=(1,11,6,10,8,3,7,5,12,4,2,9)$.
 \end{example}

The cardinality of the code equals $\left(\frac{1}{2}\left(\frac{n}{3}\right)!\right)^{3},$ while its rate equals
\[
\frac{3\ln\left(1/2\right)+3\ln\left(\frac{n}{3}\right)!}{\ln n!}\sim\frac{n\ln n+O\left(n\right)}{n\ln n+O\left(n\right)}\sim1.
\]

%%%%%%%%%%%%%%%%%
%======================================
%%%%%%%%%%%%%%%%%

\section{$t$-translocation Error Correcting Codes}\label{sec:t-errors}

We describe next a number of general constructions for $t$-translocation error-correcting codes. We start with an extension of the interleaving methods from Section~\ref{sec:one-error}.

\subsection{Interleaving Codes in Hamming Metric}\label{sec:construction1}

We construct a family of codes with Ulam distance $2t+1$, length $n=s\left(2t+1\right)$ for some integer $s\ge4t+1$, and cardinality $M=\Lp{A_H(s,4t+1)}^{2t+1}$, where $A_H(s,d)$, as before, denotes the maximum size of a permutation code with length $s $ and minimum Hamming distance $d$. The construction relies on the use of $2t+1$ permutation codes, each with minimum Hamming distance at least $4t+1$. First, we present the proposed construction and then prove that the minimum Ulam distance of the code is at least $2t+1$.

For a given $n$ and $t$, where $n\equiv0\mod{2t+1}$, partition the set $[n]$ into $2t+1$ classes $\mm{i}$, each of size $s$, with \begin{equation} \label{eq:defn-Pi} \mm{i}=\left\{ j\in[n]:j\equiv i\mod{2t+1}\right\}, \qquad i\in[2t+1]. \end{equation} For example, for $t=2$ and $n=45,$ one has 
\begin{align*} \mm{1}&=\left\{ 1,6,\cdots,41\right\} ,\\\mm{2}&=\left\{ 2,7,\cdots,42\right\}, \\\mm{3}&=\left\{ 3,8,\cdots,43\right\}, 
\end{align*} and so on.

For $i\in [2t+1]$, let $C_{i}$ be a permutation code over $\mm{i}$ with minimum Hamming distance at least $4t+1$.\footnote{It is clear that instead of using permutation codes for interleaving, one can also use codes with
distinct symbols such as those described in~\cite{Davydov}.} The code $C$ is obtained by interleaving the codes $C_{i}$, i.e., $C=C_1 \circ \cdots \circ C_{2t+1}$, and is referred to as an interleaved code with $2t+1$ classes. In the interleaved code, the $s$ elements of $\mm i$ occupy positions that are equivalent to $i$ modulo $2t+1$.

The following theorem provides a lower-bound for the minimum Ulam distance of $C$. The proof of the theorem is presented after stating the required definitions, and three technical lemmas.

\begin{thm}\label{thm:min-dist}
Assume we are given three positive integers $s$, $t$, $n=s(2t+1)$, and a partition of $[n]$ of the form given in \eqref{eq:defn-Pi}. If, for $i\in[2t+1]$, $C_i$ is a permutation code over $P_i$ with minimum Hamming distance at least $4t+1$, then $C=C_1\circ\cdots\circ C_{2t+1}$ is a permutation code over $[n]$ with minimum Ulam distance greater than or equal to $2t+1$. \end{thm}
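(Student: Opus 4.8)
The plan is to show that if two codewords $\pi,\omega \in C$ satisfy $\tldist(\pi,\omega) \le 2t$, then they must coincide. By left-invariance of $\tldist$, it suffices to bound the length of a longest common subsequence: $\tldist(\pi,\omega) \le 2t$ means $l(\pi,\omega) \ge n - 2t$, i.e. there is a common subsequence $S$ of $\pi$ and $\omega$ that omits at most $2t$ of the $n$ symbols. The heart of the argument is a pigeonhole-and-projection step: since $|[n] \setminus S| \le 2t$ and $[n]$ is partitioned into $2t+1$ classes $P_1,\dots,P_{2t+1}$, at least one class $P_i$ is entirely contained in $S$. For that class, I will argue that the relative order of the $s$ symbols of $P_i$ inside $\pi$ is exactly their relative order inside the subsequence $S$, and likewise for $\omega$; since $S$ is common to both, the $P_i$-component permutations of $\pi$ and $\omega$ agree \emph{as sequences}, not merely up to some bounded displacement. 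That is, the projections $\pi_{P_i}$ and $\omega_{P_i}$ are literally equal.

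The next step is to upgrade this single-coordinate agreement to a bound on the Hamming distance between the component codewords in the \emph{other} classes. Here the interleaving structure is essential: deleting $\le 2t$ symbols from $\pi$ to obtain $S$, and then restricting $S$ to a class $P_j$ with $j \ne i$, can only delete those symbols of $P_j$ that were among the $\le 2t$ deleted symbols overall — so $\pi_{P_j}$ and the $P_j$-restriction of $S$ differ in at most $2t$ deletions, and the surviving symbols keep their relative order. The same holds for $\omega_{P_j}$. Combining, $\pi_{P_j}$ and $\omega_{P_j}$ share a common subsequence of length at least $s - 4t$ (at most $2t$ symbols lost on each side), hence $\tldist(\pi_{P_j},\omega_{P_j}) \le 4t$, and therefore, via the bound $\hdist \le n \cdot \tldist$ is too weak — instead I use that a longest common subsequence of length $\ge s-4t$ forces at most $4t$ non-fixed points once we know the two permutations are over the \emph{same} ground set $P_j$; more carefully, $\hdist(\pi_{P_j},\omega_{P_j}) \le 2 \cdot \tldist(\pi_{P_j},\omega_{P_j})$ by the left half of \eqref{eq:transpos-hamming}-type reasoning combined with $l$, giving $\hdist(\pi_{P_j},\omega_{P_j}) \le 4t < 4t+1$. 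Since $C_j$ has minimum Hamming distance at least $4t+1$, this forces $\pi_{P_j} = \omega_{P_j}$ for every $j$.

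Finally, once $\pi_{P_j} = \omega_{P_j}$ for all $j \in [2t+1]$, the interleaving definition reconstructs $\pi$ and $\omega$ identically: $\pi(m) = (\pi_{P_i})(\lceil m/(2t+1)\rceil)$ where $i \equiv m \pmod{2t+1}$, and the same for $\omega$, so $\pi = \omega$. This contradicts $\pi \ne \omega$, establishing $\tldist(\pi,\omega) \ge 2t+1$ for all distinct codewords.

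\textbf{Main obstacle.} The delicate point is the claim that restricting the common subsequence $S$ to a class $P_j$ costs at most $2t$ deletions \emph{on the $\pi$-side and at most $2t$ on the $\omega$-side simultaneously}, and — more subtly — that bounding $\tldist(\pi_{P_j},\omega_{P_j})$ correctly translates into a Hamming bound on the component permutations. The translocation distance between two arbitrary permutations of the same set can in principle be much smaller than their Hamming distance, so I cannot naively push a $\tldist$ bound into a $\hdist$ bound of the same order. The resolution is that both $\pi_{P_j}$ and $\omega_{P_j}$ agree with a \emph{common} sequence (the $P_j$-restriction of $S$) after few deletions from each; the fixed points of that common sequence are fixed points shared by $\pi_{P_j}$ and $\omega_{P_j}$, so $|P_j| - \hdist(\pi_{P_j},\omega_{P_j})$ is at least $s$ minus the total number of symbols deleted from either side, i.e. $\hdist(\pi_{P_j},\omega_{P_j}) \le 4t$. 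Getting this counting exactly right — in particular tracking that a symbol in $[n]\setminus S$ lies in only one class, so the $2t$ "lost" symbols are distributed among the classes rather than hurting every class by $2t$ — is where the proof must be careful, and is presumably why the factor $4t+1$ (rather than $2t+1$) appears in the Hamming-distance requirement on the $C_i$.
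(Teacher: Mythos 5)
Your overall architecture is close to the paper's: reduce to a long common subsequence, pigeonhole one class that survives intact, and contradict the Hamming distance of the component codes. But the step you yourself flag as the main obstacle is where the proof breaks. You claim that because $\pi_{P_j}$ and $\omega_{P_j}$ share a common subsequence of length at least $s-4t$ (in fact $s-2t$, since the deleted set $[n]\setminus S$ is the same on both sides), they have at most $4t$ non-fixed points, and you support this with ``$\hdist\le 2\,\tldist$''. That inequality is false: for $\pi_{P_j}=[a_1a_2\cdots a_s]$ and $\omega_{P_j}=[a_2\cdots a_s a_1]$ one has $\tldist=1$ but $\hdist=s$. An element can belong to a common subsequence of two permutations without occupying the same position in both, so ``member of $S_{P_j}$'' does not imply ``fixed point''. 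The inequality \eqref{eq:transpos-hamming} you invoke concerns the transposition (Cayley) distance $\tpdist$, not $\tldist$; the correct relation for $\tldist$ is \eqref{eq:transloc-hamming}, namely $\hdist\le n\,\tldist$, which the paper notes is sharp and which you correctly reject as too weak. So the passage from the LCS bound to $\hdist(\pi_{P_j},\omega_{P_j})\le 4t$ is unjustified as written, and it is exactly the hard part of the theorem.

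The missing idea is positional anchoring by the intact class, and you already hold the ingredient without using it. Since some class $P_{i^*}$ is disjoint from $[n]\setminus S$, its $s$ elements appear in the same order in $\pi$ and $\omega$ and occupy the fixed interleaved positions $i^*, i^*+(2t+1),\dots$; they therefore cut both permutations into aligned segments, each containing exactly one element of every other class. Any $b\in P_j\cap S$ has the same relative order with respect to every element of $P_{i^*}$ in $\pi$ as in $\omega$ (all of these elements lie in the single common subsequence $S$), hence lies in the same segment in both, hence occupies the same index of the projection onto $P_j$ --- it is a genuine fixed point. This anchoring is precisely what the paper's notions of pivot, ``correct order'', and segments (Lemmas \ref{lem:I_star_sym}, \ref{lem:projection}, and \ref{lem:what}) formalize, there via a midpoint permutation within translocation distance $t$ of each codeword and a count of two disturbed Hamming coordinates per translocation. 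If you carry the anchoring through, your direct LCS route gives $\hdist(\pi_{P_j},\omega_{P_j})\le|P_j\setminus S|\le 2t$, which is even stronger than the $4t$ the paper needs; but without it the proof does not go through.
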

\begin{cor}
For the code $C$ of Theorem \ref{thm:min-dist} and distinct $\sigma,\pi\in C$, the length of the longest common subsequence of $\pi$ and $\sigma$ is less than $n-2t$.
\end{cor}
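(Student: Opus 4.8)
\textbf{Proof proposal for Theorem~\ref{thm:min-dist}.}

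The plan is to show that for distinct codewords $\sigma,\pi\in C$, any common subsequence has length less than $n-2t$, which by Proposition relating $\tldist$ and $l$ gives $\tldist(\sigma,\pi)=n-l(\sigma,\pi)\ge 2t+1$ (the corollary is then an immediate restatement). Write $\sigma=\sigma_1\circ\cdots\circ\sigma_{2t+1}$ and $\pi=\pi_1\circ\cdots\circ\pi_{2t+1}$ with $\sigma_i,\pi_i\in C_i$. Since $\sigma\ne\pi$, there is some index $a\in[2t+1]$ with $\sigma_a\ne\pi_a$; because $C_a$ has minimum Hamming distance at least $4t+1$, the component permutations $\sigma_a$ and $\pi_a$ agree in at most $s-(4t+1)$ positions, i.e.\ they disagree in at least $4t+1$ positions.

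First I would fix a longest common subsequence $L$ of $\sigma$ and $\pi$ and analyze how many of its entries can come from the class $P_a$. The key structural observation is that the elements of $P_a$ sit exactly in the positions $\equiv a\pmod{2t+1}$ in both $\sigma$ and $\pi$ (this is the defining feature of interleaving), and there are precisely $s$ such positions, the $r$-th one carrying $\sigma_a(r)$ in $\sigma$ and $\pi_a(r)$ in $\pi$. So a common subsequence, when restricted to its $P_a$-entries, induces a common subsequence of the two length-$s$ words $\sigma_a$ and $\pi_a$ in which, moreover, each matched symbol keeps its \emph{absolute} position index (because between two consecutive $P_a$-slots there are exactly $2t$ slots of the other classes, and the subsequence is order-preserving in the full word). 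A symbol $x\in P_a$ can therefore contribute to $L$ only if $\sigma_a^{-1}(x)=\pi_a^{-1}(x)$, i.e.\ only if $x$ is a fixed point of $\pi_a\sigma_a^{-1}$; hence at most $s-(4t+1)$ elements of $P_a$ appear in $L$. Combining, $l(\sigma,\pi)=|L|\le (2t\cdot s) + (s-(4t+1)) = n-(4t+1) < n-2t$, giving $\tldist(\sigma,\pi)\ge 2t+1$.

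The main obstacle, and the step I would write most carefully, is the claim in italics above: that a common subsequence of the interleaved words cannot "shift" an element of $P_a$ from one $P_a$-slot to another. Intuitively this holds because the $P_a$-entry at the $r$-th $P_a$-slot of $\sigma$ is preceded by exactly $r$ entries of $P_a$ and $(r-1)(2t+1)+ (a-1)$ other entries, and similarly in $\pi$; an order-preserving matching that sent $\sigma_a(r)$ to position-slot $r'\ne r$ of $\pi$ would have to match strictly fewer than $r$ (or more than $r$) of the $P_a$-entries of $\sigma$ before it, which forces a loss elsewhere — one must make this accounting precise, perhaps by arguing slot-by-slot that the number of $L$-entries lying in $P_a$-slots with index $\le r$ is the same on both sides, so the matching on $P_a$ is "rank-preserving" and hence position-preserving. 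Once this bookkeeping lemma is in hand the rest is the short counting displayed above; I would also remark that the argument uses only that $\sigma_a\ne\pi_a$ for \emph{one} class, so the $4t+1$ Hamming requirement on each $C_i$ is exactly what is needed to absorb the worst case where all other classes match perfectly.
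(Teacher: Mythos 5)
Your reduction of the corollary to the theorem (via $\tldist(\sigma,\pi)=n-l(\sigma,\pi)$) is fine, but the direct LCS argument you propose for the theorem itself has a fatal gap, and the bound you derive, $l(\sigma,\pi)\le n-(4t+1)$, is in fact false. The step you flag as needing care --- that an element $x\in P_a$ can belong to a longest common subsequence only if $\sigma_a^{-1}(x)=\pi_a^{-1}(x)$ --- is not merely unproven but wrong. Counterexample with $t=1$, $s=5$, $n=15$: write $\sigma=\sigma_1\circ\sigma_2\circ\sigma_3=[a_1b_1c_1a_2b_2c_2\cdots a_5b_5c_5]$ and let $\pi=[a_2b_2c_2a_3b_3c_3\cdots a_5b_5c_5\,a_1b_1c_1]$, i.e., every component is cyclically shifted by one slot. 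Each class $P_i$ still occupies exactly the positions $\equiv i\pmod 3$, and $\hdist(\sigma_i,\pi_i)=5=4t+1$ for every $i$, so taking $C_i=\{\sigma_i,\pi_i\}$ gives a code satisfying the hypotheses of Theorem~\ref{thm:min-dist} containing both $\sigma$ and $\pi$. Yet $[a_2b_2c_2\cdots a_5b_5c_5]$ is a common subsequence of length $12=n-(2t+1)$, so $\tldist(\sigma,\pi)=2t+1=3$, not $\ge 4t+1=5$; moreover this LCS contains the four elements $a_2,\dots,a_5$ of $P_1$, none of which is slot-preserving. The flaw is in assuming the worst case is ``all other classes match perfectly'': when all $2t+1$ classes shift coherently, the relative order a common subsequence must respect is preserved without any single class being position-pinned, so your additive accounting $(2ts)+(s-(4t+1))$ charges a ``loss elsewhere'' that never materializes.

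This is exactly why the paper does not argue directly on the LCS. Its proof of Theorem~\ref{thm:min-dist} is a midpoint argument: if $\tldist(\sigma,\pi)\le 2t$, there is an $\omega$ with $\tldist(\sigma,\omega)\le t$ and $\tldist(\pi,\omega)\le t$; Lemma~\ref{lem:I_star_sym} produces at least $t+1$ pivot classes whose projections in $\omega$ keep the correct interleaved order, and Lemmas~\ref{lem:projection} and~\ref{lem:what} convert $t$ translocations into a Hamming perturbation of at most $2t$ on the projection onto the class $k$ where the codewords differ, measured relative to a pivot; this forces $\hdist(\sigma_{P_k},\pi_{P_k})\le 4t$, contradicting the $4t+1$ hypothesis. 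The pivot machinery is precisely what handles the coordinated-shift scenario your bookkeeping misses, and since the example above attains distance exactly $2t+1$, no correct argument can deliver your stronger conclusion; the claim you would need to salvage the approach cannot be repaired.
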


For convenience, we introduce an alternative notation for translocations. Let the mapping $\psi:\Sn\to\Sn$ be defined as follows. For a permutation $ \sigma \in\Sn$, an integer $\ell$, and $a\in[n]$, let $\psi {(a,\ell)} \sigma$ denote the permutation obtained from $\sigma$ by moving the element $a$ exactly $|\ell|$ positions to the right if $\ell\ge0$ and to the left if $\ell\le0$. In other words, for any permutation $\sigma\in\Sn$ and $a\in[n]$, \[\psi {(a,\ell)}\sigma=\sigma\phi\Lp{\sigma^{-1}(a),\sigma^{-1}(a)+\ell}.\] For example, we have  $\psi {(4,3)}(3,4,2,5,1)=(3,2,5,1,4)=(3,4,2,5,1)\phi(2,5)$.  Note that the mapping $\psi$ is written multiplicatively. 
Furthermore, with slight abuse of terminology, $\psi {(a,\ell)}$ may also be called a translocation.

Consider $\sigma,\pi \in \Sn$ with distance $\tldist(\sigma,\pi)=m$. A \emph{transformation} from $\sigma$ to $\pi$ is a sequence $\psi_1,\psi_2,\cdots,\psi_m$ of translocations such that $\pi = \psi_m\cdots\psi_2\psi_1 \sigma$.
%We may also write the transformation $\psi_1,\cdots,\psi_m$ multiplicatively as $\psi_m\cdots\psi_1$.

Let $b_1< b_2<\cdots<b_m$ be the elements of $[n]$ that are not in the longest common subsequence of $\sigma$ and $\pi$. Each $b_k$ is called a \emph{displaced} element. The set $\set{b_1,\cdots,b_m}$ is called the \emph{set of displaced elements} and is denoted by $\sd \sigma \pi$.

The \emph{canonical} transformation from $\sigma$ to $\pi$ is a transformation $\psi_m\cdots\psi_2\psi_1$ with $\psi_k=\psi\Lp{b_k,\ell_k}$ for appropriate choices of $\ell_k,k\in [m]$. In other words, the canonical transformation operates only on displaced elements and corresponds to a shortest sequence of translocations that transform $ \sigma $ to $ \pi $.

As an example, consider $\sigma=(1,2,3,4,5,6,7,\cdots,15)$ and $\pi=(1,3,4,5,6,2,7,\cdots,15)$. Here, $n=15$, $t=1$, and $s=3$. The canonical transformation is $\psi {\left(2,4\right)}$ and we have \begin{equation} \pi=\psi {\left(2,4\right)}\sigma=(1,3,4,5,6,2,7,\cdots,15).\label{eq:running-ex} \end{equation} In this example, $\sd \sigma \pi = \set 2$.

Let $\pi_l = \psi_l\cdots\psi_2\psi_1\sigma$ for $1\le l\le m$. An element $a$ is \emph{moved over} an element $b$ in step $j$ if there exists a translocation in the canonical transformation $\psi=\psi {(a,\ell)}$ such that $a$ is on the left (right) of $b$ in $\pi_{j-1}$ and on the right (left) side of $b$ in $\pi_j$. That is, $\psi$ moves $a$ from one side of $b$ to the other side. In the above example with $\psi {(2,4)}$, $2$ is moved over $4$ but it is not moved over $7$.

An element $k\in\left[2t+1\right]$ is called a \emph{$\sigma,\pi-$pivot}, or simply a pivot, if no element of $\mm{k}$ is displaced, i.e., $\mm k \cap \sd \sigma \pi =\emptyset$. In the example corresponding to (\ref{eq:running-ex}), the pivots are $1$ and $3$.

For $I\subseteq [2t+1]$, define $P_I$ as $\cup_{i\in I}P_i$. Also, recall that for $\omega \in\Sn $ and a set $P$, $\omega_P$ denotes the projection of $\omega$ onto $P$. For example, for $t=1$, $n=15$, $\omega=(1,2,3,\cdots,15)$, and $I=\left\{ 1,3\right\} $, we have $\omega_{P_I}=(1,3,4,6,7,9,10,12,13,15)$. We say that $\omega_{P_I}$ has a \emph{correct order} if for every $i,j\in I,i<j$, elements of $\mm{i}$ and $\mm{j}$ appear alternatively in $\omega_{P_i \cup P_j}$, starting with an element of $\mm{i}$. In the example above, $\omega_{P_I}$ has a correct order.

Consider $\omega \in\Sn$ and suppose that  $\omega_{P_i} =(a_1,a_2,\cdots, a_s)$. The elements of the set $\mm i=\set{a_1,\cdots,a_s}$ may be viewed as separating subsequences of $\omega$ consisting of elements not in $\mm i$. That is, we may write \[\omega = r_0 a_1 r_1 a_2 \cdots a_s r_s\] where the $r_l$'s are non-intersecting subsequences of $[n]\backslash \mm i$. For each $l,0\le l \le s$, the subsequence $r_l$ is called the \emph{$l$-th segment of $\omega$ with respect to $\mm i$} and is denoted by $\seg l \omega i$. Such a segmentation is shown next for the permutation \[\omega=(c_3,b_4,a_1,b_2,b_3,a_2,a_3,c_1,a_4,b_1,c_2,c_4).\] Each segment is marked with a bracket:
%Note that there are two segments of length three, one segment of length two, two of length one, and one of length zero.
\[\omega=(
\undergroup{c_3,b_4}
,a_{1},
\undergroup{b_2,b_3}
,a_{2},
\undergroup{\ \ }
a_{3},
\undergroup{c_1}
,a_{4},
\undergroup{b_1,c_2,c_4}).\]
To better visualize the subsequences in question, we may replace each element of $\mm i$ by $\star$ and write $\omega$ as 
\[\omega=(c_3,b_4\ \star\ b_2,b_3\ \star \ \star\  c_1 \ \star\  b_1,c_2,c_4).\]
We have, for example, $\seg 0 \omega i=(c_3,b_4)$ and $\seg 2 \omega i = ()$.

\begin{defn}
Consider $i,j\in\left[2t+1\right]$, $\mm{i}=\left\{ a_{1},\cdots,a_{s}\right\}$, and $\omega \in\Sn $. Suppose, without loss of generality, that 
$\omega_{P_i}=(a_{1},a_2,\cdots, a_{s})$. The sequence $\what \omega ji$ is defined as follows.
\begin{itemize}
\item If $i=j$, then $\what \omega ji=\what \omega ii$ equals $\omega_{P_i}$. 
\item If $j>i$, then, for $1\le l \le s$, let $\what \omega ji(l)=\seg l {\omega_{P_i \cup P_j}} i$ whenever $\seg l {\what \omega ji} i$ has length one, and let $\what \omega ji(l)=\epsilon$ otherwise. 
Here, $\epsilon$ is a special notational symbol.
\item If $j<i$, then, for $1\le l \le s$, let $\what \omega ji(l)=\seg {l-1} {\omega_{P_i \cup P_j}} i$ whenever $\seg {l-1} {\what \omega ji} i$ has length one, and let $\what \omega ji(l)=\epsilon$ otherwise.
\end{itemize}
\end{defn}

As an example, if $\mm i=\set{ a_{1},a_{2},a_{3},a_{4},a_{5}} $, $\mm j=\set{ b_{1},b_{2},b_{3},b_{4},b_{5}} $, $j<i$, and $\omega_{P_i \cup P_j} =(b_{1},a_{1},a_{2},b_{2},b_{3},a_{3},b_{4},a_{4},b_{5},a_{5})$, the segments of 
$\omega_{P_i \cup P_j}$ with respect to $P_i$ are $\Lp{b_1}$, $()$, $\Lp{b_2,b_3}$, $\Lp{b_4}$, and $\Lp{b_5}$, in the given order, and we have $\what \omega ji=(b_{1},\epsilon,\epsilon, b_{4},b_{5})$.

%%%%%%%%%%%%%%%%%%%%%%%%%%%%%%%%%%%%%%%%%%%%%%%%%%%%%%%%%%%%%%%%%%%%%%%%%%%%%%%%%%%%%%%%%%%%%%%%%%%

%%%%%%%%%%%% LEM
\begin{lem} \label{lem:I_star_sym} Consider the interleaved code $C$ of Theorem \ref{thm:min-dist} and let $\sigma \in C$. Furthermore, let $ \omega \in \Sn$ be such that $\tldist(\sigma,\omega)\le t$. There exists at least one subset $I\subseteq\left[2t+1\right]$ of size at least $t+1$ such that $\omega_{P_I}$ has a correct order. \end{lem}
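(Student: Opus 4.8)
The plan is to take $I$ to be the set of $\sigma,\omega$-pivots and then verify the correct-order condition one pair of classes at a time. First I would invoke the Proposition that $\tldist(\sigma,\omega)=n-l(\sigma,\omega)$ to conclude that the set of displaced elements has size $\left|\sd{\sigma}{\omega}\right|=\tldist(\sigma,\omega)\le t$. (Since $\sd{\sigma}{\omega}$ is symmetric in its arguments, it is irrelevant whether we regard the $t$ translocations as taking $\sigma$ to $\omega$ or $\omega$ to $\sigma$.) Because every element of $[n]$ lies in exactly one of the $2t+1$ classes $\mm{1},\dots,\mm{2t+1}$, the at most $t$ displaced elements meet at most $t$ of these classes; hence at least $(2t+1)-t=t+1$ of the classes contain no displaced element. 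Let $I$ be the set of indices of those classes, so that $|I|\ge t+1$ and every $k\in I$ is a $\sigma,\omega$-pivot.

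Next I would fix $i,j\in I$ with $i<j$ and show that $\omega_{P_i\cup P_j}$ alternates between $\mm{i}$ and $\mm{j}$, starting with an element of $\mm{i}$. The key point is that, since $i$ and $j$ are pivots, no element of $\mm{i}\cup\mm{j}$ is displaced, so all of $\mm{i}\cup\mm{j}$ lies in the fixed longest common subsequence $L(\sigma,\omega)$. As $L(\sigma,\omega)$ is a common subsequence of $\sigma$ and $\omega$, its elements occur in the same relative order in both permutations; restricting that order to $\mm{i}\cup\mm{j}$ shows that $\sigma_{P_i\cup P_j}$ and $\omega_{P_i\cup P_j}$ list the elements of $\mm{i}\cup\mm{j}$ in the same order. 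Finally, in the interleaved codeword $\sigma=\sigma_1\circ\cdots\circ\sigma_{2t+1}$ the elements of $\mm{i}$ occupy the positions congruent to $i$ and those of $\mm{j}$ the positions congruent to $j$ modulo $2t+1$, so, since $i<j$, the sequence $\sigma_{P_i\cup P_j}$ alternates beginning with an element of $\mm{i}$. Therefore $\omega_{P_i\cup P_j}$ does as well, and letting $i,j$ range over $I$ yields that $\omega_{P_I}$ has correct order.

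The argument is mostly bookkeeping and I do not anticipate a serious obstacle; the one step that deserves to be written out carefully is the assertion that containment of all of $\mm{i}\cup\mm{j}$ in a single longest common subsequence forces the relative order of $\mm{i}\cup\mm{j}$ to be identical in $\sigma$ and $\omega$. This is immediate from the definition of a common subsequence, but it is the precise place where the interleaving structure of $C$ is combined with the hypothesis $\tldist(\sigma,\omega)\le t$, so it should not be skipped.
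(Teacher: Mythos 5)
Your proposal is correct and follows the same route as the paper: count the at most $t$ displaced elements, conclude there are at least $t+1$ pivot classes, and take $I$ to be their indices. The paper dismisses the correct-order verification as clear, whereas you spell it out via the longest common subsequence; this is a welcome elaboration but not a different argument.
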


\begin{proof} There are at most $t$ displaced elements, and thus, at most $t$ classes containing a displaced element. Hence, there exist at least $2t+1-t=t+1$ classes without any displaced elements and, consequently, at least $t+1$ $\sigma,\omega-$pivots. Let $I$ be the set consisting of these pivots. It is clear that $\omega_{P_I}$ obtained in this way has a correct order which proves the claimed result. \end{proof}

%%%%%%%%%%%% LEM
\begin{lem}\label{lem:projection} For all positive integers $s$ and $t$ and all permutations $\sigma,\omega\in \Sn $, with $n=(2t+1)s$, if $i^*$ is a $\sigma,\omega-$pivot, then for $j\in [2t+1]$, \[\hdist\Lp{\what \sigma j{i^*} , \what \omega j{i^*}}\le 2 \tldist\Lp{\sigma,\omega}.\] \end{lem}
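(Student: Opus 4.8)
The plan is to track the effect of a single translocation $\psi(a,\ell)$ in the canonical transformation from $\sigma$ to $\omega$ on the derived sequences $\what{\cdot}{j}{i^*}$, and then add up over the $m = \tldist(\sigma,\omega)$ steps. Fix the pivot $i^*$; since $i^*$ is a pivot, no element of $P_{i^*}$ is displaced, so throughout the canonical transformation the subsequence formed by the elements of $P_{i^*}$ keeps its relative order fixed — this is what lets us speak of ``the $l$-th segment with respect to $P_{i^*}$'' coherently along the whole transformation. The case $j=i^*$ is immediate: $\what{\omega}{i^*}{i^*} = \omega_{\{i^*\}}$ is just the projection onto a pivot class, which equals $\sigma_{\{i^*\}}$, so the Hamming distance is $0 \le 2\tldist(\sigma,\omega)$. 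So assume $j \neq i^*$; by symmetry of the two cases $j>i^*$ and $j<i^*$ (they differ only by an index shift in which segment is recorded, as in the definition), it suffices to handle one of them, say $j>i^*$.

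The key step is a \emph{single-translocation claim}: if $\pi'$ is obtained from $\pi$ by one translocation $\psi(a,\ell)$ in the canonical transformation, then $\hdist\bigl(\what{\pi'}{j}{i^*}, \what{\pi}{j}{i^*}\bigr) \le 2$. To see this, note first that if $a \notin P_{i^*}\cup P_j$, then moving $a$ does not change the interleaving pattern of $P_{i^*}$ and $P_j$ at all, so $\what{\pi'}{j}{i^*} = \what{\pi}{j}{i^*}$ and the distance is $0$. If $a \in P_{i^*}$, then $a$ is not displaced (as $i^*$ is a pivot) and the canonical transformation never operates on it, so this case does not arise. The remaining case is $a \in P_j$: the element $a$ leaves one segment $\seg{l}{\pi_{P_{i^*}\cup P_j}}{i^*}$ and enters another segment $\seg{l'}{\pi_{P_{i^*}\cup P_j}}{i^*}$ (with $l,l'$ the indices before/after the move, with respect to the fixed $P_{i^*}$-skeleton). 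Only the entries of $\what{\cdot}{j}{i^*}$ at coordinates $l$ and $l'$ can change — every other segment is unaffected as a subsequence of $P_j$-elements and as a ``has length one'' test. Hence at most two coordinates change, giving Hamming distance at most $2$. Summing over the $m$ steps of the canonical transformation and using the triangle inequality for the Hamming metric,
\[
\hdist\bigl(\what{\sigma}{j}{i^*}, \what{\omega}{j}{i^*}\bigr) \le \sum_{k=1}^{m} 2 = 2m = 2\,\tldist(\sigma,\omega),
\]
as claimed.

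The main obstacle I anticipate is making the ``at most two coordinates change'' step airtight when $a$ is itself displaced and the translocation that moves $a$ is long: one has to be careful that the coordinates of $\what{\cdot}{j}{i^*}$ indexed by the \emph{$P_{i^*}$-segments that $a$ passes over} are genuinely unchanged, i.e. that passing $a$ over several $P_{i^*}$-elements only relabels which segment $a$ sits in rather than altering the lengths or contents of the intervening segments as sequences over $P_j$. Since $a \in P_j$ and the intervening elements are in $P_{i^*}$, the intervening segments contain no $P_j$-elements at all, so their recorded values (all $\epsilon$, since a length-zero segment of $\what{\cdot}{j}{i^*}$ fails the ``length one'' test... or more precisely, whatever they were, they are untouched) do not change — this needs to be stated cleanly but is not deep. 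A secondary bookkeeping point is the index shift between the $j>i^*$ and $j<i^*$ definitions (segment $l$ versus segment $l-1$); I would dispose of it by remarking that the argument is insensitive to this shift since it only relabels which coordinate of $\what{\cdot}{j}{i^*}$ records which segment, and the ``at most two coordinates affected'' bound is unchanged.
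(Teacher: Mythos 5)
Your proposal is correct and follows essentially the same route as the paper's proof: both reduce to the claim that a single translocation in the canonical transformation either fixes $\what{\cdot}{j}{i^*}$ entirely (when the moved element lies outside $P_{i^*}\cup P_j$) or alters at most the two segments it leaves and enters (when it lies in $P_j$), the case of an element of $P_{i^*}$ being excluded because $i^*$ is a pivot. The paper phrases the accumulation over the $m$ steps as an induction with the triangle inequality rather than a sum, but the content is identical.
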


\begin{proof}
Assume  $\tldist\Lp{\sigma,\omega}= m$ and let  $\psi_m\cdots\psi_2\psi_1$ be the canonical transformation from $\sigma$ to $\omega$, so that $\omega=\psi_m\cdots\psi_2\psi_1\sigma$. We prove the lemma by induction on $m$. Clearly, if $m=0$, then \[\hdist\Lp{\what \sigma j{i^*} , \what \omega j{i^*}}=0.\] 

Let $\pi=\psi_{m-1}\cdots\psi_2\psi_1 \sigma$. As the induction hypothesis, assume that \[\hdist\Lp{\what \sigma j{i^*} , \what \pi j{i^*}}\le 2(m-1).\] By the triangle inequality, it suffices to show that \begin{equation}\label{eq:induct} \hdist\Lp{\what \pi j{i^*} , \what \omega j{i^*}}\le 2. \end{equation}

Suppose $\psi_m = \psi {(b,\ell)}$ so that $\omega=\psi_m \, \pi$. Since $i^*$ is a pivot, we have $b\notin \mm{i^*}$. We consider two cases: $b\notin\mm{j}$ and $b\in\mm{j}$. First, suppose $b\notin\mm{j}$. Since $b\notin {P_{i^*}\cup P_j}$, we have $\pi_{P_{i^*}\cup P_j}=\omega_{P_{i^*}\cup P_j}$ and thus $\hdist\Lp{\what \pi j{i^*} , \what \omega j{i^*}}=0$.

On the other hand, suppose $b\in P_j$. Then, $b$ appears in $\seg k {\pi_{P_{i^*} \cup P_j}} {i^*}$ of $\pi_{P_{i^*} \cup P_j}$ and in $\seg l {\omega_{P_{i^*} \cup P_j}}  {i^*}$ of $\omega_{P_{i^*} \cup P_j}$,  for some $k,l$. The only segments affected by the translocation $\psi_m$ are $\seg k {\pi_{P_{i^*} \cup P_j}} {i^*}$ and $\seg l {\omega_{P_{i^*} \cup P_j}} {i^*}$, and thus, for $p\in [2t+1]\backslash\set{l,k}$, we have $\seg p {\pi_{P_{i^*} \cup P_j}} {i^*}=\seg p {\omega_{P_{i^*} \cup P_j}} {i^*}$. Hence, for $p\in [2t+1]\backslash\set{l,k}$, we find $\what \pi j{i^*}(p)=\what \omega j{i^*}(p)$, implying that $\hdist\Lp{\what \pi j{i^*} , \what \omega j{i^*}}\le 2$. 
\end{proof}

%%%%%%%%%%%%%% LEM
\begin{lem} \label{lem:what} Consider the interleaved code $C$ of Theorem \ref{thm:min-dist}. Let $\sigma \in C$ and $\omega \in \Sn$ such that $\tldist(\sigma,\omega)\le t$. If $I \subseteq[2t+1]$ is of size at least $t+1$ and $\omega_{P_I}$ has a correct order, then 
\begin{enumerate}
\item for each $i\in I$, $\hdist\left(\sigma_{P_i},\omega_{P_i}\right)\le2t$ and,
\item for $i\in I$ and $j\notin I$, $\hdist \left( \sigma_{P_j},\what \omega ji \right)\le2t$.
\end{enumerate}
\end{lem}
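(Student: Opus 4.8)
The plan is to reduce both statements to Lemma~\ref{lem:projection} by locating a $\sigma,\omega$--pivot inside $I$. Since $\tldist(\sigma,\omega)\le t$, the displaced set $\sd{\sigma}{\omega}$ has at most $t$ elements, so at most $t$ of the $2t+1$ classes $P_1,\dots,P_{2t+1}$ meet it and at least $t+1$ of them are pivots, exactly as in the proof of Lemma~\ref{lem:I_star_sym}. As $\left|I\right|\ge t+1$ while there are only $2t+1$ classes, $I$ must contain at least one pivot $i^{*}$. I will use two bookkeeping observations. First, since $\sigma\in C$ equals the interleaving $C_1\circ\cdots\circ C_{2t+1}$, for any two classes $p,q$ the projection $\sigma_{P_p\cup P_q}$ has correct order, so every relevant segment in the definition of $\what{\sigma}{q}{p}$ is a singleton; hence $\what{\sigma}{q}{p}=\sigma_{P_q}$ with no occurrences of $\epsilon$. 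Second, whenever $p,q\in I$ the hypothesis that $\omega_{P_I}$ has correct order forces $\omega_{P_p\cup P_q}$ to have correct order as well, and by the same reasoning $\what{\omega}{q}{p}=\omega_{P_q}$.

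For part~1, fix $i\in I$ and apply Lemma~\ref{lem:projection} with pivot $i^{*}$ and target class $i$, obtaining $\hdist\Lp{\what{\sigma}{i}{i^{*}},\what{\omega}{i}{i^{*}}}\le 2\tldist(\sigma,\omega)\le 2t$. By the two observations above (using $i,i^{*}\in I$) the left-hand side equals $\hdist\Lp{\sigma_{P_i},\omega_{P_i}}$, which is the asserted bound.

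For part~2, fix $i\in I$ and $j\notin I$. Lemma~\ref{lem:projection} with pivot $i^{*}$ and target class $j$, together with $\what{\sigma}{j}{i^{*}}=\sigma_{P_j}$, already gives $\hdist\Lp{\sigma_{P_j},\what{\omega}{j}{i^{*}}}\le 2t$, which settles the case $i=i^{*}$. For a general $i\in I$ I would first sharpen Lemma~\ref{lem:projection}: inspecting its inductive proof, when the base class is the pivot $i^{*}$ the only translocations of the canonical transformation from $\sigma$ to $\omega$ that can change the sequence $\what{(\cdot)}{j}{i^{*}}$ are those moving an element of $P_j$, because a translocation of an element outside $P_{i^{*}}\cup P_j$ leaves $\omega_{P_{i^{*}}\cup P_j}$ unchanged and no translocation moves an element of $P_{i^{*}}$. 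Since the canonical transformation uses exactly one translocation per displaced element, this improves the bound to $\hdist\Lp{\sigma_{P_j},\what{\omega}{j}{i^{*}}}\le 2\left|\sd{\sigma}{\omega}\cap P_j\right|$. It then remains to pass from base $i^{*}$ to base $i$, i.e.\ to prove $\hdist\Lp{\what{\omega}{j}{i^{*}},\what{\omega}{j}{i}}\le 2\left|\sd{\sigma}{\omega}\cap P_i\right|$, using that correct order of $\omega_{P_I}$ forces the $P_i$-- and $P_{i^{*}}$--elements to interleave in $\omega$ exactly as they do in $\sigma$, so the two base classes cut $\omega$ into the same slots apart from perturbations caused by displaced elements of $P_i$. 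The triangle inequality, together with the fact that $\sd{\sigma}{\omega}\cap P_i$ and $\sd{\sigma}{\omega}\cap P_j$ are disjoint subsets of $\sd{\sigma}{\omega}$, which has at most $t$ elements, then yields $\hdist\Lp{\sigma_{P_j},\what{\omega}{j}{i}}\le 2t$.

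I expect the last step, the bound $\hdist\Lp{\what{\omega}{j}{i^{*}},\what{\omega}{j}{i}}\le 2\left|\sd{\sigma}{\omega}\cap P_i\right|$, to be the main obstacle: it needs a careful comparison of the segment decompositions of $\omega_{P_{i^{*}}\cup P_j}$ and $\omega_{P_i\cup P_j}$, including the $\epsilon$--bookkeeping for segments that merge or split when the base class is changed, and it is precisely here that one must invoke the correct-order hypothesis on $\omega_{P_I}$ rather than $i$ being a pivot. Everything else reduces to the counting in Lemma~\ref{lem:I_star_sym}, a direct application of Lemma~\ref{lem:projection}, and the remark that $\sigma\in C$ and correct order let one replace the auxiliary sequences by ordinary projections.
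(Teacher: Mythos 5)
Part~1 of your argument is correct and is exactly the paper's proof: locate a pivot $i^{*}\in I$ by the counting of Lemma~\ref{lem:I_star_sym}, apply Lemma~\ref{lem:projection}, and use correct order to identify $\what{\sigma}{i}{i^{*}}$ with $\sigma_{P_i}$ and $\what{\omega}{i}{i^{*}}$ with $\omega_{P_i}$. Your sharpened form of Lemma~\ref{lem:projection} for the pivot base class, $\hdist\Lp{\sigma_{P_j},\what{\omega}{j}{i^{*}}}\le 2\left|\sd{\sigma}{\omega}\cap P_j\right|$, is also sound.

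Part~2, however, has a genuine gap, and not merely the one you flagged as ``the main obstacle'': the intermediate inequality you propose to prove, $\hdist\Lp{\what{\omega}{j}{i^{*}},\what{\omega}{j}{i}}\le 2\left|\sd{\sigma}{\omega}\cap P_i\right|$, is false. Take $t=1$, classes $P_1=\set{a_l}$, $P_2=\set{c_l}$, $P_3=\set{b_l}$, $\sigma=(a_1,c_1,b_1,a_2,c_2,b_2,\dots)$, and let $\omega$ be obtained by moving $b_1$ one step to the right, so $\omega=(a_1,c_1,a_2,b_1,c_2,b_2,a_3,\dots)$. Then $\sd{\sigma}{\omega}=\set{b_1}\subseteq P_3$, both $1$ and $2$ are pivots, and $I=\set{1,2}$, $j=3$ satisfies the hypotheses. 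With base class $i^{*}=1$ the second segment of $\omega_{P_1\cup P_3}$ is $[b_1b_2]$, so $\what{\omega}{3}{1}=(\epsilon,\epsilon,b_3,\dots,b_s)$, while $\omega_{P_2\cup P_3}=\sigma_{P_2\cup P_3}$ gives $\what{\omega}{3}{2}=(b_1,\dots,b_s)$; hence $\hdist\Lp{\what{\omega}{3}{1},\what{\omega}{3}{2}}=2$ although $\sd{\sigma}{\omega}\cap P_2=\emptyset$. The point is that a displaced element of $P_j$ can land in the gap between the $l$-th element of $P_{i^{*}}$ and the $l$-th element of $P_i$, so the two segment decompositions disagree for reasons charged to $P_j$, not $P_i$; any correct accounting must avoid double-charging these against the $2\left|\sd{\sigma}{\omega}\cap P_j\right|$ already spent. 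The paper does not compare the two base classes at all. Instead it works with base class $i$ directly and decomposes the canonical transformation into a block of translocations of elements outside $P_i\cup P_j$ (which do not change $\what{\cdot}{j}{i}$), a normalized block of translocations of $P_i$ elements arranged, via commutation and factoring out transpositions within $P_i$, so that none crosses an element of $P_{i^{*}}$ and hence each crosses at most one element of $P_j$, and a block of translocations of $P_j$ elements; each translocation in the last two blocks alters at most two segments, giving $2t^{(i)}+2t^{(j)}\le 2t$. That normalization step is the idea your proposal is missing.
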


\begin{proof}
Since there are at most $t$ classes containing displaced elements and $I$ has size at least $t+1$, there exists a pivot $i^{*}\in I$. 
Then, by Lemma \ref{lem:projection}, 
\[\hdist\Lp{\what \sigma i {i^*}, \what \omega i {i^*}}\le 2t.\]
Since $\sigma$ is a codeword in $C$, by construction, we have $\what \sigma i {i^*}=\sigma_{P_i}$. Furthermore, since $\omega_{P_I}$ has a correct order, we have $\what \omega i {i^*} = \omega_{P_i}$. Hence, 
\[\hdist\left(\sigma_{P_i},\omega_{P_i}\right)\le2t.\]

To prove the second part, we proceed as follows. Assume $\psi_m\cdots\psi_2\psi_1$, with $m=\tldist(\sigma,\omega)$, is the canonical transformation from $\sigma$ to $\omega$ so that $\omega = \psi_m\cdots\psi_2\psi_1 \sigma$.

We first show that $\psi_m\cdots\psi_2\psi_1$ may be decomposed into four parts
\[
\omega=\left(\psi^{(j)}_{t^{(j)}}\cdots\psi^{(j)}_1
\left(\psi^{(i)}_{t^{(i)}}\cdots\psi^{(i)}_1
\left(\tau_{t_\tau}\cdots \tau_1
\left(\psi'_{t'}\cdots\psi'_1 \sigma \right)\right)\right)\right),
\] with $t'+t^{(i)}+t^{(j)}=m$ such that 
\begin{align}
\psi_k '  =\psi \left(a'_k ,\ell'_k \right),&\quad a_k '\notin \mm{i}\cup \mm{j},k\in\left[t'\right],\label{eq:decomp}\\
\tau_k   =\tau\left(a_k ,b_k \right),&\quad a_k,b_k \in \mm{i},k\in\left[t_\tau\right],\nonumber \\
\psi_k ^{(i)}  =\psi {\left(a_k ^{(i)},\ell_k ^{(i)}\right)},&\quad a_k ^{(i)}\in \mm{i},k\in\left[t^{(i)}\right],\nonumber \\
\psi_k ^{(j)}  =\psi {\left(a_k ^{(j)},\ell_k ^{(j)}\right)},&\quad a_k ^{(j)}\in \mm{j},k\in\left[t^{(j)}\right],\nonumber 
\end{align}
and such that no $\psi_k^{(i)}$ moves $a^{(i)}_k$ over an element of $\mm {i^*}$.

It can be easily verified that any two translocations $\psi {\left(a,\ell_{1}\right)}$ and 
$\psi {\left(b,\ell_{2}\right)}$ ``commute''. That is, for any permutation $\pi$, we can find translocations $\psi {\left(a,\ell_{3}\right)}$ and 
$\psi {\left(b,\ell_{4}\right)}$ such that $\psi {\left(a,\ell_{1}\right)}\psi {\left(b,\ell_{2}\right)}\pi=\psi {\left(b,\ell_{4}\right)}\psi {\left(a,\ell_{3}\right)}\pi$. 
Thus, we have the decomposition
\[
\omega=\left(\psi^{(j)}_{t^{(j)}}\cdots\psi^{(j)}_1
\left(\psi''_{t^{(i)}}\cdots\psi''_1
\left(\psi'_{t'}\cdots\psi'_1 \sigma \right)\right)\right)
\] with $t'+t^{(i)}+t^{(j)}=m$ such that 
\begin{align}
\psi_k '  =\psi \left(a'_k ,\ell'_k \right),&\quad a_k '\notin \mm{i}\cup \mm{j},k\in\left[t'\right],\nonumber\\
\psi''_k  =\psi {\left(a''_k ,\ell''_k\right)},&\quad a''_k \in \mm{i},k\in\left[t^{(i)}\right],\nonumber \\
\psi_k ^{(j)}  =\psi {\left(a_k ^{(j)},\ell_k ^{(j)}\right)},&\quad a_k ^{(j)}\in \mm{j},k\in\left[t^{(j)}\right].\nonumber 
\end{align}
Furthermore, it is easy to see that we may write $\psi''_{t^{(i)}}\cdots\psi''_1$ as $\psi^{(i)}_{t^{(i)}}\cdots\psi^{(i)}_1\tau_{t_\tau}\cdots \tau_1$ with
\begin{align}
\tau_k   =\tau\left(a_k ,b_k \right),&\quad a_k ,b_k \in \mm{i},k\in\left[t_\tau\right],\nonumber 
\end{align} 
such that no $\psi_k^{(i)}$ moves $a^{(i)}_k$ over an element of $\mm {i^*}$. Hence, for any permutation $\omega$ one can write a 
decomposition of the form~\eqref{eq:decomp}.

Let $\omega'=\tau_{t_\tau}\cdots \tau_1 \psi'_{t'}\cdots\psi'_1 \sigma$, and $\omega^{(i)}=\psi^{(i)}_{t^{(i)}}\cdots\psi^{(i)}_1\omega'$, so that 
$\omega=\psi^{(j)}_{t^{(j)}}\cdots\psi^{(j)}_1\omega^{(i)}$.
By the triangle inequality
\begin{align*}
\hdist \left( \sigma_{P_j} ,\what \omega ji\right)
&\le \hdist \left( \sigma_{P_j} ,\what{\omega'}ji\right)\\
&\ +   \hdist \left(\what{\omega'}ji,\what{\omega^{(i)}}ji\right)\\
&\ +   \hdist \left(\what{\omega^{(i)}}ji,\what \omega ji\right).
\end{align*}
It is clear that $\hdist \left(\sigma_{P_j},\what {\omega'}j i\right)=0$. 

Next, consider $\what{\omega'}ji$ and its transform $\what{\omega^{(i)}}ji$ induced by the translocations $\psi_k ^{(i)}, k\in \left[t^{(i)}\right]$. 
Note that $\omega'_{P_\set{j,i,i^*}}$ has a correct order. Since no translocation $\psi_k^{(i)}$ moves $a_k^{(i)}$ over an element of $\mm {i^*}$, each $\psi_k^{(i)}$ moves $a_k^{(i)}$ over at most one element of $\mm {j}$. 
Thus, each $\psi_k^{(i)}$ can modify at most two segments and we have $\hdist \left(\what{\omega'}ji,\what{\omega^{(i)}}ji\right)\le2t^{(i)}$. Furthermore, each $\psi_k ^{(j)}$ modifies at most two segments and thus  $\hdist \left(\what{\psi^{(i)}}ji,\what wji\right)\le2t^{(j)}.$ Hence,
\[
\hdist \left(\sigma_{P_j} ,\what wji\right)\le0+2t^{(i)}+2t^{(j)}\le2m.
\]
\end{proof}

%%%%%%%%%%%%%%%%%%%%%%%%%%%%%%%%%%%%

\begin{proof} \emph{(Theorem \ref{thm:min-dist})} 
Suppose the minimum Ulam distance of $C$ is less than $2t+1$. Then, for two distinct codewords $\pi,\sigma \in C$, there exists an $\omega \in \Sn$ such that $\tldist(\pi,\omega)\le t$ and $\tldist(\sigma,\omega)\le t$.

Since $\sigma \neq \pi$, there exists $k\in[2t+1]$ such that $\pi_{P_k}\neq \sigma_{P_k}$, which implies that $\hdist\Lp{\pi_{P_k},\sigma_{P_k}}\ge 4t+1$. Since $\tldist(\sigma,\omega)\le t$, by Lemma \ref{lem:I_star_sym}, there exists $I\subseteq [2t+1]$ of size at least $t+1$ such that $\omega_{P_I}$ has a correct order.

If $k\in I$, by Lemma \ref{lem:what}, $\hdist\Lp{\sigma_{P_k},\omega_{P_k}}\le 2t$ and $\hdist\Lp{\pi_{P_k},\omega_{P_k}}\le 2t$, which together imply $\hdist\Lp{\sigma_{P_k},\pi_{P_k}}\le 4t$. 

On the other hand, if $k\notin I$, by Lemma \ref{lem:what}, for any $i\in I$, $\hdist\Lp{ \sigma_{P_k},\what wki}\le 2t$ and $\hdist\Lp{ \pi_{P_k},\what wki}\le 2t$, 
which again imply  $\hdist\Lp{\sigma_{P_k},\pi_{P_k}}\le 4t$. 

Hence, by contradiction, the minimum distance of $C$ is at least $2t+1$.
\end{proof}

The rate of the aforementioned translocation correcting codes based on interleaving may be estimated as follows. The cardinality of the interleaved code of length $n$ and minimum distance $d=d(n)$ is at least $\left(A_{H} \left(\left\lfloor\frac{n}{d}\right\rfloor,2d-1\right)\right)^{d}$ for odd $d$, and $\left(A_{H} \left(\left\lfloor \frac{n}{d+1} \right\rfloor,2d+1 \right)\right) ^{d+1}$ for even $d$. The construction is applicable only if $d(n) \le \sqrt{n/2}-1$, in which case the asymptotic rate of the interleaved code equals 
\begin{align*}
&\lim\frac{\ln|C|}{\ln n!}  =\lim\frac{d(n)\ln A_{H}\left(\frac{n}{d(n)},2d(n)\right)}{\ln n!}\\
 &\quad =\lim\frac{\ln A_{H}\left(\frac{n}{d(n)},2d(n)\right)}{\ln\left(n/d(n)\right)!}\frac{d(n)\ln\left(n/d(n)\right)!}{\ln n!}\\
 &\quad =\left(1-2\lim\frac{d^{2}(n)}{n}\right)
\lim \frac{n\ln n-n\ln d(n)+O\left(n\right)} {n\ln n+O\left(n\right)},
\end{align*}
where we used Theorem~\ref{thm:hamming-capacity} to obtain the last equality. For example, if $d(n)=n^{\beta}$, $\beta < 1/2$, then
\[1-2\lim\frac{d^{2}(n)}{n} = 1\]
and one obtains a translocation error-correcting code of rate
\begin{align*}
\lim\frac{\ln|C|}{\ln n!} & =
\lim\frac{n\ln n-\beta n\ln n+O\left(n\right)} {n\ln n+O\left(n\right)}=
1-\beta.
\end{align*}

In the next section we describe a modification of the interleaving procedure, which, when applied recursively, improves upon the code rate $1-\beta$.

\subsection{Interleaving Codes in the Hamming Metric and the Ulam Metric}

The interleaving approach described in the previous subsection may be extended in a straightforward manner.  Rather than interleaving permutation codes with good Hamming distance, as in Section \ref{sec:construction1}, one may construct a code in the Ulam metric by interleaving a code with good Ulam distance and a code with  good Hamming distance. Furthermore, this approach may be implemented in a recursive manner. In what follows, we explain one such approach and show how it leads to improved code rates as compared to simple interleaving.

We find the following results useful for our recursive construction method.

\begin{lem}
Let $\sigma, \pi \in \Sn$ be two permutations, such that $\tldist(\sigma, \pi) = 1$. 
Then, there exist at most three positions $i$, $i \in [n-1]$, such that for some $j = j(i) \in [n-1]$:
\begin{itemize}
\item[$\bullet$]
$\sigma(i) = \pi(j)$;
\item[$\bullet$]
$\sigma(i+1) \neq \pi(j+1)$.
\end{itemize}
\label{lem:pairs}
\end{lem}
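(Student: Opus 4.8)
The plan is to pass to the $\psi$-notation for translocations introduced before the lemma and then simply write $\pi$ out coordinate by coordinate. Since $\tldist(\sigma,\pi)=1$, there is a single nontrivial translocation carrying $\sigma$ to $\pi$, i.e. $\pi=\sigma\phi(p',q')$ with $p'\neq q'$; equivalently $\pi=\psi(a,\ell)\sigma$ where $a=\sigma(p')$ and $\ell=q'-p'\neq0$. Put $p=\sigma^{-1}(a)$ and $q=p+\ell$, so $1\le p,q\le n$ and $p\neq q$. First I would record the explicit form of $\pi$: if $p<q$ (a right translocation) then $\pi(m)=\sigma(m)$ for $m<p$ and for $m>q$, while $\pi(m)=\sigma(m+1)$ for $p\le m\le q-1$ and $\pi(q)=\sigma(p)=a$; if $p>q$ (a left translocation) the picture is the mirror image, with $\pi(q)=a$, $\pi(m)=\sigma(m-1)$ for $q<m\le p$, and $\pi$ agreeing with $\sigma$ outside the interval $[q,p]$.

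Next I would reformulate the condition being counted. Because $\pi$ is a bijection, for each $i\in[n-1]$ there is a unique index $j$ with $\pi(j)=\sigma(i)$, and the location $i$ satisfies the condition in the statement precisely when $j\le n-1$ and $\pi(j+1)\neq\sigma(i+1)$; in particular such an $i$ forces the ordered pair $(\sigma(i),\sigma(i+1))$ of consecutive entries of $\sigma$ \emph{not} to occur as a pair of consecutive entries of $\pi$. So it suffices to bound the number of $i\in[n-1]$ for which $(\sigma(i),\sigma(i+1))$ fails to be consecutive-in-order in $\pi$.

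Now the count. Using the explicit description of $\pi$: for every $i$ lying in the untouched prefix or suffix the pair $(\sigma(i),\sigma(i+1))$ is literally unchanged in $\pi$; and for every $i$ strictly inside the moved block (that is, $p\le i\le q-1$ in the right case, resp. $q\le i\le p-1$ in the left case, boundary excluded) both $\sigma(i)$ and $\sigma(i+1)$ are shifted by exactly one position in the same direction, so they remain adjacent and in the same order in $\pi$. Hence the only locations that can fail are among the three indices $\{p-1,\,p,\,q\}$ (right case), resp. $\{q-1,\,p-1,\,p\}$ (left case); any of these that coincide, or that fall outside $[n-1]$ (for instance when $p=1$ or $q=n$), only decrease the count. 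Therefore at most three locations meet the condition.

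I expect the only real work to be index bookkeeping: translating cleanly between the $\phi(i,j)$ convention of the definition and the $\psi(a,\ell)$ convention, pinning down the entries of $\pi$ in each of the two cases, and respecting the requirement $j\in[n-1]$ — equivalently, that $\sigma(i)$ is not the last entry of $\pi$ — which is what makes the behaviour near position $n$ slightly special. Once $\pi$ is written out, this is a finite and routine verification with no conceptual obstacle.
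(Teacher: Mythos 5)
Your proof is correct and takes essentially the same route as the paper's: write out $\pi=\sigma\phi(p,q)$ explicitly and check which consecutive pairs of $\sigma$ cease to be consecutive in $\pi$, yielding at most three bad locations. One remark worth recording: your index sets $\{p-1,\,p,\,q\}$ (right case) and $\{q-1,\,p-1,\,p\}$ (left case) are the correct ones, whereas the paper's displayed case list gives $\{i_1-1,\,i_2-1,\,i_2\}$ and $\{i_1,\,i_2-1,\,i_2\}$, which a small check (e.g.\ $\sigma=e$, $\pi=\sigma\phi(2,4)=(1,3,4,2,5)$, whose bad locations are $1,2,4$, not $1,3,4$) shows to contain a slip in the middle entry --- the bound of three, and hence the lemma, is of course unaffected.
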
 
\begin{proof}
Suppose $\pi=\sigma\phi\Lp{i_1,i_2}$. The proof follows from the simple fact that when applying a translocation $\phi {(i_1,i_2)}$ to $\sigma$, the positions $i$ described above are among 
\[
\left\{ \begin{array}{lr}
i_1 - 1, i_1, \mbox{ and } i_2,  & \mbox{ if } i_1 < i_2,\\
i_1 - 1, i_1, \mbox{ and } i_2-1, & \mbox{ if } i_1 > i_2.
\end{array} \right.
\]
\end{proof}

\begin{cor}
\label{cor:pairs}
Let $\sigma, \pi \in \Sn$ be two permutations, and assume that there exist $a \ge 0$ different positions $i$, $i \in [n-1]$, such that $\sigma(i) = \pi(j)$, but $\sigma(i+1) \neq \pi(j+1)$ for some $j \in [n-1]$. Then, $\tldist(\sigma, \pi) \geq \lceil a/3 \rceil$.  
\end{cor}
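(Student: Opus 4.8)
The plan is to bound $a$ in terms of $\tldist(\sigma,\pi)$ by telescoping Lemma~\ref{lem:pairs} along a shortest sequence of translocations. I would put $m=\tldist(\sigma,\pi)$ and fix a transformation $\sigma=\rho_0,\rho_1,\ldots,\rho_m=\pi$ in which each $\rho_k$ is obtained from $\rho_{k-1}$ by a single translocation, so $\tldist(\rho_{k-1},\rho_k)=1$. The guiding intuition is that each of the $a$ ``defective'' adjacencies between $\sigma$ and $\pi$ must be destroyed at some step of this transformation, while Lemma~\ref{lem:pairs} bounds the number of adjacencies destroyed at a single step by three; hence $a\le 3m$.

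To turn this into a clean argument I would track the ``successor'' of each element rather than positions. For $\rho\in\Sn$ and $c\in[n]$ let $\mathrm{nxt}_\rho(c)$ denote the element immediately following $c$ in $\rho$, with the convention that $\mathrm{nxt}_\rho(c)=\star$ (a fixed formal symbol) when $c$ is the last entry of $\rho$, and set $D(\rho,\rho')=\bigl|\{c\in[n]:\mathrm{nxt}_\rho(c)\neq\mathrm{nxt}_{\rho'}(c)\}\bigr|$. Three observations then suffice. First, $D$ is subadditive: for any $\rho''$ the set of disagreements between $\rho$ and $\rho'$ is contained in the union of those between $\rho$ and $\rho''$ and those between $\rho''$ and $\rho'$, so $D(\rho,\rho')\le D(\rho,\rho'')+D(\rho'',\rho')$ and hence $D(\sigma,\pi)\le\sum_{k=1}^m D(\rho_{k-1},\rho_k)$. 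Second, inspecting the effect of a single translocation — as in the proof of Lemma~\ref{lem:pairs} — shows that it gives a new immediate successor to at most three elements (the predecessor of the moved element, the moved element itself, and the element after which it is reinserted), so $D(\rho_{k-1},\rho_k)\le 3$ for every $k$. Third, each of the $a$ locations $i$ in the statement yields a distinct element $c=\sigma(i)$ with $\mathrm{nxt}_\sigma(c)=\sigma(i+1)\neq\star$, with $\mathrm{nxt}_\pi(c)\neq\star$ (this is exactly what the hypothesis $j\in[n-1]$ provides), and with $\mathrm{nxt}_\sigma(c)\neq\mathrm{nxt}_\pi(c)$; therefore $a\le D(\sigma,\pi)$.

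Chaining the three observations gives $a\le D(\sigma,\pi)\le\sum_{k=1}^m D(\rho_{k-1},\rho_k)\le 3m=3\,\tldist(\sigma,\pi)$, and since $\tldist(\sigma,\pi)$ is a nonnegative integer this is equivalent to $\tldist(\sigma,\pi)\ge\lceil a/3\rceil$. The one place that needs care — and the reason I would extend the successor map by the sink symbol $\star$ rather than invoke Lemma~\ref{lem:pairs} verbatim at each step — is the bookkeeping at the two ends of the permutations: a defective adjacency of $\sigma$ relative to $\pi$ need not remain a genuine mismatch at every intermediate $\rho_k$, since an element can transiently become (or cease to be) the last entry, and a purely position-based per-step count can double count or drop exactly such cases. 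With the $\star$-convention in place the telescoping is routine, and nothing else in the argument is delicate.
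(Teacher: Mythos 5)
Your proof is correct and follows essentially the route the paper intends: the corollary is left as an immediate consequence of Lemma~\ref{lem:pairs}, obtained by applying the per-translocation bound of three broken adjacencies to each step of a minimal transformation and summing. Your reformulation in terms of element successors with the sink symbol $\star$ is a sound way to make that telescoping rigorous (it sidesteps the position-relabeling issue at intermediate permutations), and all three of your observations check out.
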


For an integer $p\ge1$, let $\mu=(1,2,\cdots, p)$ and let $\sigma_1,\sigma_2\in\mathbb{S}{(\set{p+1,\cdots,2p-1})}$. Note that 
    \begin{equation}
    \begin{split}
    \mu\circ\sigma_1 &= \Lp{1,\sigma_1(1),2, \sigma_1(2),\cdots,p-1,\sigma_1(p-1),p}\\
    \mu\circ\sigma_2 &= \Lp{1,\sigma_2(1),2, \sigma_2(2),\cdots,p-1,\sigma_2(p-1),p}.
    \end{split}
    \end{equation}

\begin{thm} \label{thm:distance} For $\mu,\sigma_1,$ and $\sigma_2$ described above,  if $\hdist\Lp{\sigma_1,\sigma_2}\ge d$, then \[ \tldist\Lp{\mu\circ\sigma_1,\mu\circ\sigma_2}\ge \left\lceil 2d/3\right\rceil.\]
    \end{thm}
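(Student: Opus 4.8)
The plan is to relate the Hamming distance between $\sigma_1$ and $\sigma_2$ to the number of ``broken adjacencies'' between $\mu\circ\sigma_1$ and $\mu\circ\sigma_2$, and then apply Corollary~\ref{cor:pairs}. Write $\alpha=\mu\circ\sigma_1$ and $\beta=\mu\circ\sigma_2$. Both permutations have the same elements $1,2,\dots,p$ sitting in the odd positions $1,3,5,\dots,2p-1$, and the elements of $\sigma_1$ (respectively $\sigma_2$) sitting in the even positions $2,4,\dots,2p-2$. The key observation is that the odd-position elements act as fixed ``anchors'': position $2k-1$ holds $k$ in both $\alpha$ and $\beta$ for every $k\in[p]$. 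I want to count locations $i\in[n-1]$ (here $n=2p-1$) where $\alpha(i)=\beta(j)$ for some $j$ but $\alpha(i+1)\neq\beta(j+1)$, and show there are at least $2d$ such locations; then Corollary~\ref{cor:pairs} gives $\tldist(\alpha,\beta)\ge\lceil 2d/3\rceil$.

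First I would pin down what it means for $\sigma_1(k)\neq\sigma_2(k)$ at a coordinate $k$, i.e. $k$ being a non-fixed coordinate of the pair $(\sigma_1,\sigma_2)$ — there are exactly $d'\ge d$ such coordinates, where $d'=\hdist(\sigma_1,\sigma_2)$. For each such $k$ (with $1\le k\le p-1$), the element $\sigma_1(k)$ occupies position $2k$ in $\alpha$. I claim this forces two ``broken adjacency'' locations. Consider the element $a=\sigma_1(k)$: in $\alpha$ it sits at position $2k$, flanked by the anchors $k$ (position $2k-1$) and $k+1$ (position $2k+1$). In $\beta$, the element $a$ sits at some even position $2k''$, where $\sigma_2(k'')=a$; since $k$ is a non-fixed coordinate and $a=\sigma_1(k)\ne\sigma_2(k)$, we have $k''\ne k$, so $a$ in $\beta$ is flanked by anchors $k''$ and $k''+1$, with $\{k'',k''+1\}\ne\{k,k+1\}$. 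Tracking the pair $(\alpha(i),\alpha(i+1))$ with $\alpha(i)=k$ (the anchor) versus its image in $\beta$: in $\alpha$, the successor of the anchor $k$ is $a=\sigma_1(k)$; in $\beta$, the successor of the anchor $k$ (at position $2k-1$) is $\sigma_2(k)\ne a$. So $i=2k-1$ is a broken-adjacency location in the sense of the corollary (take $j=2k-1$, so $\alpha(i)=\beta(j)=k$ but $\alpha(i+1)=a\ne\sigma_2(k)=\beta(j+1)$). Similarly, looking at the anchor $k$ as a \emph{successor}: in $\alpha$, position $2k-1$ holds $k$ and is preceded by $\sigma_1(k-1)$ at position $2k-2$; I would instead use the anchor $k+1$ and the element immediately before it — in $\alpha$ that predecessor is $a=\sigma_1(k)$ at position $2k$, and in $\beta$ the predecessor of anchor $k+1$ (position $2k+1$) is $\sigma_2(k)\ne a$. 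Setting $i=2k$ in $\alpha$: $\alpha(i)=a$, and $a=\beta(j)$ for $j=2k''$, with $\alpha(i+1)=k+1$ but $\beta(j+1)=k''+1\ne k+1$ since $k''\ne k$. So $i=2k$ is a second broken-adjacency location attributable to the non-fixed coordinate $k$.

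The remaining work is the bookkeeping: I must check that the broken-adjacency locations produced by distinct non-fixed coordinates $k$ are genuinely distinct, so that the count is at least $2d'\ge 2d$. The locations I produced are $2k-1$ and $2k$ for each non-fixed $k$, and these are distinct across different $k$ (the map $k\mapsto\{2k-1,2k\}$ is injective), so we get $\ge 2d$ distinct locations $i\in[n-1]$. Then Corollary~\ref{cor:pairs} with $a\ge 2d$ yields $\tldist(\alpha,\beta)\ge\lceil 2d/3\rceil$, which is the claim. The main obstacle I anticipate is the careful verification that each non-fixed coordinate really does contribute \emph{two} distinct valid locations and that the ``$j$'' witnesses are consistent with the precise quantifier structure of Lemma~\ref{lem:pairs}/Corollary~\ref{cor:pairs} — in particular handling boundary coordinates ($k=1$ or $k=p-1$, and the fact that $\sigma_1,\sigma_2$ are permutations of $\{p+1,\dots,2p-1\}$, a set of size $p-1$, so the even positions run $2,4,\dots,2p-2$) and making sure no off-by-one error creeps in at the ends of the permutation. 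Everything else is a direct application of the already-established corollary.
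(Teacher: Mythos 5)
Your argument is correct and is essentially identical to the paper's own proof: for each coordinate $\ell$ with $\sigma_1(\ell)\neq\sigma_2(\ell)$, the paper likewise exhibits the two indices $2\ell-1$ and $2\ell$ as locations satisfying the conditions of Lemma~\ref{lem:pairs} (using the anchor $\ell$ as predecessor and the anchor $\ell+1$ as successor, exactly as you do), and then invokes Corollary~\ref{cor:pairs} with $a=2d$. Your extra bookkeeping about distinctness of the locations and the boundary coordinates only makes explicit what the paper leaves implicit.
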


\begin{proof}
Let $\pi_1=\mu\circ\sigma_1$ and $\pi_2=\mu\circ\sigma_2$. We show that the number of indices $\ell$ in $\pi_1,$ with respect to $\pi_2,$ that satisfy the conditions described in Lemma~\ref{lem:pairs} is at least $2d$. Then, the claim of the theorem follows when we apply Corollary \ref{cor:pairs} with $a=2d$.

Assume that $\sigma_1(\ell) \neq \sigma_2(\ell)$ for some $\ell \in [p-1]$. For each such $\ell$, the two indices $2 \ell - 1$ and $2 \ell$ can both serve as index $i$ in Lemma~\ref{lem:pairs}:
\begin{enumerate}
\item We have $\pi_1(2 \ell-1) = \pi_2(2 \ell-1) = \ell$, yet
\[
\sigma_1(\ell) = \pi_1(2 \ell) \neq \pi_2(2 \ell) = \sigma_2(\ell).
\]
\item
Let $j \in [p]$ be such that $\pi_1(2 \ell) = \pi_2(2 j)$. It is easy to see that $j \neq \ell$. Then, 
\[
\ell + 1 = \pi_1(2 \ell + 1) \neq   \pi_2(2 j + 1) = j + 1 \; . 
\]
\end{enumerate}
\end{proof}

Let $\mu\circ C=\set{\mu\circ \sigma : \sigma\in C} $. From Theorem \ref{thm:distance}, we have the following corollary.
\begin{cor}\label{cor:mu_code}
For integers $n$ and $p$ with $n=2p-1$, let $\mu=(1,2,\cdots, p)$ and suppose $C\subseteq \mathbb{S}{\Lp{\set{p+1,\cdots,n}}}$ is a code with minimum Hamming distance at least $\frac{3d}2$. Then $\mu\circ C$ is a code in $\Sn$ with minimum Ulam distance at least $d$ and with size $|C|$.
\end{cor}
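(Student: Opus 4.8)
The plan is to derive Corollary~\ref{cor:mu_code} directly from Theorem~\ref{thm:distance} by applying it to each pair of distinct codewords of the interleaved code $\mu\circ C$. First I would fix two distinct permutations $\pi_1,\pi_2\in\mu\circ C$, so that $\pi_1=\mu\circ\sigma_1$ and $\pi_2=\mu\circ\sigma_2$ for distinct $\sigma_1,\sigma_2\in C$. Since $C$ has minimum Hamming distance at least $\nicefrac{3d}{2}$, we have $\hdist(\sigma_1,\sigma_2)\ge\lceil 3d/2\rceil$. Applying Theorem~\ref{thm:distance} with this value in place of $d$ gives
\[
\tldist(\pi_1,\pi_2)\ge\left\lceil\frac{2}{3}\left\lceil\frac{3d}{2}\right\rceil\right\rceil\ge\left\lceil\frac{2}{3}\cdot\frac{3d}{2}\right\rceil=d,
\]
which is the claimed minimum distance. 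The size claim $|\mu\circ C|=|C|$ is immediate, since $\sigma\mapsto\mu\circ\sigma$ is injective (the interleaving reconstructs $\sigma$ from the even-indexed entries of $\mu\circ\sigma$), so distinct codewords of $C$ yield distinct codewords of $\mu\circ C$. Finally I would note that $\mu\circ C\subseteq\Sn$: the entries of $\mu$ are $\{1,\dots,p\}$ and the entries of each $\sigma\in C$ lie in $\{p+1,\dots,n\}=\{p+1,\dots,2p-1\}$, which are disjoint and together cover $[n]$, and the interleaving places them in the $n=2p-1$ positions bijectively.

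There is a small bookkeeping point to be careful about: Theorem~\ref{thm:distance} is stated for $\sigma_1,\sigma_2\in\mathbb{S}(\{p+1,\dots,2p-1\})$ and $\mu=[12\cdots p]$, and the length of $\mu\circ\sigma_i$ is $2p-1=n$, so the hypotheses match exactly with $d$ replaced by $\lceil 3d/2\rceil$. One should confirm the ceiling arithmetic: writing $\hdist(\sigma_1,\sigma_2)=D\ge\lceil 3d/2\rceil$, Theorem~\ref{thm:distance} gives $\tldist(\pi_1,\pi_2)\ge\lceil 2D/3\rceil$, and since $D\ge\lceil 3d/2\rceil\ge 3d/2$ we get $2D/3\ge d$, hence $\lceil 2D/3\rceil\ge d$. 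This is entirely routine.

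I do not anticipate a genuine obstacle here — the corollary is a direct specialization — so the only thing that requires any attention is making sure the parameter substitution into Theorem~\ref{thm:distance} is legitimate (the ``described above'' hypothesis on $\mu,\sigma_1,\sigma_2$ is satisfied because $C\subseteq\mathbb{S}(\{p+1,\dots,n\})$ with $n=2p-1$) and that the ceiling chain of inequalities is written correctly. Thus the proof is three lines: invoke the Hamming distance bound on $C$, apply Theorem~\ref{thm:distance}, and simplify the ceilings; the size and the fact that $\mu\circ C\subseteq\Sn$ follow from the definition of interleaving.
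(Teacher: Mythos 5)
Your proof is correct and follows exactly the route the paper intends: the paper simply asserts that the corollary follows from Theorem~\ref{thm:distance}, and your write-up supplies the routine details (the substitution $d \mapsto \lceil 3d/2\rceil$, the ceiling arithmetic, injectivity of $\sigma \mapsto \mu\circ\sigma$, and the check that $\mu\circ C \subseteq \Sn$). Nothing is missing.
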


Hence, for odd $n$, we can construct a translocation code with length $n$, minimum distance at least $d$, and size \(A_{H}\left(\frac{n-1}{2},\left\lceil \frac{3d}{2}\right\rceil\right). \) This can be easily generalized for all $n$ to get codes of size \[ A_{H}\left(\left\lceil \frac{n}{2}\right\rceil -1,\left\lceil \frac{3d}{2}\right\rceil\right). \]

By assuming that the permutation code in the Hamming metric is capacity achieving, the asymptotic rate of the constructed code becomes
 \begin{align}\label{eqRateNonRecursive}
  &\lim\frac{\ln A_{H}\left(\left\lceil \frac{n}{2}\right\rceil -1,\left\lceil \frac{3d(n)}{2}\right\rceil\right)}{\ln n!} \nonumber\\
  &\quad =\lim\frac{\ln A_{H}\left(\left\lceil \frac{n}{2}\right\rceil -1,\left\lceil \frac{3d(n)}{2}\right\rceil \right)}{\ln\left\lceil \frac{n}{2}\right\rceil !}\cdot\frac{\ln\left\lceil \frac{n}{2}\right\rceil !}{\ln n!}\nonumber\\
   &\quad =\frac{1}{2}-\frac32\lim\frac{d(n)}{n} =\frac{1}{2}-\frac32\, \delta \end{align} 
where $\delta=\lim\frac{d(n)}{n}$.  Therefore, this code construction incurs a rate loss of $(1+ \delta)/2$ when compared to the capacity, which in this case equals $1- \delta$.

The final result that we prove in order to describe a recursive interleaving procedure is related to the longest common subsequence of two sequences and the minimum Ulam distance of interleaved sequences.

\begin{lem}
\label{lem:concat}For $\sigma,\pi\in\Sn$ and $P\subseteq[n]$,
we have 
\[
\tldist(\sigma,\pi)\ge\tldist\left(\sigma_{P},\pi_{P}\right)+\tldist\left(\sigma_{Q},\pi_{Q}\right),
\]
where $Q=[n]\backslash P$.\end{lem}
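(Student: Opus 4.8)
The plan is to reduce the statement to a purely combinatorial fact about longest common subsequences via the earlier proposition $\tldist(\sigma,\pi)=n-l(\sigma,\pi)$. Applying that identity three times (to the pair $(\sigma,\pi)$ of length $n$, to $(\sigma_P,\pi_P)$ of length $|P|$, and to $(\sigma_Q,\pi_Q)$ of length $|Q|$) and using $|P|+|Q|=n$, the claimed inequality
\[
\tldist(\sigma,\pi)\ge\tldist(\sigma_P,\pi_P)+\tldist(\sigma_Q,\pi_Q)
\]
is seen to be equivalent to
\[
l(\sigma,\pi)\le l(\sigma_P,\pi_P)+l(\sigma_Q,\pi_Q).
\]
So it suffices to prove this last inequality.

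To prove it, I would take $L=L(\sigma,\pi)$, a longest common subsequence of $\sigma$ and $\pi$, so $|L|=l(\sigma,\pi)$. Split the entries of $L$ according to membership in $P$ versus $Q$: let $L_P$ be the subsequence of $L$ formed by the entries lying in $P$, and $L_Q$ the subsequence formed by the entries in $Q$, so that $|L|=|L_P|+|L_Q|$. The key observation is that deleting from $\sigma$ all symbols outside $P$ (which is exactly how $\sigma_P$ is formed) does not alter the relative order of the symbols that remain; hence any subsequence of $\sigma$ all of whose entries lie in $P$ is also a subsequence of $\sigma_P$. Since $L_P$ is such a subsequence of $\sigma$, it is a subsequence of $\sigma_P$, and by the same argument a subsequence of $\pi_P$; therefore $L_P$ is a common subsequence of $\sigma_P$ and $\pi_P$, giving $|L_P|\le l(\sigma_P,\pi_P)$. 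Symmetrically $|L_Q|\le l(\sigma_Q,\pi_Q)$. Adding these two bounds yields $l(\sigma,\pi)=|L_P|+|L_Q|\le l(\sigma_P,\pi_P)+l(\sigma_Q,\pi_Q)$, which completes the argument.

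There is essentially no hard step here; the only point that needs a line of justification is the ``projection preserves relative order'' claim, which is immediate from the definition of $\sigma_P$ as the result of deleting the symbols of $[n]\setminus P$ from $\sigma$. I would also remark that the inequality can be strict — the optimal common subsequences of the two projections need not be consistent with a single common subsequence of $\sigma$ and $\pi$ — so no matching upper bound is claimed, which is all the recursive construction needs.
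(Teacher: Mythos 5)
Your proof is correct and follows essentially the same route as the paper's: both convert the claim via $\tldist(\sigma,\pi)=n-l(\sigma,\pi)$ into the subadditivity statement $l(\sigma,\pi)\le l(\sigma_P,\pi_P)+l(\sigma_Q,\pi_Q)$. The only difference is cosmetic — the paper first reduces to $\sigma=e$ by left-invariance and then asserts this inequality as clear, whereas you work with general $\sigma,\pi$ and actually justify the key step by splitting a longest common subsequence according to membership in $P$ versus $Q$, which is if anything slightly more complete.
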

\begin{proof}
Without loss of generality, assume that $\sigma$ is the identity permutation.
It is clear that $ l\Lp{\pi}\le l\Lp{\pi_{P}}+ l\Lp{\pi_{Q}}$.
Hence,
\begin{align*}
\tldist(\sigma,\pi) & =n- l\Lp{\pi}\\
 & \ge n- l\Lp{\pi_{P}}- l\Lp{\pi_{Q}}\\
 & =|P|- l\Lp{\pi_{P}}+\left|Q\right|- l\Lp{\pi_{Q}}\\
 & =\tldist\left(\sigma_{P},\pi_{P}\right)+\tldist\left(\sigma_{Q},\pi_{Q}\right).
\end{align*}
\end{proof}

\begin{lem}
For sets $P$ and $Q$ of sizes $p$ and $p-1$, respectively, let
$C'_{1}\subseteq\mathbb{S}{(P)}$ be a code with minimum Ulam distance $d$ and let $C_{1}\subset\mathbb{S}{\left(Q\right)}$ be a
code with minimum Hamming distance $3d/2$. The code $C_{1}'\circ C_{1}=\left\{ \sigma \circ \pi : \sigma \in C_{1}', \pi \in C_{1}\right\} $
has minimum Ulam distance $d$.\end{lem}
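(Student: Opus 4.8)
The plan is to split on whether the two codewords share their $C_1$-component, to dispose of the easy branch with Lemma~\ref{lem:concat}, and to reduce the remaining branch to Corollary~\ref{cor:mu_code} by a small padding trick.

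First I would take two distinct codewords $X=\sigma_1\circ\pi_1$ and $Y=\sigma_2\circ\pi_2$ of $C_1'\circ C_1$, with $\sigma_i\in C_1'\subseteq\mathbb{S}(Q)$ (length $p$) and $\pi_i\in C_1\subseteq\mathbb{S}(P)$ (length $p-1$), and apply Lemma~\ref{lem:concat} with the ground set $[2p-1]=Q\sqcup P$. Since projecting an interleaved word $\sigma\circ\pi$ onto $Q$ returns $\sigma$ and onto $P$ returns $\pi$, this gives $\tldist(X,Y)\ge\tldist(\sigma_1,\sigma_2)+\tldist(\pi_1,\pi_2)$. If $\pi_1\ne\pi_2$ then $\tldist(\pi_1,\pi_2)\ge d$, because $C_1$ has minimum translocation distance $d$, and we are done. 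So it remains to treat the case $\pi_1=\pi_2=:\pi$, in which $\sigma_1\ne\sigma_2$ and, since $\sigma_1,\sigma_2\in C_1'$, $\hdist(\sigma_1,\sigma_2)\ge 3d/2$; for this case I must show $\tldist(\sigma_1\circ\pi,\sigma_2\circ\pi)\ge d$.

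The obstruction here is that in $\sigma_i\circ\pi$ the common block $\pi$ (length $p-1$) occupies the even positions while the varying block $\sigma_i$ (length $p$) occupies the odd positions --- the transpose of the situation in Theorem~\ref{thm:distance} and Corollary~\ref{cor:mu_code}, where the varying permutations sit in the shorter (even) slot. I would get around this by padding. Pick fresh symbols $d_0,e_0$ not occurring in $P\cup Q$ and set $\mu:=(d_0,\pi(1),\dots,\pi(p-1),e_0)$, a permutation of length $p+1$ depending only on $\pi$. Unwinding the interleaving definition gives $\mu\circ\sigma_i = d_0\,(\sigma_i\circ\pi)\,e_0$, i.e.\ $\sigma_i\circ\pi$ with $d_0$ prepended and $e_0$ appended. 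A common subsequence of the two padded words can use $d_0$ only as its first element and $e_0$ only as its last element, so passing from $\sigma_i\circ\pi$ to $\mu\circ\sigma_i$ lengthens both the word and its longest common subsequence with the other by exactly $2$; hence $\tldist(\mu\circ\sigma_1,\mu\circ\sigma_2)=\tldist(\sigma_1\circ\pi,\sigma_2\circ\pi)$. Now $\mu\circ\sigma_i$ has the common permutation $\mu$ of length $p+1$ in the odd positions and $\sigma_i$ of length $p$ in the even positions --- exactly the shape covered by Corollary~\ref{cor:mu_code} with its length parameter taken to be $p+1$. Relabeling the value set of $\mu$ to $\{1,\dots,p+1\}$ so that $\mu$ becomes $[12\cdots(p+1)]$, and relabeling $Q$ to $\{p+2,\dots,2p+1\}$ --- relabelings are value bijections and therefore preserve longest common subsequences and $\tldist$ --- turns $\sigma_1,\sigma_2$ into permutations of $\{p+2,\dots,2p+1\}$ still at Hamming distance at least $3d/2$. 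Corollary~\ref{cor:mu_code} then gives $\tldist(\mu\circ\sigma_1,\mu\circ\sigma_2)\ge d$, whence $\tldist(\sigma_1\circ\pi,\sigma_2\circ\pi)\ge d$, finishing the proof.

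I expect the genuine difficulty to be precisely the odd/even asymmetry just mentioned. A direct attack on the case $\pi_1=\pi_2$ that imitates the proof of Theorem~\ref{thm:distance} --- counting, via Lemma~\ref{lem:pairs} and Corollary~\ref{cor:pairs}, the indices at which $\sigma_1\circ\pi$ and $\sigma_2\circ\pi$ locally disagree --- loses an additive constant near the two ends of the word, producing only about $2\hdist(\sigma_1,\sigma_2)-3\ge 3d-3$ such indices instead of $3d$, which yields the weaker bound $\tldist\ge d-1$. The padding step is what recasts the problem in a form where the varying permutations lie strictly in the interior (the even slot), so that no boundary loss occurs. The remaining ingredients are routine: that $(\sigma\circ\pi)_Q=\sigma$ and $(\sigma\circ\pi)_P=\pi$; that Lemma~\ref{lem:concat} applies to these permutation words; the elementary longest-common-subsequence identity for a prepended/appended fresh symbol; and the invariance of $\tldist$ under relabeling of the ground set.
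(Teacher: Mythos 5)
Your proof is correct, and its skeleton matches the paper's: split according to which interleaved component differs, handle one branch with Lemma~\ref{lem:concat} and the other with Theorem~\ref{thm:distance}. The substantive difference is which branch is which. The paper's proof splits on $\sigma_1=\sigma_2$ versus $\sigma_1\neq\sigma_2$, uses $\tldist(\sigma_1,\sigma_2)\ge d$ in the second branch and applies Theorem~\ref{thm:distance} directly to the $\pi_i$ in the first; this implicitly assumes that $C_1'$ is the translocation-distance code and $C_1$ the Hamming-distance code --- the role assignment actually used in the recursive construction that follows, but the opposite of what the lemma literally states. You instead took the statement at face value, split on $\pi_1=\pi_2$ versus $\pi_1\neq\pi_2$, and correctly observed that the hard branch then places the Hamming-separated permutations in the longer (odd) slot, the transpose of the configuration covered by Theorem~\ref{thm:distance} and Corollary~\ref{cor:mu_code}. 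Your padding step --- prepend and append fresh symbols so that $\sigma_i\circ\pi$ becomes $\mu\circ\sigma_i$ with $\mu$ of length $p+1$, check that this changes neither the translocation distance (length and longest common subsequence both grow by exactly $2$) nor the Hamming hypothesis, then invoke Corollary~\ref{cor:mu_code} after relabeling --- is sound and does not appear in the paper; it is precisely the extra cost of proving the statement as written rather than its role-swapped twin. In short: same architecture, but you supplied a genuinely new reduction to cover a configuration that the paper's one-line appeal to Theorem~\ref{thm:distance} does not literally cover, and in doing so you effectively repaired an inconsistency between the lemma's wording and its intended use.
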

\begin{proof}
For $\sigma_{1},\sigma_{2}\in C_{1}'$ and $\pi_{1},\pi_{2}\in C_{1}$ with $(\sigma_1, \pi_1) \neq (\sigma_2, \pi_2)$, we show
that $\tldist\left(\sigma_{1}\circ \pi_{1},\sigma_{2}\circ \pi_{2}\right)\ge d$.

The case $\sigma_{1}=\sigma_{2}$ follows from a simple use of Theorem~\ref{thm:distance}. 

Assume next that $\sigma_{1}\neq \sigma_{2}$.
Then by Lemma \ref{lem:concat}, $\tldist\left(\sigma_{1}\circ \pi_{1},\sigma_{2}\circ \pi_{2}\right)\ge\tldist\left(\sigma_{1},\sigma_{2}\right)\ge d$
and this completes the proof.
\end{proof}

%We now turn out attention to a recursive code construction based on the findings outlined above. 

Let $\alpha=\frac{3}{2}$. 
%For odd $n$ let $P=\left\{ 1,\cdots,\left\lceil \frac{n}{2}\right\rceil \right\} $
%and $Q=\left\{ \left\lceil \frac{n}{2}\right\rceil +1,\cdots,n\right\} $,
%and for even $n$ let $P=\left\{ 1,\cdots,\frac{n}{2}\right\} $ and
%$Q=\left\{ \frac{n}{2}+1,\cdots,n-1\right\} $. 
For a given $n$, set $P=\left\{ 1,\cdots,\left\lceil \frac{n}{2}\right\rceil \right\} $
and set $Q=\left\{ \left\lceil \frac{n}{2}\right\rceil +1,\cdots,2\left\lceil \frac{n}{2}\right\rceil -1\right\} $.
Suppose $C_{1}'\subseteq\mathbb{S}{(P)}$ is a code with
minimum Ulam distance $d$ and $C_{1}\subseteq\mathbb{S}{\left(Q\right)}$
is a code with minimum Hamming distance $\alpha\,d$. Assuming that permutation codes
with this given minimum Hamming distance exist, we only need to provide a construction for $C_{1}'$. 
An obvious choice for $C_{1}'$ is a code with
only one codeword. Then, $C=C_{1}'\circ C_{1}$ is a code with minimum Ulam distance $d$ and cardinality
\[
A_{H}\left(\left\lceil \frac{n}{2}\right\rceil -1,\alpha\delta\right).
\]

The gap to capacity may be significantly reduced by observing that $C_{1}'$
does not have to be a code of cardinality one, and that $C_{1}'$ may be constructed recursively from shorter codes. 

To this end, let $C_{1}'=C_{2}'\circ C_{2}$ where $C_{2}'$ is
a code of length $\left\lceil \frac{n}{4}\right\rceil $ with minimum Ulam distance $d$, while $C_{2}$ is a code of length
$\left\lceil \frac{n}{4}\right\rceil -1$ and minimum Hamming distance
$\alpha d$. 

By repeating the same procedure $k$ times we obtain
a code of the form 
\begin{equation}
\Lp{\Lp{\Lp{C_{k}'\circ C_{k}} \circ C_{k-1}}\circ\cdots}\circ C_{1}, \label{eq:code-construction}
\end{equation}
where each $C_{i}$, $i \leq k$, is a code with minimum Hamming distance $\alpha d$
and length $\left\lceil \frac{n}{2^{i}}\right\rceil -1,$ and $C_{k}'$
is a code with minimum Ulam distance $d$ and length
$\left\lceil \frac{n}{2^{i}}\right\rceil $. Since each $C_{i}$ is
a permutation code in the Hamming metric with minimum distance $\alpha d$, we must have
$\left\lceil \frac{n}{2^{i}}\right\rceil -1\ge\alpha d$. To ensure
that this condition is satisfied, in \eqref{eq:code-construction}, we let $k$ be the largest value of $i$ satisfying $\frac{n}{2^{i}}-1\ge\alpha d$.
 It is easy to see that
$k=\left\lfloor \log\frac{n}{\alpha d+1}\right\rfloor $. Furthermore, we choose $C_{k}'$ to consist of a single codeword.

The asymptotic rate of the recursively constructed codes equals
\begin{align*}
&\lim\frac{1}{\ln n!}\sum_{i=1}^{k}\ln A_{H}\left(\left\lceil \frac{n}{2^{i}}\right\rceil -1,\alpha d(n) \right) \\
&\quad =\lim\sum_{i=1}^{k}\frac{\ln A_{H}\left(\left\lceil \frac{n}{2^{i}}\right\rceil -1,\alpha d(n) \right)}{\ln\left(\left\lceil \frac{n}{2^{i}}\right\rceil -1\right)!}
\frac{\ln\left(\left\lceil \frac{n}{2^{i}}\right\rceil -1\right)!}{\ln n!}\label{eq:recursive-rate}\\
 &\quad =\lim\sum_{i=1}^{k}\left(1-\frac{\alpha d(n) 2^{i}}{n}\right)2^{-i}\nonumber \\
 %&\quad =\lim\sum_{i=1}^{k}\left(2^{-i}-\frac{\alpha d(n)}{n}\right)\nonumber \\
 &\quad  =\lim\left(1-2^{-k}-\frac{\alpha d(n) k}{n}\right)\nonumber \\
 &\quad =1-2^{-\left\lfloor \log\frac{1}{\alpha\delta}\right\rfloor }-\alpha\delta \left\lfloor \log\frac{1}{\alpha\delta}\right\rfloor , \nonumber 
\end{align*}
where the last step follows from $\lim k=\left\lfloor \log\frac{1}{\alpha\delta}\right\rfloor $.
Note that this rate is roughly equal to $1-\alpha\delta \left(1+\log\frac{1}{\alpha \delta}\right)$.

\subsection{Permutation Codes in the Hamming Metric}

In the previous subsection, we demonstrated a number of constructions for translocation error-correcting codes based on
permutation codes in the Hamming metric and codes over distinct symbols. There exists a number of constructions for 
sets of permutations  with good Hamming distance, and codes with codewords containing distinct symbols. For example, in~\cite{Klove,PeterJ2010482,1291735,1228028} constructions of permutations in 
$\Sn$ using classical binary codes were presented, while other constructions rely on direct combinatorial arguments~\cite{1055142,1614092}. An example of code construction for codewords over
distinct symbols was presented in~\cite{Davydov}. There, specialized subcodes of Reed-Solomon codes were identified such that their codewords consist of distinct symbols.

In the former case, if $\code$ is a binary $[n, \lambda n, \beta n]$ code, the construction applied to $\code$ yields a subset of $\Sn$ of cardinality $2^{\lambda n}$, with minimum Hamming distance $\beta n$.  
This construction and constructions related to it may be used for permutation code design,
resulting in sets of permutations in $\Sn$ of cardinality $\exp\{\Theta(n)\}$ and 
minimum Hamming distance $\Theta(n)$. These permutations may consequently be used to 
construct permutation codes in $\Ss_{2n}$ with $\exp\{\Theta(n)\}$  codewords and minimum Ulam distance $\Theta(n)$. 

We describe a simple method for constructing sets of vectors of length $m>0$ over $[n]$
such that all entries of the vector are different, and such that the minimum Hamming distance between the vectors is large. 
In other words, we propose a novel construction for \emph{partial permutation codes} under the Hamming metric, suitable
for use in the recursive code construction described in the previous subsection.

The idea behind the proof is based on mapping suitably modified binary codewords in the Hamming space into 
partial permutations. For this purpose, let $\code$ be a binary $[N, K, D]$ code, and for simplicity of exposition, assume that  $n$ is a power of two.
Let $\bldc \in \code$. We construct a vector $\bldx = \chi(\bldc) \in ([n])^m$, where $\chi$ is a mapping as follows:
\begin{enumerate}
\item
Divide $\bldc$ into $m$ binary blocks $\bldc_1, \bldc_2, \cdots, \bldc_m$ of lengths $\log_2 n - \log_2 m$ each. Again, for simplicity,
we assume that $m$ is a power of two.
\item
For each block $\bldc_i$, $i \in [m]$, construct a vector $\bldx_i$ of length $\log_2 n$ according to the rule: 
the first $\log_2 n - \log_2 m$ bits in $\bldx_i$ equal $\bldc_i$, while the last $\log_2 m$ bits in $\bldx_i$ represent 
the binary encoding of the index $i$. Note that the integer values represented by the binary vectors $\bldx_1, \bldx_2, \cdots, \bldx_m$
are all different.
\item
Form an integer valued vector $\bldx = \chi(\bldc)$ of length $m$ over $[n]$, such that its $i$-th entry has the binary encoding specified by $\bldx_i$. 
Observe that all the integer entries of such a vector are different.  
\end{enumerate}

Now, take two vectors $\blda, \bldc \in \code$, such that their Hamming distance satisfies 
$\hdist(\blda, \bldc) \ge D$. Let $\bldx = \chi(\blda)$ and $\bldy = \chi(\bldc)$ be 
the corresponding vectors of length $m$ over $[n]$
constructed as described before. Then, there exist at least ${D}/({\log_2 n - \log_2 m})$ blocks of 
length $\log_2 n$ that are pairwise different. Therefore, the corresponding ${D}/({\log_2 n - \log_2 m})$ entries in $\bldx$ and $\bldy$ are 
pairwise different as well. 

Consider the set of vectors
\[
\cS' = \left\{ \chi(\bldc) \; : \; \bldc \in \code \right\} \; .
\]
It is straightforward to see that the set $\cS'$ has the following properties: 
\begin{enumerate}
\item
For any $\bldx \in \cS'$, all entries in $\bldx$ are different. 
\item
For any $\bldx, \bldy \in \cS'$, $\bldx \neq \bldy$, the Hamming distance satisfies $\hdist(\bldx, \bldy) \ge 
{D}/({\log_2 n - \log_2 m})$. 
\end{enumerate}

  The set $\cS'$ can be used similarly as the set $\Sn$ in the basic construction to obtain codes over $\Ss_{n+m}$ with minimum Ulam distance at least
\[
\Theta \left( \frac{D}{\log_2 n - \log_2 m} \right) \; . 
\]
Note that in this case, only $m$ numbers in the range $\{n+1, n+2, \cdots, n+m\}$ are inserted 
between the numbers in $[n]$, while the Hamming distance of the vectors is preserved. 

\begin{lem} 
\label{lemma:orders}
The parameters $N$, $n$ and $m$ are connected by the following equation: 
\[
N = m \log_2 n - m \log_2 m = m \log_2 \frac{n}{m}\; . 
\]
\end{lem}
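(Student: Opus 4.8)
The plan is to simply unwind the definition of the map $\chi$ given in the construction immediately preceding the statement, since the claimed identity is nothing more than the bit-count implicit in that construction. Recall that $\chi$ takes a codeword $\bldc \in \code$ of length $N$, partitions it into $m$ consecutive binary blocks $\bldc_1, \dots, \bldc_m$, and then forms each block $\bldx_i$ of length exactly $\log_2 n$ by appending to $\bldc_i$ the $\log_2 m$ bits encoding the index $i \in [m]$. Since an integer in the range $[n]$ occupies $\log_2 n$ bits (here $n$ is a power of two) and $\log_2 m$ of those bits are reserved for the index, each data block $\bldc_i$ necessarily has length $\log_2 n - \log_2 m$.

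First I would note that the blocks $\bldc_1, \dots, \bldc_m$ partition $\bldc$, so their lengths sum to $N$; since each has length $\log_2 n - \log_2 m$, this yields $N = m(\log_2 n - \log_2 m) = m\log_2 n - m\log_2 m$. Then I would apply the logarithm quotient rule $\log_2 n - \log_2 m = \log_2(n/m)$ to rewrite the right-hand side as $m\log_2\frac{n}{m}$, which is exactly the asserted equation.

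There is essentially no obstacle here. The only facts used are that $n$ and $m$ are powers of two — which were assumed in the construction precisely so that $\log_2 n$, $\log_2 m$, and $\log_2 n - \log_2 m$ are all nonnegative integers and the partition into equal blocks is well-defined. The lemma is a consistency relation recording how the code length $N$, the alphabet parameter $n$, and the partial-permutation length $m$ are linked, to be invoked later when substituting explicit binary codes $\code$ into the recursive construction.
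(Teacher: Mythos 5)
Your proof is correct and matches the paper's intent exactly: the paper states this lemma without an explicit proof, treating it as immediate bookkeeping from the construction of $\chi$ (the $m$ blocks of length $\log_2 n - \log_2 m$ partition the length-$N$ codeword), which is precisely the computation you carry out. Your sanity check is also consistent with the paper's subsequent use of the lemma, where $m = n/2$ gives $N = n/2$.
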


From this lemma, if we take $m = \half n$, then $N = \half n$. By taking a code $\code$ with parameters 
$[\half n, \lambda n, \beta n]$, where $\lambda > 0$ and $\beta > 0$ are constants, we obtain a set $\cS'$ of size 
$2^{\lambda n}$ and Hamming distance $\Theta (n)$. The corresponding translocation code is able to correct $\Theta(n)$ translocation errors, and it has $2^{\lambda n}$ codewords. 

%----------------------------------------------------------------------------------------------------------

\subsection{Decoding of Interleaved Codes}

An efficient decoder implementation for the general family of interleaved codes is currently not known. For the case
of recursive codes, decoding may be accomplished with low complexity provided that the Hamming distance of the 
component permutation codes is increased from $\frac{3 d}{2}$ to $2 d$.

For simplicity of exposition, we assume $n=2p-1$ where $p$ is an integer. The case of 
even $n$ may be handled in the same manner, provided that one fixes the last symbol of all codewords.

Let $\sigma=\left(1,\hat{\sigma}(1),2,\hat{\sigma}\left(2\right),\cdots,\hat{\sigma}\left(p-1\right),p\right)\in C$
be the stored codeword and let $\pi \in\Sn$ be the retrieved
word. 

For $i\in\left[p-1\right]$, denote by $s_{i}^{\pi}$ the substring of $\pi$ that starts with element $i$ and ends with
element $i+1$. If $i+1$ appears before $i$ in $\pi$, then $s_{i}^{\pi}$ is
considered empty. For $i\in[p-1]$, let $\hat{\pi}\left(i\right)=u$
if $s_{i}^{\pi}$ contains some unique element $u$ of $\{p+1,\dotsc,n\}$. Otherwise, let $\hat{\pi}\left(i\right)=\epsilon$.

\begin{lem}
\label{lem:decoding-lemma}The permutation $\hat{\pi}$ differs from $\hat{\sigma}$ in at most $2\tldist(\sigma,\pi)$
positions.\end{lem}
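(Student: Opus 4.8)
The plan is to track how a single translocation error affects the derived permutation $\hat\pi$, and then invoke the triangle inequality over a canonical transformation from $\sigma$ to $\pi$. First I would reduce to the case $\tldist(\sigma,\pi)=1$: writing a canonical transformation $\psi_m\cdots\psi_1$ from $\sigma$ to $\pi$ with $m=\tldist(\sigma,\pi)$, and letting $\pi_\ell=\psi_\ell\cdots\psi_1\sigma$, it suffices to show that the derived permutation $\hat\pi_\ell$ differs from $\hat\pi_{\ell-1}$ in at most $2$ positions; summing over $\ell$ and using the triangle inequality for the Hamming distance (plus $\hat\pi_0=\hat\sigma$, $\hat\pi_m=\hat\pi$) then gives the claim. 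So the heart of the matter is the single-step bound.

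For the single-step bound, suppose $\pi_\ell=\psi(b,\ell')\pi_{\ell-1}$ for some $b\in[n]$ and integer $\ell'$. The coordinate $\hat\pi(i)$ is determined entirely by the substring $s_i^\pi$ of $\pi$ running from element $i$ (in $P_1$) to element $i+1$ (in $P_1$), together with which elements of $P_2$ lie in it. A single translocation of $b$ changes the ``$P_1$-segment structure'' of the permutation in a very limited way: if $b\in P_2$, it is removed from one segment $s_k$ and inserted into another segment $s_{k'}$, affecting only indices $k$ and $k'$; if $b\in P_1$, then moving $b$ merges or splits segments adjacent to $b$ and relocates $b$, which by Lemma~\ref{lem:pairs} changes the ``boundary structure'' at only a bounded number of positions — here the relevant count is at most two of the $\hat\pi$-coordinates, since $b\in P_1$ means $b$ is one of the segment endpoints and moving it alters at most the two segments it used to border together with where it now sits. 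In all cases one concludes $\hdist(\hat\pi_{\ell-1},\hat\pi_\ell)\le 2$. I would organize this as a short case analysis on whether $b\in P_1$ or $b\in P_2$ (and, within $b\in P_2$, on whether the endpoints $i$, $i+1$ of the affected segments keep their relative order).

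The main obstacle I anticipate is the bookkeeping in the case $b\in P_1$: moving an element of $P_1$ can change the relative order of several consecutive elements of $P_1$, and one must verify that only a bounded number of the quantities $\hat\pi(i)$ actually change, rather than a number growing with $|\ell'|$. The key observation making this work is that when a block of elements is shifted by one position, each such element's left/right neighbor in $P_1$ is essentially unchanged except at the two ends of the shifted block, so the segments $s_i^\pi$ — and hence the values $\hat\pi(i)$ — are stable away from those two endpoints; this is exactly the content of Lemma~\ref{lem:pairs}, which bounds by three the number of ``broken adjacencies,'' and a slightly finer count restricted to $P_1$-indices yields the bound $2$. Once this localization is established, the rest is immediate.
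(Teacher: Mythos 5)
Your proposal is correct and follows essentially the same route as the paper: decompose the transformation from $\sigma$ to $\pi$ into a sequence of $\tldist(\sigma,\pi)$ translocations, show by a case analysis on whether the moved element lies in $P_1$ or $P_2$ that each single translocation alters at most two of the substrings $s_i$ and hence at most two coordinates of the derived permutation, and then accumulate over the sequence (you via the triangle inequality for the Hamming distance, the paper via the nested sets $L_i$ with $|L_i|\le|L_{i-1}|+2$ --- the same telescoping argument). The only cosmetic difference is that you route the $P_1$ case through Lemma~\ref{lem:pairs}, whereas the paper argues directly that moving $j\in P_1$ can only affect $s_{j-1}$ and $s_j$.
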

\begin{proof}
Let $t=\tldist(\sigma,\pi)$. There exists a sequence $\phi_{1},\phi_{2},\cdots,\phi_{t}$
of translocations such that $\pi=\sigma\phi_{1}\phi_{2}\cdots\phi_{t}$.
For $i\in\left\{ 0,\cdots,t\right\} $, let $\pi_{i}=\sigma \phi_{1}\phi_{2}\cdots\phi_{i}$
and let $ L_{i}$ be given as 
\[
L_{i}=\left\{ j|\exists k\le i:\hat{\pi}_{k}\left(j\right)\neq\hat{\sigma}_{k}\left(j\right)\right\}.
\]
The set $ L_i$ may be viewed as the set of  elements displaced by one of the translocations
$\phi_{1},\phi_{2},\cdots,\phi_{i}$. Note that, for each
$i$, $ L_{i}\subseteq L_{i+1}.$ 

To prove the lemma, it suffices to
show that $\left |L_{t}\right|\le2t$, since $\left\{ j|\hat{\pi}\left(j\right)\neq\hat{\sigma}\left(j\right)\right\} \subseteq L_{t}$.

Let $L_{0}=\emptyset$. We show that $\left|L_{i}\right|\le\left|L_{i-1}\right|+2$
for $i\in\left[t\right]$. 

The translocation $\phi_{i}$ either moves an element of $[p]$ or an element of $\{p+1,\dotsc,n\}$. 
First, suppose that it moves an element $j$ of $[p]$. Then, $\phi_{i}$ can
affect only the substrings $s_{j-1}^{\pi_{i-1}}$ and $s_{j}^{\pi_{i-1}}$
of $\pi_{i-1}$. Next, assume that $\phi_{i}$ moves an element
of $\{p+1,\dotsc,n\}$. It can then be verified that at most two substrings of $\pi_{i-1}$ may
be affected by the given translocation. Hence, $\left|L_{i}\right|\le\left|L_{i-1}\right|+2$.
\end{proof}

Assume now that $C\subseteq\Sn$ is an interleaved code of the form
\[
C=C_{1}'\circ C_{1},
\]
where $C_{1}'=\left\{ (1,2,\cdots,p)\right\},$ and where $C_{1}$ is
a permutation code over the set $\left\{ p+1,\cdots,n\right\} $
with minimum Hamming distance $4t+1$. 
%We pdecoding algorithm
%for $C$ that is capable of correcting $t$ translocation errors.

Let $\sigma \in C$ be the stored code word and $\pi \in\Sn$
be the retrieved word. Assume that $\tldist(\sigma,\pi)\le t$. The first
step of the decoding algorithm is to extract $\hat{\pi}$ from the permutation $\pi$. 
By Lemma \ref{lem:decoding-lemma}, we have
$\hdist\left(\hat{\sigma},\hat{\pi}\right)\le2t$. Since $C_{1}$ has minimum
Hamming distance $4t+1$, $\hat{\sigma}$ can be uniquely recovered from 
$\hat{\pi}$. 

Hence, for odd $d$, if $C_1$ has minimum Hamming distance $2d-1$, then $C$ has minimum Ulam distance at least $d$ and can be decoded using the described decoding scheme. The aforementioned decoding method may also be used on a recursive construction of depth larger than one by first decoding the inner-most components.

Note that decoding is accomplished through Hamming distance decoding of permutation codes, for which a number of interesting algorithms are known in literature~\cite{4401563,6033769,5219390}.

Similar to \eqref{eqRateNonRecursive}, the asymptotic rate of the code $C$ can be found to be
$\frac12-2\delta$, where $\delta=\lim\frac{d}{n}=\lim\frac{2t+1}{n}$.
For the recursive construction described in \eqref{eq:code-construction}, the asymptotic rate of the efficiently 
decodable codes outlined above equals $1-2^{-\left\lfloor \log\frac{1}{\alpha\delta}\right\rfloor }-\alpha\delta \left\lfloor \log\frac{1}{\alpha\delta}\right\rfloor $, 
with $\alpha=2$.

\textit{Remark:} Permutation codes in $\Sn$, correcting adjacent transposition errors, were thoroughly studied in~\cite{5485013}. 
We note that these codes can also be used to correct translocation errors. Indeed, every translocation can be 
viewed as a sequence of at most $n - 1$ adjacent transpositions. Therefore, any code in $\Sn$ that corrects 
$f(n)$ adjacent transpositions (for some function $f(n)$) can also correct $O(f(n)/n)$ translocations. 

It was shown in Theorem 3.1 of~\cite{5485013} that the upper bound on the rate of the code 
correcting $O(n^2)$ adjacent transpositions is zero. Such a code can also be used to correct $O(n)$ translocation errors. 
In comparison, the interleaved constructions described above can also correct $O(n)$ translocation errors, 
yet their rate is strictly larger than zero.  

The non-asymptotic and asymptotic rates of the discussed code families are compared in Figures~\ref{fig:rate_comparison1} and~\ref{fig:rate_comparison2}.

\begin{figure}\centering
\includegraphics[width=\columnwidth]{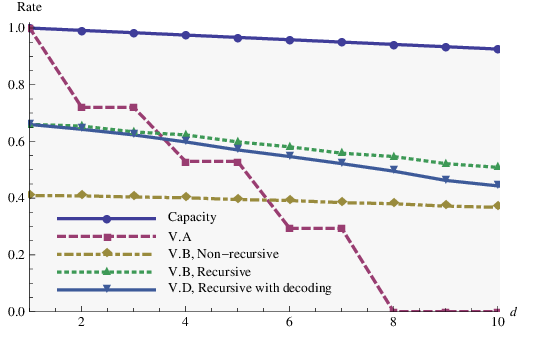}
\caption{Rate vs. distance for several code constructions with $n = 150$. The numbers in the legend refer to the section where the corresponding code is described. It is assumed that $A_H(n,d)=\frac{n!}{(d-1)!}$.}\label{fig:rate_comparison1}
\end{figure}

\begin{figure}\centering
\includegraphics[width=\columnwidth]{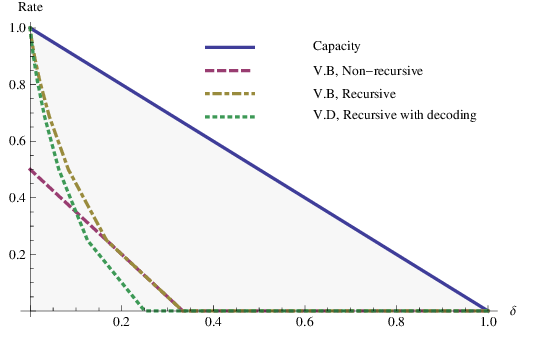}
\caption{Asymptotic rate vs. distance for several code constructions. The horizontal axis is $\delta = \lim\frac{d(n)}{n}$.}\label{fig:rate_comparison2}
\end{figure}

Note that the gap from capacity of the constructions presented in the paper is still fairly large, despite the fact that the codes are asymptotically good. This result may be attributed to the fact that the interleaving construction restricts the locations of subsets of elements in a severe manner. Alternative interleaving methods will be discussed in a companion paper. 

In what follows, we describe a method of Beame and Blais~\cite{beame2009longest} that provides translocation codes with minimum distance proportional to $n-o(n)$. This covers the zero-capacity domain of our analysis.

\subsection{Zero-rate Codes}

We present two constructions based on the longest common subsequence analysis. The first construction is based on Hadamard matrices and was given in~\cite{beame2009longest}, while the second construction is probabilistic.
%$d_{k}\left(n\right)\ge n-\left\lceil n^{1/\left(k-1\right)}\right\rceil ^{k/2-1}$

Assume that a Hadamard matrix of order $k$ exists. To explain the construction, we consider permutations over the set $\left\{ 0,1,\cdots,n-1\right\} $. Furthermore, the positions in each permutation are also numbered from $0$ to $n-1$.

\begin{figure} \label{fig:hadamard}
\begin{center}
\includegraphics[width=\columnwidth]{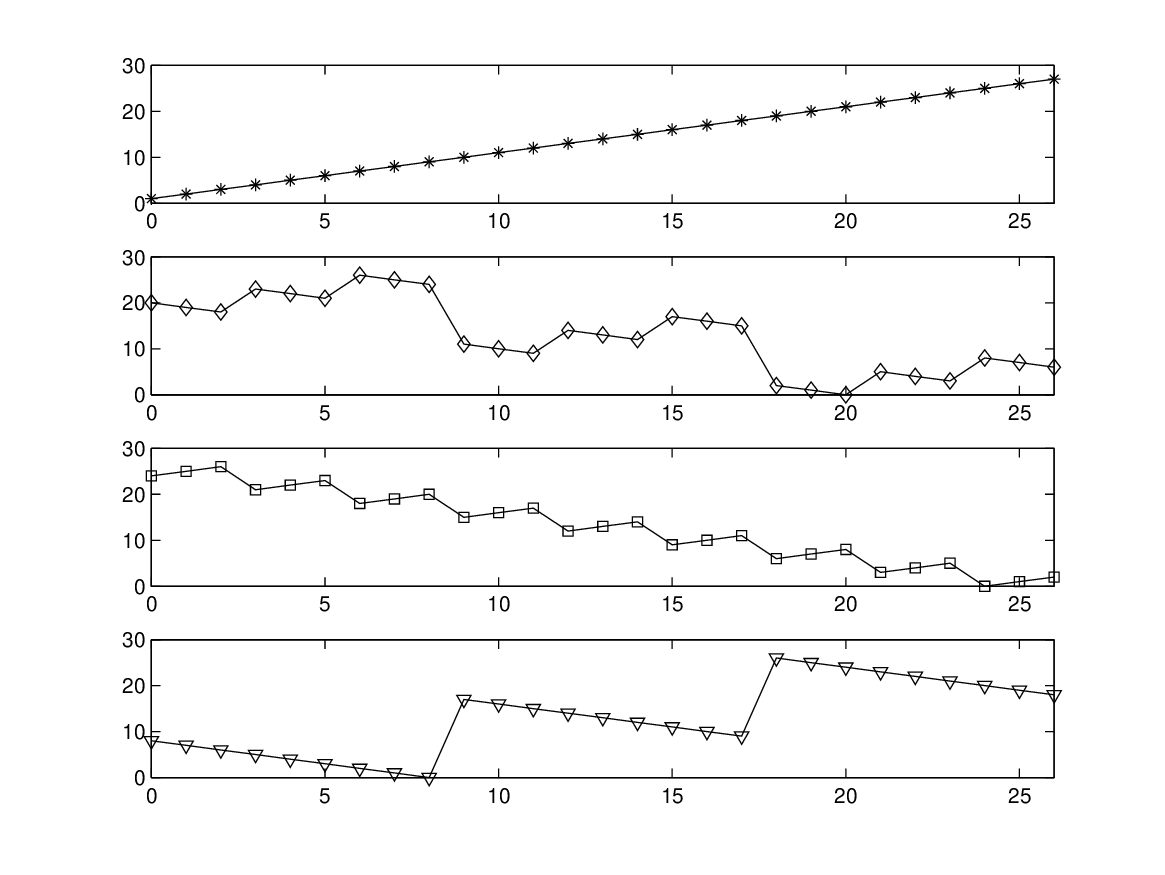}
\caption{Permutation codewords based on Hadamard matrices~\cite{beame2009longest}.}
\end{center}
\end{figure}

Let $s=\left\lceil n^{1/\left(k-1\right)}\right\rceil $. For $a\in\left\{ 0,1,\cdots,n-1\right\} $, we denote the representation of $a$ in base $s$ by $\overline{a_{1}a_{2}\cdots a_{k-1}},$ where $ $$a_{1}$ is the most significant digit. 

Let $H$ be a Hadamard matrix of order $k$ with rows and columns indexed by elements in the set $\{ 0,1,\cdots,k-1\}.$ Without loss of generality, assume the first row and column of $H$ are all-ones vectors. The set $\left\{ \pi_{i}\right\} _{i=1}^{k}$ of permutations is constructed by defining the $m$th element of $\pi_{i}$, for $m=0,1,\cdots,s^{k-1}-1$, as follows. Let $m=\overline{m_{1}\cdots m_{k-1}}$, and let the $m$th element of $\pi_{i}$ equal 
\[
\pi_{i}\left(m\right)=\overline{a_{1}a_{2}\cdots a_{k-1}},
\]
 where, for $j\in\{{0,1,\ldots,k-1\}}$, 
\[
a_{j}=\begin{cases}
m_{j}, & \qquad\mbox{if }H_{ij}=1,\\
s-1-m_{j}, & \qquad\mbox{if }H_{ij}=-1.
\end{cases}
\]
The length of the longest common subsequence of any two permutations of $\left\{ \pi_{i}\right\} $ is at most $s^{k/2-1}$. The permutations obtained in this way have length $s^{k-1}$. Consequently, the minimum distance of the code is at least $s^{k-1}-s^{k/2-1}$. Note that if $s^{k-1}>n$, we can arbitrarily delete elements from each permutation to obtain a set of permutations each of length $n$.

As an example, consider $n=27$ and $k=4$. We have 
\[
H=\left(\begin{array}{cccc}
1 & 1 & 1 & 1\\
1 & -1 & 1 & -1\\
1 & 1 & -1 & -1\\
1 & -1 & -1 & 1
\end{array}\right)
\]
 and $s=3$. Four codewords of the code based on this Hadamard matrix are plotted in Figure 6.
 
Another construction based on~\cite{beame2009longest} holds for $3\le k\le\sqrt{n}$, leading to $k$ permutations with minimum Ulam distance at least $n-32\left(kn\right)^{1/3}$. The number of codewords obtained from this construction is exponentially smaller than what may be obtained via random methods, as we demonstrate next. 

Let $U_{n}$ denote the Ulam distance between a randomly chosen permutation of length $n$ and the identity, $e=\left(1,2,\cdots,n\right)$. From a result shown by Kim~\cite{Kim1996} (see also \cite{BDJ, aldous1999longest, odlyzko2000longest}), for $0<\theta\le n^{1/3}/20$, one has
%\begin{equation}
\begin{align*}
&P\left(U_{n}\le n-2\sqrt{n}-\theta n^{1/6}\right) \\
&\le\exp\left(-\theta^{3/2}\left(\frac{4}{3}-\frac{\theta}{27n^{1/3}}-\frac{5\log n}{\theta^{1/2}n^{1/3}}\right)\right).\label{eq:Un}
\end{align*}
%\end{equation}

By letting $\theta=an^{1/3}$ with $a\le1/20$, for sufficiently large
$n$, we find 
\[
P\left(U_{n}\le n-\left(2+a\right)\sqrt{n}\right)\le\exp\left(-a^{3/2}\sqrt{n}\right).
\]
Suppose a code $C$ is constructed by randomly choosing $M=e^{\alpha_{n}}$
permutations in $\mathbb{S}_{n}$, with replacement. 
By left-invariance, the bound above also holds for the Ulam 
distance between two given
codewords of $C$. Using the union bound and the fact that there are
less than $M^{2}$ pairs of codewords, the probability that there
exist two permutations with distance $\le n-(2+a)\sqrt{n}$ is bounded
from above by
\begin{align*}
M^{2}P\left(U_{n}\le n-\left(2+a\right)\sqrt{n}\right) & \le\exp\left(-a^{3/2}\sqrt{n}+2\alpha_{n}\right).
\end{align*}
To ensure that the minimum distance of the code is at least $n-\left(2+a\right)\sqrt{n}$
with high probability, we must choose $\alpha_{n}$ such that $a^{3/2}\sqrt{n}>2\alpha_{n}$.
Hence, we let $\alpha_{n}=\frac{1}{2}\sqrt{a^{3}n}-\epsilon,$ for
some $\epsilon>0$. For this choice, with high probability, the random
code $C$ of size $\Theta\left(e^{\sqrt{a^{3}n}/2}\right)$ has minimum
distance at least $n-(2+a)\sqrt{n}$. In particular, for $a=1/20$, a random code of size $\Theta\left(e^{\sqrt{n/5}/80}\right)$ with high probability has minimum distance at least $n-2.05\sqrt{n}$. 

As already pointed out, the size of a randomly constructed code obtained this way is exponential in
$\sqrt{n}$, while the size of the code from the explicit construction in~\cite{beame2009longest},
\[
\left(\frac{2+a}{32}\right)^{3}\sqrt{n},
\]
is only linear in $\sqrt{n}$.

\section{Conclusion} \label{sec:conclusion}
We introduced the notion of translocation errors in rank modulation systems. Translocation errors may be viewed as generalization of adjacent swap errors frequently encountered in flash memories. We demonstrated that 
the metric used to capture the effects of translocation errors is the Ulam distance between two permutations, a linear function of the longest common subsequence of the permutations. We also 
derived asymptotically tight upper and lower bounds on the code capacity. Furthermore, we presented a number of constructions for translocation
error-correcting codes based on interleaving permutation codes in the Hamming metric and 
deletion-correcting codes in the Levenshtein metric. Finally, we exhibited a low-complexity decoding
method for a class of relaxed interleaved codes of non-zero asymptotic rate. 

%\textbf{Acknowledgment:} The authors would like to thank the editor Navin Kashyap for efficiently handling the paper and the anonymous reviewers for their valuable comments and suggestions; in particular, we are indebted to one of the reviewers for suggesting a random coding approach. The authors also gratefully acknowledge a number of discussions with Jalal Etesami.
\bibliographystyle{IEEEtran}
\bibliography{bib}

\end{document}